\newcommand{\rid}{\overline{\ri}}
\newcommand{\distance}[2]{\|#1-#2\|}
\newcommand{\ri}{\mathcal{R}}
\newcommand{\eps}{\varepsilon}
\newcommand{\dd}{\Delta}
\newcommand{\Cech}{\v{C}ech\xspace}
\newcommand{\cech}{\mathcal{C}}
\newcommand{\R}{\mathbb{R}}
\newcommand{\Z}{\mathbb{Z}}
\newcommand{\dotp}[2]{\vec{#1}\cdot\vec{#2}}
\newcommand{\Del}{\mathcal{D}}
\newcommand{\ignore}[1]{}
\newtheorem{lemma}{Lemma}
\newtheorem{theorem}[lemma]{Theorem}
\newtheorem{proposition}[lemma]{Proposition}
\newtheorem{corollary}[lemma]{Corollary}
\begin{document}

\title{Polynomial-Sized Topological Approximations\\Using The Permutahedron}
\author{
Aruni Choudhary\footnote{Max Planck Institute for Informatics, 
Saarbr\"ucken, Germany \texttt{(aruni.choudhary@mpi-inf.mpg.de)}}
\and 
Michael Kerber\footnote{Graz University of Technology, 
Graz, Austria\texttt{(kerber@tugraz.at)}} 
\and
Sharath Raghvendra\footnote{Virginia Tech, 
Blacksburg, USA \texttt{(sharathr@vt.edu)}}}
\date{}
\maketitle

\begin{abstract}
Classical methods to model topological properties of point clouds,
such as the Vietoris-Rips complex, suffer from the combinatorial explosion
of complex sizes.
We propose a novel technique to approximate a multi-scale filtration
of the Rips complex with improved bounds for size: precisely,
for $n$ points in $\mathbb{R}^d$, we obtain a $O(d)$-approximation with at most 
$n2^{O(d \log k)}$ simplices of dimension $k$ or lower. 
In conjunction with dimension reduction techniques,
our approach yields a $O(\mathrm{polylog} (n))$-approximation 
of size $n^{O(1)}$ for Rips filtrations on arbitrary metric spaces. 
This result stems from high-dimensional lattice
geometry and exploits properties of the permutahedral lattice, a well-studied
structure in discrete geometry. 

Building on the same geometric concept,
we also present a lower bound result
on the size of an approximate filtration:
we construct a point set 
for which every $(1+\eps)$-approximation of the \Cech filtration
has to contain $n^{\Omega(\log\log n)}$ features,
provided that $\eps <\frac{1}{\log^{1+c} n}$ for $c\in(0,1)$.
\end{abstract}

\section{Introduction}
\label{section:intro}

\subparagraph*{Motivation and previous work}
Topological data analysis aims at finding and reasoning about the 
underlying topological features of metric spaces. 
The idea is to represent a data set by a set of discrete structures on a range of scales and 
to track the evolution of homological features
as the scale varies. The theory of \emph{persistent} homology allows for a topological summary,
called the \emph{persistence diagram} which summarizes the lifetimes of topological
features in the data as the scale under consideration varies monotonously.
A major step in the computation of this topological signature is the question of how to compute
a \emph{filtration}, that is, a multi-scale representation of a given data set.

For data in the form of finite point clouds,
two frequently used constructions are the \emph{(Vietoris-)Rips} complex $\ri_\alpha$ 
and the \emph{\Cech} complex $\cech_\alpha$
which are defined with respect to a scale parameter $\alpha\geq 0$.
Both are simplicial complexes capturing 
the proximity of points at scale $\alpha$, with different levels of accuracy.
Increasing $\alpha$ from $0$ to $\infty$ yields a nested sequence of simplicial complexes called a \emph{filtration}. 

Unfortunately, Rips and \v{C}ech filtrations 
can be uncomfortably large to handle.
For homological features in low dimensions, it suffices to consider
the $k$-skeleton of the complex, that is, 
all simplices of dimension at most $k$.
Still, the $k$-skeleton of Rips and \Cech complexes 
can be as large as $n^{k+1}$ for $n$ points,
which is already impractical for small $k$ when $n$ is large.
One remedy is to construct an \emph{approximate filtration}, 
that is, a filtration that yields a similar topological signature as the original filtration
but is significantly smaller in size. The notion of ``similarity'' in this context can be made formal through
a distance measure on persistence diagrams. The most frequently used similarity measure 
is the \emph{bottleneck distance},
which finds correspondences between topological features of two filtrations, such that the
lifetimes of each pair of matched features are as close as possible.
A related notion is the \emph{log-scale bottleneck distance} which allows a larger discrepancy
for larger scales and thus can be seen as a relative approximation,
with usual bottleneck distance as its absolute counterpart. We call an approximate filtration
a \emph{$c$-approximation} of the original, if their persistence diagrams have 
log-scale bottleneck distance at most $c$.

Sheehy~\cite{sheehy-rips} gave the first such approximate filtration for Rips complexes with a formal guarantee.
For $0<\eps\le 1/3$, he constructs a $(1+\eps)$-approximate filtration of the Rips filtration.
The size of its $k$-skeleton is only $n(\frac{1}{\eps})^{O(\dd k)}$, where $\dd$ is the doubling dimension of the metric.
Since then, several alternative technique have been explored for Rips~\cite{desu-gic} and \Cech complexes~\cite{bs-approximating,cks-approximate,kbs-cech}, all arriving at the same complexity bound.

While the above approaches work well for instances 
where $\dd$ and $k$ are small, 
we focus on high-dimensional point sets. 
This has two reasons: first, one might simply want to analyze data sets
for which the intrinsic dimension is high, but the existing methods do
not succeed in reducing the complex size sufficiently. Second, even
for medium-size dimensions, one might not want to restrict its scope
to the low-homology features, so that $k=\dd$ is not an unreasonable
parameter choice.
To adapt the aforementioned schemes to play nice with high dimensional point clouds,
it makes sense to use dimension reduction results to
eliminate the dependence on $\dd$. Indeed, it has been shown, in analogy to the famous
Johnson-Lindenstrauss Lemma~\cite{jl-lemma}, that an orthogonal projection to a
$O(\log n/\eps^2)$-dimensional subspace yields another $(1+\eps)$ approximate filtration
~\cite{kr-approximation,sheehy-persistent}.
Combining these two approximation schemes, however, yields an approximation of size
$O(n^{k+1})$ (ignoring $\eps$-factors) and does not improve upon the exact case.

\subparagraph{Our contributions}
We present two results about the approximation of Rips and \Cech filtrations:
we give a scheme for approximating the Rips filtration with smaller complex
size than existing approaches, at the price of guaranteeing only an 
approximation quality of $\mathrm{polylog}(n)$.
Since Rips and \Cech filtrations 
approximate each other by a constant factor,
our result also extends to the \Cech filtration, with an additional constant factor
in the approximation quality.
Second, we prove that any approximation scheme for the \Cech filtration has
superpolynomial size in $n$ if high
accuracy is required. For this result, our proof technique does not extend
to Rips complexes.
In more detail, our results are as follows:

\textbf{Upper bound}:
We present a $6(d+1)$-approximation of the Rips filtration 
for $n$ points in $\R^d$
whose $k$-skeleton has a size of $n2^{O(d\log k)}$ on each scale.
This shows that by using a more rough approximation,
we can achieve asymptotic improvements on the complex size.
The real power of our approach reveals itself in high dimensions,
in combination with dimension reduction techniques.
In conjunction with the lemma of Johnson and Lindenstrauss~\cite{jl-lemma},
we obtain an $O(\log n)$-approximation 
with size $n^{O(\log k)}$ at any scale,
which is much smaller than the original filtration; however, for the complete case $k=\log n$, the bound is still super-polynomial in $n$.
Combined with a different dimension reduction result of Matou\v{s}ek~\cite{mt-embedding}, we obtain a 
$O(\log^{3/2} n )$-approximation of size $n^{O(1)}$.
This is the first polynomial bound in $n$ of an approximate filtration,
independent of the dimensionality of the point set.
For inputs from arbitrary metric spaces (instead of points in $\R^d$), the same results hold with an additional $O(\log n)$ factor in the
approximation quality.

Our approximations are discrete, and the number of scales that have to be
considered is determined by the logarithm of the spread of the point set
(the ratio of diameter and closest point distance). In this work, we tacitly
assume the spread to be constant, and concentrate on the complex size
on a fixed scale as our quality measurement.

\textbf{Lower bound}:
We construct a point set of $n$ points in $d=\Theta(\log n)$ dimensions
whose \Cech filtration has $n^{\Omega(\log\log n)}$
persistent features with ``relatively long'' lifetime.
Precisely, that means that any $(1+\delta)$-approximation 
has to contain a bar of non-zero length for each of those features
if $\delta<O(\frac{1}{\log^{1+c}n})$ with $c\in (0,1)$.
This shows that it is impossible to define an approximation scheme
that yields an accurate approximation of the \Cech complexes as well
as polynomial size in $n$.

\textbf{Methods}:
Our results follow from a link to lattice geometry: the $A^\ast$-lattice is a configuration of points in $\R^d$
which realizes the thinnest known coverings for low dimensions~\cite{sloway-book}.
The dual Voronoi polytope of
a lattice point is the \emph{permutahedron}, whose vertices are obtained by all coordinate permutations of a fixed
point in $\R^d$. 

Our technique resembles the perhaps simplest approximation scheme
for point sets: if we digitize $\R^d$ with $d$-dimensional pixels,
we can take the union of pixels that contain input points as our
approximation. Our approach does the same, except that 
we use a tessellation of permutahedra for digitization.
In $\R^2$, our approach corresponds to the common approach of
replacing the square tiling by a hexagonal tiling. 
We exploit that the permutahedral tessellation is in generic position,
that is, no more than $d+1$ polytopes have a common intersection.
At the same time, permutahedra are still relatively round, that is, they 
have small diameter and non-adjacent polytopes are well-separated.
These properties ensure good approximation quality and a small complex.
In comparison, a cubical tessellation yields a $O(\sqrt{d})$-approximate
Rips filtration which looks like an improvement over our $O(d)$-approximation,
but the highly degenerate configuration of the cubes yields
a complex size of $n2^{O(dk)}$, and therefore does not constitute
an improvement over Sheehy's approach~\cite{sheehy-rips}.

For the lower bound, we arrange $n$ points in a way that one center
point has the permutahedron as Voronoi polytope, and we consider simplices
incident to that center point in a fixed dimension. We show
a superpolynomial number of these simplices 
create or destroy topological features 
of non-negligible persistence. 

\subparagraph*{Outline of the paper}
We begin by reviewing basics of persistent homology in Section~\ref{section:backgnd}. Next, we study several relevant properties
of the $A^*$ lattice in Section~\ref{section:permuto}. An approximation algorithm based on concepts from
Section~\ref{section:permuto} is presented in Section~\ref{section:appsch}. In Section~\ref{section:lowerbnd}, we present
the lower bound result on the size of \Cech filtrations. We conclude in Section~\ref{section:concl}.

\section{Topological background}
\label{section:backgnd}

We review some topological concepts needed in our argument.
More extensive treatments covering most of the material
can be found in the textbooks~\cite{brunrer-book,hatcher,munkres}.

\subparagraph{Simplicial complexes}
For an arbitrary set $V$, called \emph{vertices}, a \emph{simplicial complex}
over $V$ is a collection of non-empty subsets which is closed 
under taking non-empty subsets. The elements of a simplicial complex $K$
are called \emph{simplices} of $K$. A simplex $\sigma$ is a \emph{face}
of $\tau$ if $\sigma\subseteq \tau$. A \emph{facet} is a face of co-dimension $1$.
The dimension of $\sigma$ is $k:=|\sigma|-1$;
we also call $\sigma$ a $k$-simplex in this case.
The \emph{$k$-skeleton} of $K$
is the collection of all simplices of dimension at most $k$.
For instance, the $1$-skeleton of $K$ is a graph 
defined by its $0$- and $1$-simplices.

We discuss two ways of generating simplicial complexes. In the first one,
take a collection $\mathcal{S}$ 
of sets over a common universe (for instance, polytopes
in $\R^d$), and define the \emph{nerve} of $\mathcal{S}$ as the
simplicial complex whose vertex set is $\mathcal{S}$, and a $k$-simplex
$\sigma$ is in the nerve if the corresponding $(k+1)$ sets have a non-empty
common intersection. The \emph{nerve theorem}~\cite{borsuk-nerve}
states that if all sets in $\mathcal{S}$ are convex subsets of $\R^d$,
their nerve is homotopically equivalent to the union of the sets
(the statement can be generalized significantly; see~\cite[Sec. 4.G]{hatcher}).
The second construction that we consider are \emph{flag complexes}:
Given a graph $G=(V,E)$, we define a simplicial complex $K_G$ over the vertex
set $V$ such that a $k$-simplex $\sigma$ is in $K$ if for every distinct pair of
vertices $v_1, v_2\in \sigma$, the edge $(v_1,v_2)$ is in $E$. 
In other words, $K_G$ is the maximal simplicial complex with $G$ 
as its $1$-skeleton. In general, a complex $K$ is called a flag complex, if $K=K_G$
with $G$ being the $1$-skeleton of $K$.

Given a set of points $P$ in $\R^d$ and a parameter $r$, the 
\emph{\Cech complex at scale $r$}, $\cech_r$
is defined as the nerve of the balls centered 
at the elements of $P$, each of radius $r$.
This is a collection of convex sets. Therefore, the nerve theorem is applicable and
it asserts that the nerve agrees homotopically with the union of balls.
In the same setup, we can as well consider the intersection graph $G$ of
the balls (that is, we have an edge between two points
if their distance is at most $2r$). The flag complex of $G$ is called
the \emph{(Vietoris-)Rips complex at scale $r$}, denoted by $\ri_r$.
The relation
$\cech_r\subseteq\ri_r\subseteq\cech_{\sqrt{2}r}$
follows from Jung's Theorem~\cite{jung}.

\subparagraph*{Persistence Modules and simplicial filtrations}
A \emph{persistence module} $(V_\alpha)_{\alpha\in G}$ 
for a totally ordered index set $G\subseteq\R$ is a sequence of vector spaces
with linear maps $F_{\alpha,\alpha'}:V_\alpha\rightarrow V_{\alpha'}$
for any $\alpha\leq\alpha'$, satisfying
$F_{\alpha,\alpha}=id$
and $F_{\alpha',\alpha''}\circ F_{\alpha,\alpha'}=F_{\alpha,\alpha''}$.
Persistence modules can be decomposed into \emph{indecomposable intervals}
giving rise to a \emph{persistent barcode} which is a complete discrete
invariant of the corresponding module.

A distance measure between persistence modules is defined through 
interleavings: we call two modules $(V_\alpha)$ and $(W_\alpha)$
with linear maps $F_{\cdot,\cdot}$ and $G_{\cdot,\cdot}$
\emph{additively $\eps$-interleaved}, if there exist linear maps
$\phi:V_\alpha\rightarrow W_{\alpha+\eps}$ and $\psi:W_\alpha\rightarrow
V_{\alpha+\eps}$ such that the maps $\phi$ and $\psi$ commute with $F$
and $G$~(see~\cite{ch-proximity}).
We call the modules \emph{multiplicatively $c$-interleaved}
with $c\geq 1$, if there exist linear maps
$\phi:V_\alpha\rightarrow W_{c\alpha}$ and $\psi:W_\alpha\rightarrow
V_{c\alpha}$ with the same commuting properties.
Equivalently, this means that the modules are additively $(\log c)$-interleaved
when switching to a logarithmic scale.
In this case, we also call the module $(G_\alpha)$
a \emph{$c$-approximation} of $(F_\alpha)$ (and vice versa).
Note that the case $c=1$ implies that the two modules give rise to 
the same persistent barcode, which is usually referred to as 
the \emph{persistence equivalence theorem}~\cite{brunrer-book}.

The most common way to generate persistence modules is through
the homology of sequences of simplicial complexes:
a \emph{(simplicial) filtration} $(K_\alpha)_{\alpha\in G}$ 
over a totally order index set $G\subseteq\R$
is a sequence of simplicial complexes connected by simplicial maps 
$f_{\alpha,\alpha'}:K_{\alpha}\rightarrow K_{\alpha'}$ 
for any $\alpha\leq \alpha'$, such that $f_{\alpha,\alpha}=id$
and $f_{\alpha',\alpha''}\circ f_{\alpha,\alpha'}=f_{\alpha,\alpha''}$.
By the functorial properties of homology (using some fixed field $\mathbb{F}$
and some fixed dimension $p\geq 0$),
such a filtration gives rise to a persistence module 
$(H_p(K_\alpha,\mathbb{F}))_{\alpha\in G}$.
We call a filtration a $c$-approximation of another filtration
if the corresponding persistence modules induced by homology 
are $c$-approximations of each other. 

The standard way of obtaining a filtration
is through a nested sequence of simplicial complexes, where the simplicial
maps are induced by inclusion. Examples are the \emph{\Cech filtration}
$(\cech_\alpha)_{\alpha\in\R}$ and the \emph{Rips filtration}
$(\ri_\alpha)_{\alpha\in\R}$. By the relations of Rips and \Cech complexes
from above, the Rips filtration is a $\sqrt{2}$-approximation of the \Cech
filtration.

\subparagraph*{Simplex-wise \Cech filtrations and (co-)face distances}
In the \Cech filtration $(\cech_\alpha)$, every simplex
has an \emph{alpha value} $\alpha_\sigma:=\min\{\alpha\geq 0\mid \sigma\in \cech_\alpha\}$, which equals the radius of the minimal enclosing ball of
its boundary vertices.
If the point set $P$ is finite, the \Cech filtration consists 
of a finite number of simplices, and we can define 
a \emph{simplex-wise filtration}
\[\emptyset=\cech^0\subsetneq \cech^2\subsetneq\ldots\subsetneq \cech^m, \]
where exactly one simplex is added  from $\cech^i$ to $\cech^{i+1}$,
and where $\sigma$ is added before $\tau$ whenever $\alpha_\sigma<\alpha_\tau$.
The filtration is not unique and ties can be broken arbitrarily.

In a simplex-wise filtration, passing from $\cech^i$ to $\cech^{i+1}$ means adding
the $k$-simplex $\sigma:=\sigma_{i+1}$.
The effect of this addition is that either
a $k$-homology class comes into existence, or a $(k-1)$-homology
class is destroyed. Depending on the case, we call $\sigma$
\emph{positive} or \emph{negative}, accordingly.
In terms of the corresponding persistent barcode,
there is exactly one interval \emph{associated to $\sigma$}
either starting at $i$ (if $\sigma$ is positive)
or ending at $i$ (if $\sigma$ is negative).
We define the \emph{(co-)face distance} $L_\sigma$ ($L^\ast_\sigma$) 
of $\sigma$ as the minimal distance between $\alpha_\sigma$ and its (co-)facets,
\[L_\sigma:=\min_{\tau \text{ facet of }\sigma} \alpha_\sigma-\alpha_\tau\quad\quad L^\ast_\sigma:=\min_{\tau \text{ co-facet of }\sigma} \alpha_\tau-\alpha_\sigma.\]
Note that $L_\sigma$ and $L_\sigma^\ast$ can be zero.
Nevertheless, they constitute lower bounds for the persistence
of the associated barcode intervals. 
An alternative to our proof is 
to argue using structural properties
of the matrix reduction algorithm for persistent homology~\cite{brunrer-book}.
\begin{lemma}
\label{lem:L_bound}
If $\sigma$ is negative, the barcode interval associated to $\sigma$
has persistence at least $L_\sigma$.
\end{lemma}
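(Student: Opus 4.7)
The plan is to identify explicitly which homology class is killed when the negative simplex $\sigma$ is added, and then show that this class is already alive once all facets of $\sigma$ have appeared, which happens at time $\alpha_\sigma - L_\sigma$.

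More precisely, let $k=\dim\sigma$ and write $\beta:=\max_{\tau\text{ facet of }\sigma}\alpha_\tau=\alpha_\sigma-L_\sigma$. The first step is to observe that the boundary $\partial\sigma$ is a $(k-1)$-chain supported on the facets of $\sigma$, so it is well-defined as a chain in $\cech_s$ for every $s\geq\beta$, and it is a cycle since $\partial\partial\sigma=0$. Next, let $\cech^i$ be the complex in the simplex-wise filtration just before $\sigma$ is inserted, so that $\cech^{i+1}=\cech^i\cup\{\sigma\}$. Because $\sigma$ is negative, adding it kills a $(k-1)$-class; the class in question is precisely $[\partial\sigma]\in H_{k-1}(\cech^i)$, and it is nontrivial there (otherwise $\sigma$ would be positive, creating a new $k$-class instead).

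The key step is to propagate nontriviality backwards in time. For any $s\in[\beta,\alpha_\sigma)$, all facets of $\sigma$ lie in $\cech_s$, so $[\partial\sigma]$ is defined in $H_{k-1}(\cech_s)$; moreover $\sigma\notin\cech_s$, and all simplices of $\cech_s$ also lie in $\cech^i$ (they all have alpha value $\le s<\alpha_\sigma$, while $\sigma$ is the first simplex of alpha value $\alpha_\sigma$ in the simplex-wise ordering, up to tie-breaking). Suppose for contradiction that $[\partial\sigma]=0$ in $H_{k-1}(\cech_s)$ for some such $s$; then some $k$-chain $c\subseteq\cech_s\subseteq\cech^i$ satisfies $\partial c=\partial\sigma$, so $[\partial\sigma]=0$ in $H_{k-1}(\cech^i)$ as well, contradicting negativity of $\sigma$. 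Hence $[\partial\sigma]$ is a nontrivial class throughout the interval $[\beta,\alpha_\sigma)$.

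Finally, I invoke the standard pairing of persistent homology: the barcode interval associated to a negative simplex $\sigma$ has death value $\alpha_\sigma$ and birth value equal to the alpha value at which the class killed by $\sigma$ first became nontrivial. Since $[\partial\sigma]$ is already nontrivial in $H_{k-1}(\cech_\beta)$, this birth value is at most $\beta$, and the persistence of the interval is at least $\alpha_\sigma-\beta=L_\sigma$, as claimed. The main subtlety is the last step: one must ensure that the class ``associated to $\sigma$'' in the barcode really is born by time $\beta$ and not earlier in some spurious way; this is handled cleanly by reading off birth/death pairs from the simplex-wise filtration via the matrix reduction algorithm referenced in the excerpt, whose alternative formulation in terms of cycles is exactly what we used above.
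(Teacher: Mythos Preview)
Your proof is correct and follows essentially the same approach as the paper: both identify $\partial\sigma$ as the cycle whose class is killed by $\sigma$, observe that this cycle is already present once the last facet of $\sigma$ has been added (at alpha value $\alpha_\sigma-L_\sigma$), and conclude the persistence bound. Your version is considerably more detailed than the paper's three-sentence argument and explicitly addresses the subtlety of why the barcode interval paired with $\sigma$ must be born by time $\beta$ (via the image of $[\partial\sigma]$ under the inclusion-induced map and the matrix reduction pairing), which the paper leaves implicit.
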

\begin{proof}
$\sigma$ kills a $(k-1)$-homology class by assumption,
and this class is represented by the cycle $\partial\sigma$.
However, this cycle came into existence when the last facet $\tau$ of $\sigma$
was added. Therefore, the lifetime of the cycle destroyed by $\sigma$
is at least $\alpha_\sigma-\alpha_\tau$.
\end{proof}

\begin{lemma}
\label{lem:L_star_bound}
If $\sigma$ is positive, the homology class created by $\sigma$
has persistence at least $L^\ast_\tau$
\end{lemma}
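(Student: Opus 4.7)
The plan is to argue dually to the proof of Lemma~\ref{lem:L_bound}, exchanging the roles of facets and co-facets, and of "birth" and "death". Let $\sigma$ be the positive $k$-simplex added at step $i+1$, so that $\cech^{i+1}=\cech^i\cup\{\sigma\}$ and a new class $[z]\in H_k(\cech^{i+1})$ is created. Since adding $\sigma$ strictly increases the $k$-th Betti number, any representing cycle $z$ must contain $\sigma$ with non-zero coefficient; otherwise $z$ would already be a cycle in $\cech^i$, contradicting newness.

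The key step is to track the coefficient on $\sigma$ across the filtration. I want to show: \emph{as long as no co-facet of $\sigma$ has appeared, every cycle representing the class $[z]$ still contains $\sigma$ with the same non-zero coefficient.} Indeed, two cycle representatives $z,z'$ of the same class in $\cech^j$ differ by a boundary $\partial c$ of some $(k+1)$-chain $c$ supported on simplices present at step $j$. If no co-facet of $\sigma$ is yet in $\cech^j$, then no $(k+1)$-simplex in $c$ contains $\sigma$ as a face, so $\sigma$ has coefficient zero in $\partial c$, hence the same coefficient in $z$ and $z'$. In particular, the class $[z]$ cannot be zero in $H_k(\cech^j)$, because the zero class is represented by $0$, which has coefficient zero on $\sigma$.

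It follows that the class $[z]$ can only be killed at a stage $j$ for which $\cech^j$ contains at least one co-facet of $\sigma$. The first such stage occurs no earlier than $\alpha_\sigma+L^\ast_\sigma$ by definition of $L^\ast_\sigma$. Therefore the barcode interval associated to $\sigma$ is alive at least on $[\alpha_\sigma,\alpha_\sigma+L^\ast_\sigma)$, giving persistence at least $L^\ast_\sigma$.

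The only subtle point, and the place most prone to error, is the second step: one must justify that coefficients modulo boundaries are well-defined and that no lower-dimensional chain manipulation can "create" a co-facet of $\sigma$ out of nothing. This is exactly the reason we restrict attention to $(k+1)$-chains $c$, since boundaries affecting the $k$-chain $\sigma$ must come from the $(k+1)$-skeleton, and the set of $(k+1)$-simplices containing $\sigma$ is precisely the set of its co-facets. Once this observation is in place, the rest of the argument is a direct dualization of Lemma~\ref{lem:L_bound}.
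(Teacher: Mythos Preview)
Your proof is correct and follows essentially the same approach as the paper's own argument: both observe that every representative of the created class has non-zero coefficient on $\sigma$, and that this coefficient cannot be altered by a boundary until some co-facet of $\sigma$ enters the filtration. Your version is in fact more carefully argued than the paper's terse proof, spelling out explicitly why the coefficient on $\sigma$ is invariant under adding boundaries of $(k+1)$-chains that avoid the co-facets of $\sigma$.
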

\begin{proof}
$\sigma$ creates a $k$-homology class; every representative cycle of this
class is non-zero for $\sigma$. To turn such a cycle into a boundary,
we have to add a $(k+1)$-simplex $\tau$ with $\sigma$ in its boundary
(otherwise, any $(k+1)$-chain formed will be zero for $\sigma$). Therefore,
the cycle created at $\sigma$ persists 
for at least $\alpha_\tau-\alpha_\sigma$.
\end{proof}

\section{The \texorpdfstring{$A^\ast$}{}-lattice and the permutahedron}
\label{section:permuto}

A \emph{lattice} $L$ in $\R^d$ is the set of all integer-valued linear combination of $d$ independent vectors,
called the \emph{basis} of the lattice. Note that the origin belongs to every lattice.
The \emph{Voronoi polytope} of a lattice $L$ is the closed set of all points in $\R^d$ for which 
the origin is among the closest lattice points. Since lattices are invariant under translations, the Voronoi polytopes
for other lattice points are just translations of the one at the origin, and these polytopes tile $\R^d$.
An elementary example is the integer lattice, spanned by the unit vectors $(e_1,\ldots,e_d)$,
whose Voronoi polytope is the unit $d$-cube, shifted by $(-1/2)$ in each coordinate direction.

We are interested in a different lattice, called the $A_d^\ast$-lattice, whose properties are also well-studied~\cite{sloway-book}.
First, we define the $A_d$ lattice as the set of points $(x_1,\cdots,x_{d+1})\in \Z^{d+1}$ satisfying $\sum_{i=1}^{d+1} x_i=0$.
$A_d$ is spanned by vectors of the form $(e_i,-1)$, $i=1,\ldots,d$.
While it is defined in $\R^{d+1}$, all points lie on the hyperplane $H$
defined by $\sum_{i=1}^{d+1} y_i = 0$.
After a suitable change of basis, we can express $A_d$ by $d$ vectors in $\R^d$; thus, it is indeed a lattice.
In low dimensions, $A_2$ is the hexagonal lattice, and $A_3$ is the FCC lattice that realizes the best sphere packing
configuration in $\R^3$~\cite{hales-kepler}.

The \emph{dual lattice} $L^\ast$ of a lattice $L$ is defined 
as the set of points $(y_1,\ldots,y_{d})$ in $\R^{d}$ such that
$y\cdot x\in\Z$ for all $x\in L$~\cite{sloway-book}. Both the integer
lattice and the hexagonal lattice are self-dual, 
while the dual of $A_3$ is the BCC lattice 
that realizes the thinnest sphere covering configuration among lattices in $\R^3$~\cite{bambah}.

We are mostly interested in the Voronoi polytope $\Pi_d$ generated by $A^\ast_d$.
Again, the definition becomes easier when embedding $\R^d$ one dimension higher as the hyperplane $H$.
In that representation, it is known~\cite{sloway-book} that 
$\Pi_d$ has $(d+1)!$ vertices obtained
by all permutations of the coordinates of
\[\frac{1}{2(d+1)}(d,d-2,d-4,\cdots,-d+2,-d).\]
$\Pi_d$ is known 
as the \emph{permutahedron}~\cite[Lect. 0]{zg-polytopes}.%
\footnote{Often, a scaled, translated  and rotated version is considered,
in which all permutations of the point $(1,\ldots,d+1)$ are taken.}
Our approximation results in Section~\ref{section:appsch} and~\ref{section:lowerbnd}
are based on various combinatorial and geometric properties of $\Pi_d$, which we describe next.
We will fix $d$ and write $A^\ast:=A^\ast_d$ and $\Pi:=\Pi_d$ for brevity.

\subparagraph*{Combinatorics}
The $k$-faces of $\Pi$ correspond to ordered partitions of the coordinate indices $[d+1]:=\{1,\cdots,d+1\}$
into $(d+1-k)$ non-empty subsets $S_1,\cdots,S_{d+1-k}$ such that all coordinates in $S_i$ are smaller
than all coordinates in $S_j$ for $i<j$~\cite{zg-polytopes}.
For example, with $d=3$, the partition $(\{1,3\},\{2,4\})$ is the $2$-face spanned by all points
for which the two smallest coordinates appear at the first and the third position.
This is an example of a facet of $\Pi$, for which we need to partition the indices in exactly $2$ subsets;
equivalently, the facets of $\Pi$ are in one-to-one correspondence to non-empty proper subsets of $[d+1]$
so $\Pi$ has $2^{d+1}-2$ facets.
The vertices of $\Pi$ are the $(d+1)$-fold ordered partitions which correspond to permutations of $[d+1]$,
reassuring the fact that $\Pi$ has $(d+1)!$ vertices.
Moreover, two faces $\sigma$, $\tau$ of $\Pi$ with $\dim\sigma < \dim\tau$ are incident if the partition of $\sigma$
is a refinement of the partition of $\tau$. Continuing our example from before, the four $1$-faces bounding
the $2$-face $(\{1,3\},\{2,4\})$ are $(\{1\},\{3\},\{2,4\})$,$(\{3\},\{1\},\{2,4\})$, $(\{1,3\},\{2\},\{4\})$,
and  $(\{1,3\},\{4\},\{2\})$. 
Vice versa, we obtain co-faces of a face by combining consecutive partitions
into one larger partition. 
For instance, the two co-facets of $(\{1,3\},\{4\},\{2\})$
are $(\{1,3\},\{2,4\})$ and $(\{1,3,4\},\{2\})$.
\begin{lemma}
\label{lem:pi_adjacency}
Let $\sigma$ and $\tau$ be two facets of $\Pi$, defined by the partitions $(S_\sigma,[d+1]\setminus S_\sigma)$
and $(S_\tau,[d+1]\setminus S_\tau)$, respectively. 
Then $\sigma$ and $\tau$ are adjacent in $\Pi$ iff $S_\sigma\subseteq S_\tau$
or $S_\tau\subseteq S_\sigma$.
\end{lemma}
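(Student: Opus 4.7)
The plan is to unfold the definition of adjacency for facets in a polytope and to translate it into the ordered-partition language introduced just above the lemma. Two distinct facets of $\Pi$ are adjacent precisely when they share a common ridge, that is, a face of codimension one inside a facet, which is a $(d-2)$-face of $\Pi$. By the combinatorial description of the faces of $\Pi$, a $(d-2)$-face corresponds to an ordered $3$-fold partition $(A_1,A_2,A_3)$ of $[d+1]$ with all parts non-empty, and by the incidence rule (finer partition refines coarser partition by combining consecutive blocks) such a $3$-partition is contained in the facet given by $(S,[d+1]\setminus S)$ if and only if either $S=A_1$ or $S=A_1\cup A_2$.

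With this in hand, I would first prove the ``only if'' direction. Suppose $\sigma$ and $\tau$ share a ridge encoded by some $(A_1,A_2,A_3)$. Then $S_\sigma\in\{A_1,A_1\cup A_2\}$ and $S_\tau\in\{A_1,A_1\cup A_2\}$; since $\sigma\neq\tau$ implies $S_\sigma\neq S_\tau$, one of them must equal $A_1$ and the other $A_1\cup A_2$. In either assignment this forces $S_\sigma\subseteq S_\tau$ or $S_\tau\subseteq S_\sigma$, as required.

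For the ``if'' direction, assume without loss of generality that $S_\sigma\subsetneq S_\tau$ (proper, since they are distinct non-empty proper subsets of $[d+1]$). I would exhibit an explicit witnessing ridge by setting
\[A_1:=S_\sigma,\qquad A_2:=S_\tau\setminus S_\sigma,\qquad A_3:=[d+1]\setminus S_\tau.\]
All three are non-empty (using $\emptyset\neq S_\sigma\subsetneq S_\tau\subsetneq[d+1]$), so $(A_1,A_2,A_3)$ is a valid $3$-fold ordered partition defining a $(d-2)$-face of $\Pi$. This face refines $(S_\sigma,[d+1]\setminus S_\sigma)$ by merging $A_2$ with $A_3$, and it refines $(S_\tau,[d+1]\setminus S_\tau)$ by merging $A_1$ with $A_2$; hence the face lies in both $\sigma$ and $\tau$, showing adjacency.

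I do not anticipate any deep obstacle here; the proof is essentially a translation between the combinatorial language of ordered partitions and the geometric notion of shared ridges. The one point requiring a bit of care is confirming that the $3$-fold partition constructed in the ``if'' direction has all parts non-empty, which is why the strict inclusions $\emptyset\subsetneq S_\sigma\subsetneq S_\tau\subsetneq[d+1]$ must be invoked explicitly; this in turn is guaranteed by the fact that $S_\sigma$ and $S_\tau$ each define a facet (so both are proper non-empty subsets of $[d+1]$) and that $\sigma\neq\tau$ combined with the assumed inclusion makes it strict.
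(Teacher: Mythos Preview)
Your proof is correct and follows the same approach as the paper's: both reduce adjacency of facets to the existence of a common refinement of the two $2$-fold ordered partitions, and conclude that such a refinement exists exactly when one of $S_\sigma$, $S_\tau$ contains the other. Your version is simply more explicit---you pin down that the common face can be taken to be a ridge and you write out the witnessing $3$-partition $(S_\sigma,\,S_\tau\setminus S_\sigma,\,[d+1]\setminus S_\tau)$---whereas the paper leaves these details implicit.
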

\begin{proof}
Two facets are adjacent if they share a common face. By the properties of the permutahedron,
this means that the two facets are adjacent if and only if their partitions permit a common refinement,
which is only possible if one set is contained in the other.
\end{proof}
We have already established that $\Pi$ has ``few'' ($2^{d+1}-2=O(2^d)$)
$(d-1)$-faces and ``many'' ($(d+1)!=O(2^{d\log d})$) $0$-faces. 
We give an interpolating bound for all intermediate dimensions.

\begin{lemma}
\label{lem:perm_no_of_faces}
The number of $(d-k)$-faces of $\Pi$ is bounded by $2^{2 (d+1)\log_2 (k+1)}$.
\end{lemma}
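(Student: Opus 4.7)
My approach is purely combinatorial, building directly on the bijection recalled just before the lemma statement: a $j$-face of $\Pi$ corresponds to an ordered partition of $[d+1]$ into exactly $d+1-j$ non-empty subsets. Setting $j=d-k$, the $(d-k)$-faces of $\Pi$ are in one-to-one correspondence with ordered partitions of $[d+1]$ into exactly $k+1$ non-empty parts. So the first step is to replace the geometric counting problem with the counting of such ordered partitions.

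Next I would upper-bound the number of such ordered partitions by a crude assignment argument. An ordered partition of $[d+1]$ into $k+1$ non-empty blocks is determined by the function $f\colon[d+1]\to[k+1]$ that sends each element to the index of its block, and conversely every such ordered partition arises from a (necessarily surjective) function of this form. Since each of the $d+1$ elements of $[d+1]$ has $k+1$ independent choices of image, the total number of functions $f\colon[d+1]\to[k+1]$ is $(k+1)^{d+1}$, and this overcounts by including functions that fail to be surjective. Hence the number of $(d-k)$-faces is at most $(k+1)^{d+1}$.

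Finally I would convert to the form stated in the lemma. One has
\[
(k+1)^{d+1} \;=\; 2^{(d+1)\log_2(k+1)} \;\leq\; 2^{2(d+1)\log_2(k+1)},
\]
which completes the proof. In fact the tighter bound $2^{(d+1)\log_2(k+1)}$ already suffices; the factor $2$ in the exponent of the stated bound is slack, presumably inserted to simplify estimates downstream. There is no real obstacle to the argument: the only care needed is to quote the face-partition bijection with the correct indexing (number of blocks $=$ $d+1-\dim$), which is easy to slip on since the statement is phrased in terms of $(d-k)$-faces rather than $j$-faces directly.
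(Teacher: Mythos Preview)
Your proof is correct and in fact more direct than the paper's. Both arguments start from the same bijection between $(d-k)$-faces of $\Pi$ and ordered partitions of $[d+1]$ into $k+1$ non-empty blocks. From there the paper takes a detour: it writes the count as $(k+1)!$ times the Stirling number of the second kind $S(d+1,k+1)$, invokes the bound $S(d+1,k+1)\le\tfrac{1}{2}\binom{d+1}{k+1}(k+1)^{d-k}$, and then simplifies the product to at most $(k+1)^{2(d+1)}$. You instead observe that an ordered partition into $k+1$ labelled blocks is the same thing as a surjection $[d+1]\to[k+1]$, and bound surjections by all functions to get $(k+1)^{d+1}$ directly. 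Your route avoids the Stirling-number machinery entirely and yields the sharper exponent $(d+1)\log_2(k+1)$ rather than $2(d+1)\log_2(k+1)$; as you note, the extra factor of $2$ in the lemma's statement is slack.
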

\begin{proof}
By our characterization of faces of $\Pi$, it suffices to count
the number of ordered partitions of $[d+1]$ into $(k+1)$ subsets.
That number equals $(k+1)!$ times the number of unordered partitions.
The number of unordered partitions, in turn, is known as 
\emph{Stirling number of the second kind}~\cite{rd-stirling}
and bounded by $\frac{1}{2}\binom{d+1}{k+1}(k+1)^{d-k}$.
Multiplying with $(k+1)!$ yields an upper bound for $(d-k)$-faces,
which can be bounded by $(k+1)^{2(d+1)}$ for $k\le d$. 
\end{proof}

\subparagraph*{Geometry}
All vertices of $\Pi$ are equidistant from the origin,
and it can be checked with a simple calculation that this distance is $\sqrt{\frac{d(d+2)}{12(d+1)}}$.
Using the triangle inequality, we obtain:
\begin{lemma}
\label{lemma:pmdiam}
 The diameter of $\Pi$ is at most $\sqrt{d}$.
\end{lemma}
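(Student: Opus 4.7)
The plan is to combine the two facts recorded in the paragraph just before the lemma statement: $\Pi$ is a convex polytope, and each of its vertices lies at the common distance $R:=\sqrt{d(d+2)/(12(d+1))}$ from the origin. Since the function $(x,y)\mapsto\|x-y\|^2$ is convex separately in each argument, its maximum over the compact convex set $\Pi\times\Pi$ must be attained at an extreme point of each factor, i.e., at a pair of vertices. Hence $\diam(\Pi)$ equals the largest Euclidean distance between two vertices of $\Pi$.

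For any pair of vertices $v,w$ of $\Pi$, the triangle inequality applied at the origin yields $\|v-w\|\leq\|v\|+\|w\|=2R$. It therefore suffices to verify that $2R\leq\sqrt{d}$. Squaring both sides, this reduces to $\frac{d(d+2)}{3(d+1)}\leq d$, equivalently $d+2\leq 3(d+1)$, which is trivially true for all $d\geq 0$.

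I do not anticipate any real obstacle. The only ingredient beyond elementary algebra is the circumradius formula, which follows from a direct computation from the explicit vertex description given in the preceding subsection (summing the squares $(d-2k)^2$ for $k=0,\ldots,d$, and dividing by $4(d+1)^2$ to account for the normalizing scalar). Everything else collapses into a single application of the triangle inequality and a one-line polynomial comparison.
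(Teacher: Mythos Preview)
Your argument is correct and essentially identical to the paper's: both use that all vertices lie at distance $R=\sqrt{d(d+2)/(12(d+1))}$ from the origin and then apply the triangle inequality to bound the diameter by $2R\le\sqrt{d}$. Your extra remark that the diameter of a convex polytope is attained at a pair of vertices is a harmless elaboration of what the paper leaves implicit.
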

The permutahedra centered at all lattice points of $A^\ast$ define the Voronoi tessellation of $A^\ast$.
Its nerve is the Delaunay triangulation $\Del$ of $A^\ast$. An important property of $A^\ast$ is that,
unlike for the integer lattice, 
$\Del$ is non-degenerate~-- this will ultimately ensure small upper bounds
for the size of our approximation scheme.

\begin{lemma}
\label{lemma:pmgenp}
 Each vertex of a permutahedral cell has precisely $d+1$ cells adjacent to it. 
 In other words, the $A^*_d$ lattice points are in general position.
\end{lemma}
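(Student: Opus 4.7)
The plan is to combine the combinatorial face structure of $\Pi$ with the tiling property of the Voronoi tessellation. A vertex $v$ of $\Pi$ corresponds to a permutation $(i_1,\ldots,i_{d+1})$ of $[d+1]$, viewed as a $(d+1)$-fold ordered partition of singletons. By the refinement characterization of the face lattice recalled above, a facet of $\Pi$ with partition $(S,[d+1]\setminus S)$ contains $v$ iff $S$ is one of the $d$ non-trivial prefixes $\{i_1,\ldots,i_k\}$, $k=1,\ldots,d$. Hence every vertex of $\Pi$ lies on exactly $d$ facets, so in particular $\Pi$ is a simple polytope.

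For the lower bound, since the cells $\{\Pi+p : p \in A^\ast\}$ tile $\R^d$, each facet of $\Pi$ is shared with a unique neighboring cell. Let $\Pi_1,\ldots,\Pi_d$ be the neighbors of $\Pi$ across the $d$ facets incident to $v$. Since $v$ lies in each shared facet, $v$ is also a vertex of every $\Pi_k$, so together with $\Pi$ itself we exhibit at least $d+1$ distinct cells of the tessellation containing $v$.

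For the matching upper bound, I would verify that these are the only cells through $v$. Any cell containing $v$ is centered at a lattice point $p \in A^\ast$ minimizing $\|p-v\|$, and the $d+1$ centers already identified all realize this minimum, equal to $\sqrt{d(d+2)/(12(d+1))}$ (the common distance from the origin to every vertex of $\Pi$ recorded above). It thus suffices to rule out further lattice points of $A^\ast$ at that distance from $v$. Exploiting the fact that $S_{d+1}$ acts transitively on the vertices of $\Pi$ while preserving $A^\ast$, one may fix a canonical $v$, pass to the embedding of $A^\ast$ in the hyperplane $H\subset\R^{d+1}$, and rule out additional ties through a short case analysis on the coordinates of a would-be extra lattice point.

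I expect this last step to be the main obstacle. The lower bound is a clean consequence of the face combinatorics of $\Pi$ together with the tiling property, both already in hand. The upper bound, however, is equivalent to the general-position assertion that Delaunay cells of $A^\ast$ are simplices, and seems to require an explicit lattice-theoretic calculation rather than a purely combinatorial argument; symmetry reduces the check to a single representative $v$, but some explicit manipulation with the integer-sum-zero description of $A^\ast$ appears unavoidable.
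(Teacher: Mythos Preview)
Your decomposition into a lower and an upper bound is sound, and your lower bound is both correct and genuinely different from the paper's route: you exploit that $\Pi$ is a \emph{simple} polytope (each vertex lies on exactly $d$ facets, namely those indexed by the $d$ proper prefixes of the corresponding permutation), so the $d$ facet-neighbors together with $\Pi$ itself already exhibit $d+1$ cells through $v$. The paper does not isolate this step; it obtains both bounds simultaneously by directly enumerating the lattice points nearest to a fixed $v$.

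You are also right that the upper bound is where the real work lies, and that it cannot be extracted from the face combinatorics alone (the integer lattice, whose Voronoi cube is equally simple, has $2^d$ cells through each vertex). The organizing idea you are missing for the ``short case analysis'' is the coset structure $A_d^\ast / A_d \cong \Z/(d{+}1)\Z$: every $y \in A_d^\ast$ can be written as $y = m + \frac{k}{d+1}\vec{1}$ with $m \in \Z^{d+1}$, $\sum_i m_i = -k$, for a unique $k \in \{0,\ldots,d\}$. The paper shows that, for fixed $k$, $\operatorname{argmin}_m \|y - v\|^2 = \operatorname{argmin}_m \|m - v\|^2$ (the $k$-dependent cross terms cancel), and since the coordinates of the canonical $v = \frac{1}{2(d+1)}(d,d-2,\ldots,-d)$ are pairwise distinct and lie in $(-\tfrac12,\tfrac12)$, the unique integer minimizer subject to $\sum_i m_i = -k$ sets precisely the $k$ coordinates with smallest $v_i$ to $-1$ and the rest to $0$. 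Thus each residue class contributes exactly one nearest lattice point, and a final check shows all $d+1$ of them are equidistant from $v$. This is exactly the explicit calculation you anticipated; your symmetry reduction to a single $v$ is indeed used, but it is the residue decomposition that turns the remaining analysis into a one-line optimization rather than an open-ended case check.
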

The proof idea is to look at any vertex of the Voronoi cell and argue that it has precisely $(d+1)$
equidistant lattice points. See~\cite[Thm.2.5]{stanford-tech} for a concise, or the appendix for a detailed argument.
As a consequence, we can identify Delaunay simplices incident to the origin 
with faces of $\Pi$.

\begin{proposition}
\label{prop:equiv}
The $(k-1)$-simplices in $\Del$ that are incident to the origin are in one-to-one-correspondence to the
$(d-k+1)$-faces of $\Pi$ and, hence, in one-to-one correspondence
to the ordered $k$-partitions of $[d+1]$.
\end{proposition}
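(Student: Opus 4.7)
The plan is a Voronoi--Delaunay duality argument with Lemma~\ref{lemma:pmgenp} as the key geometric input. By definition, $\Del$ is the nerve of the permutahedral tessellation of $A^\ast$, so a $(k-1)$-simplex of $\Del$ corresponds to a collection of $k$ lattice points whose Voronoi cells share a non-empty common intersection. Such a simplex is incident to the origin precisely when $\Pi$ is one of the $k$ participating cells, i.e.\ the remaining $k-1$ cells are those of Voronoi neighbors of the origin that meet $\Pi$ in a common face.

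The core step is to promote Lemma~\ref{lemma:pmgenp} from vertices to all faces. The lemma says that every vertex of $\Pi$ lies in exactly $d+1$ cells, the maximum general-position count; equivalently, every top-dimensional Delaunay cell at the origin is an honest $d$-simplex on $d+1$ vertices. I would then invoke the standard Voronoi--Delaunay face-lattice duality to propagate this to every dimension: for any $k\leq d+1$, if $k$ Voronoi cells have a non-empty common intersection, that intersection is a face of each of them of dimension exactly $d-k+1$, and conversely every $(d-k+1)$-face $F$ of $\Pi$ arises as the intersection of $\Pi$ with a unique set of $k-1$ neighboring cells (namely those whose centers are Voronoi-closer to every interior point of $F$ than to any other lattice point). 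This gives mutually inverse maps between $(k-1)$-simplices of $\Del$ incident to the origin and $(d-k+1)$-faces of $\Pi$.

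The second half of the statement is then immediate from the combinatorial description recorded at the start of this section: $j$-faces of $\Pi$ are in bijection with ordered partitions of $[d+1]$ into $d+1-j$ non-empty blocks. Setting $j=d-k+1$ yields ordered partitions into exactly $k$ blocks, and composing with the previous bijection gives the claimed one-to-one correspondence with ordered $k$-partitions of $[d+1]$.

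The step I expect to be most delicate is the promotion from vertex-level to face-level general position. Lemma~\ref{lemma:pmgenp} is stated only at vertices, so I would justify the propagation by the observation that the star of the origin in $\Del$ is a pure simplicial complex whose top cells are $d$-simplices (one per vertex of $\Pi$), and whose face poset is dual to the face poset of $\Pi$ by construction of the nerve. Once this duality is in hand, matching $(k-1)$-faces of the star with $(d-k+1)$-faces of $\Pi$ is a purely combinatorial statement, so no further dimension count is needed.
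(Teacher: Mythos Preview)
Your proposal is correct and follows exactly the route the paper intends: the paper states Proposition~\ref{prop:equiv} without proof, merely as ``a consequence'' of Lemma~\ref{lemma:pmgenp}, and your argument spells out precisely this Voronoi--Delaunay duality consequence together with the already-recorded description of permutahedron faces as ordered partitions. There is nothing to add.
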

Let $V$ denote the set of lattice points that share a Delaunay edge
with the origin. 
The following statement shows that the point set $V$ is in convex position,
and the convex hull encloses $\Pi$ with some ``safety margin''. 
The proof is a mere calculation, deriving an explicit equation
for each hyperplane supporting the convex hull and applying it
to all vertices of $V$ and of $\Pi$.
The argument is detailed in the appendix.
\begin{lemma}
 \label{lem:hplane_facet}
 For each $d$-simplex attached to the origin, the facet $\tau$
 opposite to the origin
 lies on a hyperplane which is at least a distance $\frac{1}{\sqrt{2}(d+1)}$
 to $\Pi$ and all points of $V$ are either on the hyperplane or
 on the same side as the origin.
\end{lemma}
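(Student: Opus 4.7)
The plan is to exhibit an explicit linear equation for the hyperplane containing $\tau$ and then verify both conclusions by direct inner-product computations. I would identify the $d$-simplices at the origin with permutations of $[d+1]$ via Proposition~\ref{prop:equiv}, write down the lattice neighbors of the origin explicitly, determine the normal of $\tau$, and bound the relevant inner products over $\Pi$ and over $V$.

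First, fix a $d$-simplex at the origin; by Proposition~\ref{prop:equiv} it corresponds to a permutation $\pi$ of $[d+1]$, and the facets of $\Pi$ meeting the corresponding vertex $v_\pi$ are exactly those indexed by the chain $T_k:=\{\pi(1),\ldots,\pi(k)\}$ for $k=1,\ldots,d$. A direct computation (for instance, by reflecting the origin through the affine span of a facet) shows that the lattice neighbor of $0$ across the facet indexed by a subset $S$ is the vector $v_S\in\R^{d+1}$ with entry $(d+1-|S|)/(d+1)$ at positions in $S$ and $-|S|/(d+1)$ elsewhere; thus $\tau$ has vertices $v_{T_1},\ldots,v_{T_d}$ and $V=\{v_S:\emptyset\subsetneq S\subsetneq[d+1]\}$.

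Next, since $v_{T_{k+1}}-v_{T_k}=e_{\pi(k+1)}-\frac{1}{d+1}\mathbf{1}$, the vector $n:=e_{\pi(1)}-e_{\pi(d+1)}$ lies in $H$, has norm $\sqrt{2}$, and is (up to scale) the only element of $H$ orthogonal to every difference $v_{T_{k+1}}-v_{T_k}$; so it must be the normal of $\tau$ inside $H$. A direct check gives $n\cdot v_{T_k}=(d+1-k)/(d+1)-(-k/(d+1))=1$ for every $k=1,\ldots,d$, so $\tau$ lies on the hyperplane $\{x:n\cdot x=1\}$, which is at distance $1/\sqrt{2}$ from the origin.

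It remains to bound both sides. For $x\in\Pi$ the value $n\cdot x$ is maximized at a vertex, and since each vertex of $\Pi$ is a permutation of $\frac{1}{2(d+1)}(d,d-2,\ldots,-d)$, the maximum of $x_{\pi(1)}-x_{\pi(d+1)}$ over these vertices equals $d/(d+1)<1$. Hence $\Pi$ lies strictly on the origin's side, at distance at least $(1-d/(d+1))/\sqrt{2}=\frac{1}{\sqrt{2}(d+1)}$ from the hyperplane. For $v_S\in V$, a four-case analysis on the membership of $\pi(1)$ and $\pi(d+1)$ in $S$ yields $n\cdot v_S\in\{-1,0,1\}$, so every $v_S$ satisfies $n\cdot v_S\le 1$ and therefore lies on the origin's side or on the hyperplane. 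The main obstacle, I expect, is spotting the normal $n$; once it is identified, the remainder is a careful but routine bookkeeping of inner products.
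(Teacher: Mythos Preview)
Your proposal is correct and follows essentially the same approach as the paper: both write down the explicit supporting hyperplane of $\tau$ (with normal $e_{\pi(1)}-e_{\pi(d+1)}$, equivalently the paper's equation $-x_1+x_{d+1}+1=0$ in the canonical case) and then verify the two conclusions by direct inner-product computations on the vertices of $\Pi$ and on the explicit neighbor vectors $v_S$. The only cosmetic difference is that the paper treats the identity permutation first and then invokes the coordinate-permutation symmetry of $A^\ast$, whereas you parametrize by a general $\pi$ from the start and derive the normal from the edge directions $v_{T_{k+1}}-v_{T_k}=e_{\pi(k+1)}-\tfrac{1}{d+1}\mathbf{1}$; the resulting computations and bounds are identical.
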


\begin{lemma}
\label{lem:pmdist}
If two lattice points are not adjacent in $\Del$, the corresponding Voronoi polytopes have a distance of at least
$\frac{\sqrt{2}}{d+1}$.
\end{lemma}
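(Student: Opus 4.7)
The plan is to combine Lemma~\ref{lem:hplane_facet} with the centrosymmetry of $\Pi$ and a lattice ``layer spacing'' argument coming from the fact that the vertices of a Delaunay $d$-simplex at the origin form a basis of $A^*$.

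By translation invariance I take $p=0$; let $q$ be a non-neighbor of $0$. By Lemma~\ref{lemma:pmgenp} (general position) the only lattice points contained in the closed star $C := \mathrm{conv}(\{0\}\cup V)$ are $\{0\}\cup V$ themselves, so $q \notin C$. First I would pick a Delaunay $d$-simplex $\Delta=\mathrm{conv}(0,v_1,\ldots,v_d)$ whose facet opposite $0$ is crossed by the ray from $0$ in direction $\hat q = q/\|q\|$; let $H=\{y:\langle y,\nu\rangle = c\}$ be the affine hyperplane of this facet, oriented so that $c=\langle v_i,\nu\rangle>0$. Lemma~\ref{lem:hplane_facet} then gives $h_\Pi(\nu) := \max_{x\in\Pi}\langle x,\nu\rangle \le c - \epsilon$ with $\epsilon := 1/(\sqrt{2}(d+1))$.

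The next step exploits that $\Pi$ is centrally symmetric, so for $\Pi_q = \Pi + q$ one has $\min_{y\in\Pi_q}\langle y,\nu\rangle = \langle q,\nu\rangle - h_\Pi(\nu) \ge \langle q,\nu\rangle - c + \epsilon$. Provided $\langle q,\nu\rangle \ge 2c$, the projection of $\Pi_q$ onto $\nu$ is bounded below by $c+\epsilon$ while that of $\Pi$ is bounded above by $c-\epsilon$, and since Euclidean distance dominates projection distance, this immediately yields $\mathrm{dist}(\Pi,\Pi_q)\ge 2\epsilon=\sqrt{2}/(d+1)$, as desired.

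The hard step is verifying $\langle q,\nu\rangle \ge 2c$. Here I would argue that $v_1,\ldots,v_d$ form a basis of $A^*$. By Lemma~\ref{lemma:pmgenp} every lattice point lies in exactly $(d+1)!$ Delaunay simplices (one per vertex of $\Pi$), and since each such simplex has $d+1$ lattice vertices, a double-counting argument shows that $d!$ Delaunay simplices tile each fundamental domain; equality of their (common) Euclidean volume $\mathrm{covol}(A^*)/d!$ with $|\det(v_1,\ldots,v_d)|/d!$ forces $\mathbb{Z}v_1+\cdots+\mathbb{Z}v_d = A^*$. Consequently every lattice point $q\in A^*$ has $\langle q,\nu\rangle\in c\mathbb{Z}$, and $q\notin C$ (so $\langle q,\nu\rangle>c$) forces $\langle q,\nu\rangle\ge 2c$.

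The main obstacle is really this unimodularity step; once it is in hand, the rest is just projection bookkeeping with a single application of Lemma~\ref{lem:hplane_facet} and the symmetry $\Pi=-\Pi$. An alternative approach one might try, using only a single copy of Lemma~\ref{lem:hplane_facet} by arguing that the perpendicular bisector of $[0,q]$ is on the far side of $H$ from $\Pi$, fails because $B$ and $H$ will intersect unless $\nu$ happens to be parallel to $\hat q$, which is why the argument above needs the lattice-layer step to double the margin instead.
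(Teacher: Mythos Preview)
Your argument is correct, but it takes a genuinely different route from the paper's.

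The paper argues more globally and avoids lattice arithmetic altogether: the closed star $C$ of the origin and the closed star $C'$ of the non-adjacent point $o'$ are interior-disjoint, simply because $\Del$ is a triangulation and no $d$-simplex can contain both $0$ and $o'$ (any interior overlap would be covered by a simplex incident to $0$ and a distinct simplex incident to $o'$, contradicting that $\Del$ triangulates space). Lemma~\ref{lem:hplane_facet} gives $\Pi$ an $\epsilon$-margin to $\partial C$ and, by translation, $\Pi'$ an $\epsilon$-margin to $\partial C'$; interior-disjointness of $C,C'$ then forces $\mathrm{dist}(\Pi,\Pi')\ge 2\epsilon$. No unimodularity, no central symmetry, no layer spacing.

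Your approach instead isolates a \emph{single} separating hyperplane $H$ and doubles the margin via the lattice-layer observation $\langle q,\nu\rangle\in c\mathbb{Z}$. This is valid and even yields a bit more (an explicit separating slab of width $2\epsilon$), but the price is the unimodularity step. That step is correct, yet it rests on all Delaunay $d$-simplices of $A^*$ having the \emph{same} volume---which you invoke as ``(common)'' without comment. This congruence is true for $A^*$ (and is established in the paper's appendix when proving Lemma~\ref{lemma:pmgenp}), but it is not automatic for a general lattice, so you should flag it explicitly. The paper's two-star argument sidesteps this entirely, trading your algebraic detour for a purely combinatorial one-liner about simplicial stars.
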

\begin{proof}
Lemma~\ref{lem:hplane_facet} shows that $\Pi$ is contained in a convex
polytope $C$ and the distance of $\Pi$ to the boundary of $C$
is at least $\frac{1}{\sqrt{2}(d+1)}$. Moreover, if $\Pi'$ is
the Voronoi polytope of a non-adjacent lattice point $o'$, the corresponding
polytope $C'$ is interior-disjoint from $C$. To see that, note that
the simplices in $\Del$ incident to the origin triangulate the interior
of $C$, and likewise for $o'$ any interior intersection would be covered
by a simplex incident to $o$ and one incident to $o'$, and since they are 
not connected, the simplices are distinct, contradicting the fact that
$\Del$ is a triangulation. Having established that $C$ and $C'$
are interior-disjoint, the distance between $\Pi$ and $\Pi'$
is at least $\frac{2}{\sqrt{2}(d+1)}$, as required.
\end{proof}

Recall the definition of a flag complex as the maximal simplicial complex one can form from a given graph.
We next show that $\Del$ is of this form. While our proof exploits
certain properties of $A^\ast$, we could not exclude the possibility
that the Delaunay triangulation of any lattice is a flag complex.
\begin{lemma}
 \label{lem:flag_prop}
 $\Del$ is a flag complex. 
\end{lemma}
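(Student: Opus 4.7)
The plan is to reduce the flag property of $\Del$ to a combinatorial statement about chains of subsets of $[d+1]$, exploiting the translational invariance of $A^\ast$ and the bijection already furnished by Proposition~\ref{prop:equiv}.

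The first step is to take any pairwise Delaunay-adjacent lattice points $w_0,\ldots,w_m$ and translate so that $w_0=0$, making the remaining $w_i$ neighbors of the origin. By Proposition~\ref{prop:equiv} applied with $k=2$, each neighbor of origin is indexed by a proper non-empty subset $S_i\subseteq[d+1]$, so we write $w_i=v_{S_i}$. Working in the hyperplane $H=\{y\in\R^{d+1}:\sum_j y_j=0\}$, the centroid of the facet of $\Pi$ with partition $(S,[d+1]\setminus S)$ coincides with the midpoint of the corresponding Voronoi edge, yielding the explicit form $v_S = \mathbf{1}_S - \frac{|S|}{d+1}\mathbf{1}$ (up to an overall sign, depending on which side of the facet is taken), where $\mathbf{1}$ denotes the all-ones vector in $\R^{d+1}$ and $\mathbf{1}_S$ the indicator of $S$.

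The key claim to prove is that for distinct proper non-empty $S,S'\subseteq[d+1]$, the points $v_S$ and $v_{S'}$ are Delaunay-adjacent if and only if $S\subsetneq S'$ or $S'\subsetneq S$. The ``if'' direction is immediate from Proposition~\ref{prop:equiv}: a strict inclusion $S\subsetneq S'$ produces the ordered $3$-partition $(S,S'\setminus S,[d+1]\setminus S')$, which under the bijection becomes a $2$-simplex $\{0,v_S,v_{S'}\}$ of $\Del$, and in particular the edge $\{v_S,v_{S'}\}$. The opposite direction will use translational invariance of $A^\ast$: the pair $\{v_S,v_{S'}\}$ is a Delaunay edge iff $v_{S'}-v_S$ is itself a neighbor of origin, i.e., equals $v_U$ for some proper non-empty $U$. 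Substituting the explicit form reduces this to an equation $\mathbf{1}_{S'}-\mathbf{1}_S=\mathbf{1}_U+c\mathbf{1}$ for some integer $c$. Since the left-hand side has entries in $\{-1,0,1\}$, only $c\in\{-1,0\}$ is possible, and a brief case analysis on the four position types $S\cap S'$, $S'\setminus S$, $S\setminus S'$, $[d+1]\setminus(S\cup S')$ forces either $S\setminus S'=\emptyset$ (yielding $c=0$ and $U=S'\setminus S$) or $S'\setminus S=\emptyset$ (yielding $c=-1$ and $U=S'\cup([d+1]\setminus S)$).

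Once the key claim is in hand, the conclusion is purely combinatorial: pairwise Delaunay-adjacency of $v_{S_1},\ldots,v_{S_m}$ translates into pairwise inclusion-comparability of $S_1,\ldots,S_m$, and any pairwise comparable family in the Boolean lattice is automatically totally ordered, so after relabeling $S_1\subsetneq S_2\subsetneq\cdots\subsetneq S_m$. Proposition~\ref{prop:equiv} then produces the $m$-simplex of $\Del$ at the origin with vertex set $\{0,v_{S_1},\ldots,v_{S_m}\}=\{w_0,\ldots,w_m\}$, which is what we need. The main obstacle will be the ``only if'' direction of the key claim: the combinatorial properties of $\Pi$ developed so far (in particular Lemma~\ref{lem:pi_adjacency}) do not by themselves forbid Delaunay edges between $v_S$ and $v_{S'}$ for incomparable $S,S'$, and it is exactly the arithmetic of $A^\ast$ viewed inside $\Z^{d+1}/\Z\mathbf{1}$ that rules these extra adjacencies out.
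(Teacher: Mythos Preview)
Your proof is correct, and it takes a genuinely different route from the paper's.

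The paper proves the lemma via two geometric claims about facets of $\Pi$: first, that if two facets of $\Pi$ are disjoint then the permutahedra attached on those facets are disjoint (hence their centers are not Delaunay-adjacent), and second, that pairwise-intersecting facets of $\Pi$ have a common face. The second claim is the same as your final combinatorial step (chains of subsets), but the first claim is established by an explicit and rather lengthy computation of a separating hyperplane between the two attached permutahedra.

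You bypass that geometric computation entirely by exploiting the lattice group structure: the edge $\{v_S,v_{S'}\}$ is in $\Del$ if and only if the difference $v_{S'}-v_S$ is itself a Delaunay neighbor of the origin, and since all neighbors of the origin are of the form $\mathbf{1}_U-\tfrac{|U|}{d+1}\mathbf{1}$, this reduces to the integer equation $\mathbf{1}_{S'}-\mathbf{1}_S=\mathbf{1}_U+c\mathbf{1}$. Your case analysis on $c\in\{-1,0\}$ is correct; in particular, when $S$ and $S'$ are incomparable the left-hand side takes both values $+1$ and $-1$, which cannot be matched by any pair of consecutive integers $\{c,c+1\}$, so no such $U$ exists.

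What each approach buys: the paper's argument stays entirely on the Voronoi side and never leaves the geometry of a single permutahedral cell and its immediate neighbors, which fits the surrounding narrative but costs a page of coordinate calculations. Your argument is shorter and more conceptual, making transparent that the obstruction to extra Delaunay edges is purely arithmetic in $\Z^{d+1}/\Z\mathbf{1}$; it also makes the equivalence ``Delaunay edge $\Leftrightarrow$ inclusion of subsets'' explicit as a standalone fact, which is slightly stronger than what the paper isolates.
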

\begin{proof}
The proof is based on two claims: consider two facets $f_1$ and $f_2$ of $\Pi$
that are disjoint, that is, do not share a vertex. 
In the tessellation, there are permutahedra $\Pi_1$ attached to $f_1$
 and $\Pi_2$ attached to $f_2$.
The first claim is that $\Pi_1$ and $\Pi_2$ are disjoint.
We prove this explicitly by constructing a hyperplane separating $\Pi_1$
and $\Pi_2$. See the appendix for further details.

The second claim is that if $k$ facets of $\Pi$ are pairwise intersecting,
they also have a common intersection. Another way to phrase this statement
is that the link of any vertex in $\Del$ is a flag complex.
This is a direct consequence of Lemma~\ref{lem:pi_adjacency}. 
See the appendix for more details.

The lemma follows directly with these two claims: consider $k+1$
vertices of $\Del$ which pairwise intersect. 
We can assume that one point is the origin, and the other $k$ points
are the centers of permutahedra that intersect $\Pi$ in a facet.
By the contrapositive of the first claim, all these facets have to intersect
pairwisely, because all vertices have pairwise Delaunay edges.
By the second claim, there is some common vertex of $\Pi$ to all these facets,
and the dual Delaunay simplex contains the $k$-simplex spanned by the vertices.
\end{proof}

\section{Approximation scheme}
\label{section:appsch}
Given a point set $P$ of $n$ points in $\R^d$, 
we describe our approximation complex $X_\beta$ for a fixed scale $\beta>0$.
For that, let $L_\beta$ denote the $A_d^*$ lattice in $\R^d$, with each lattice vector
scaled by $\beta$.
Recall that the Voronoi cells of the lattice points are scaled permutahedra which tile $\R^d$.
The bounds for the diameter (Lemma~\ref{lemma:pmdiam})
as well as for the distance between non-intersecting Voronoi polytopes (Lemma~\ref{lem:pmdist})
remain valid when multiplying them with the scale factor.
Hence, any cell of $L_\beta$ has diameter at most $\beta\sqrt{d}$.
Moreover any two non-adjacent cells have a distance at least 
$\beta\frac{\sqrt{2}}{d+1}$. 

We call a permutahedron \emph{full}, if it contains a point of $P$, and \emph{empty} otherwise
(we assume for simplicity that each point in $P$ lies in the interior of some permutahedron;
this can be ensured with well-known methods~\cite{sos}).
Clearly, there are at most $n$ full permutahedra for a given $P$.
We define $X_\beta$ as the nerve of the full permutahedra defined by $L_\beta$. 
An equivalent formulation is that $X_\beta$ is the subcomplex of $\Del$
defined in Section~\ref{section:permuto} induced by the lattice points
of full permutahedra.
This implies that $X_\beta$ is also a flag complex.
We usually identify the permutahedron and its center in $L_\beta$ 
and interpret the vertices of $X_\beta$ as a subset of $L_\beta$.
See Figure~\ref{fig:example} for an example in 2D.

\begin{figure}[h]
 \centering
 \includegraphics[scale=0.08]{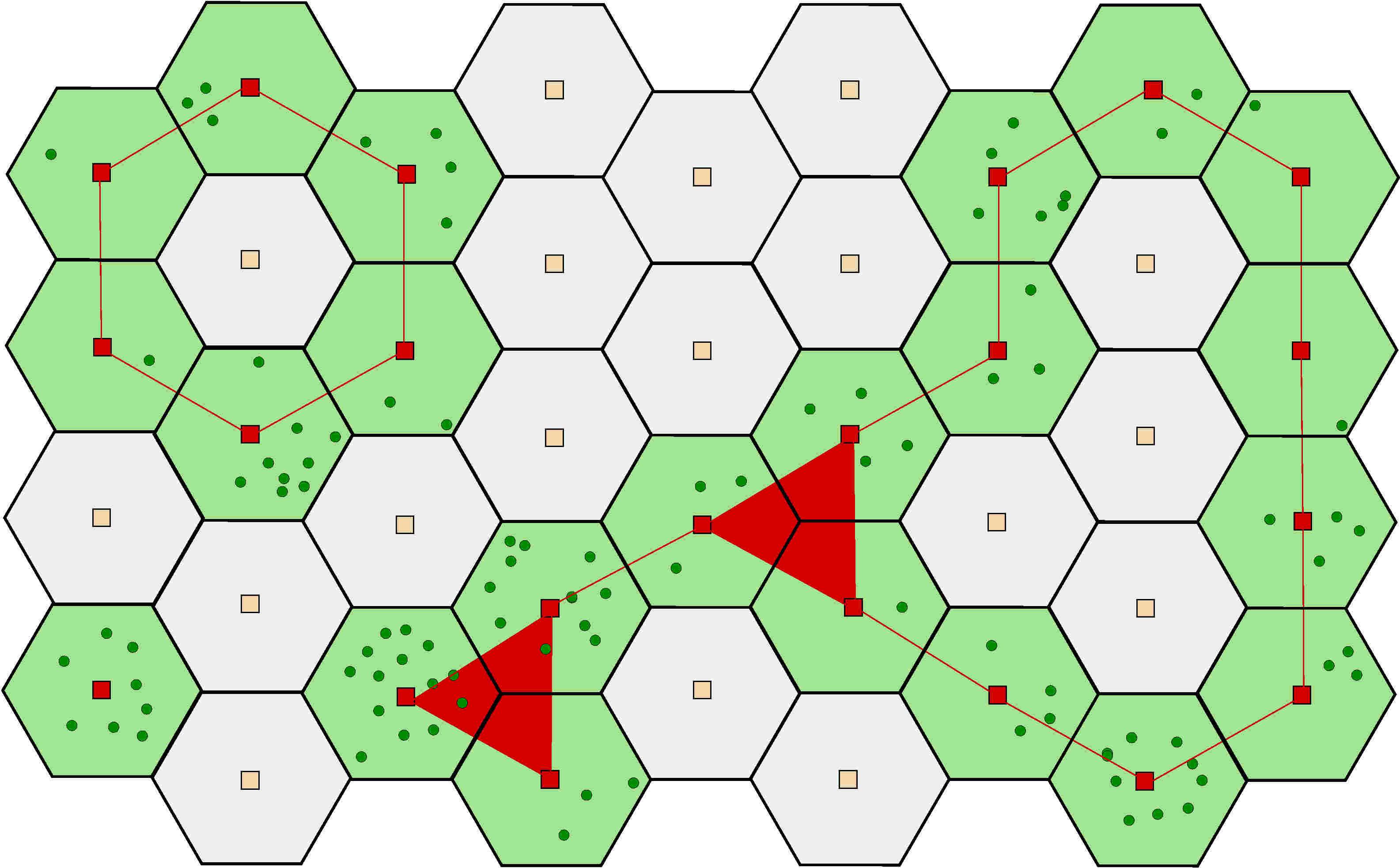}
 \caption{An example of $X_\beta$: the darkly shaded hexagons are the \emph{full}
 permutahedra, which contain input points marked as dark disks.
 Each dark square corresponds to a full permutohedron and 
 represents a vertex of $X_\beta$. If two full permutahedra
 are adjacent, there is an edge between the corresponding vertices. 
 The clique completion on the edge graph constitutes the complex $X_\beta$.}
 \label{fig:example}
\end{figure}

\subparagraph*{Interleaving}
To prove that $X_\beta$ approximates the Rips filtration, we define
simplicial maps connecting the complexes on related scales.

Let $V_{\beta}$ denote the subset of $L_\beta$ 
corresponding to full permutohedra.
To construct $X_\beta$, we use a map $v_\beta: P\rightarrow V_\beta$,
which maps each point $p\in P$ to its closest lattice point.
Vice versa, we define $w_\beta: V_\beta\rightarrow P$ to map a vertex in $V_\beta$
to the closest point of $P$. Note that $v_\beta\circ w_\beta$ is the identity map,
while $w_\beta\circ v_\beta$ is not.

\begin{lemma}
\label{lemma:appmaps_1}
The map $v_\beta$ induces a simplicial map $\phi_\beta:\ri_{\frac{\beta}{\sqrt{2}(d+1)}}
\rightarrow X_{\beta}$.
\end{lemma}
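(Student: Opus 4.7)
The plan is to verify the two requirements for $v_\beta$ to induce a simplicial map $\phi_\beta\colon \ri_{\beta/(\sqrt{2}(d+1))}\to X_\beta$: (i) every vertex of the source is sent to a vertex of $X_\beta$, and (ii) every simplex of the source has an image that spans a simplex of $X_\beta$. Property (i) is immediate from the setup: by our standing assumption each $p\in P$ lies in the interior of a unique permutahedral cell of $L_\beta$, that cell is then full, and its center $v_\beta(p)$ is therefore an element of $V_\beta$, which is precisely the vertex set of $X_\beta$.

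The core step is the case of an edge. Suppose $\{p,q\}$ is an edge of $\ri_{\beta/(\sqrt{2}(d+1))}$, so that $\|p-q\|\le \sqrt{2}\beta/(d+1)$ by the edge criterion stated in Section~\ref{section:backgnd}. If $v_\beta(p)=v_\beta(q)$, the image collapses to a single vertex and there is nothing further to show. Otherwise $p$ and $q$ lie in the interiors of two distinct cells of $L_\beta$, so the distance between these cells is strictly smaller than $\|p-q\|$, and in particular strictly smaller than $\sqrt{2}\beta/(d+1)$. The scaled form of Lemma~\ref{lem:pmdist} states that non-adjacent cells of $L_\beta$ are separated by at least $\sqrt{2}\beta/(d+1)$; contrapositively, the cells around $v_\beta(p)$ and $v_\beta(q)$ must be adjacent in $\Del$. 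Thus $\{v_\beta(p),v_\beta(q)\}$ is an edge of $\Del$, and since both endpoints are in $V_\beta$, it is also an edge of $X_\beta$.

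To lift this to arbitrary dimensions, given a Rips simplex $\{p_0,\ldots,p_k\}$ apply the edge step to every pair: the distinct elements of $\{v_\beta(p_0),\ldots,v_\beta(p_k)\}$ form a clique in the $1$-skeleton of $X_\beta$. The complex $X_\beta$ is a flag complex, being the subcomplex of the flag complex $\Del$ (Lemma~\ref{lem:flag_prop}) induced by the vertex set $V_\beta$, so this clique spans a simplex of $X_\beta$, which is the required image simplex.

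No substantial obstacle is anticipated; the only delicate point is the boundary case $\|p-q\|=\sqrt{2}\beta/(d+1)$, and this is exactly where the \emph{interior}-of-cell assumption from the paragraph preceding the lemma is used: it converts the cell-to-cell distance inequality into a strict one, letting Lemma~\ref{lem:pmdist} force adjacency.
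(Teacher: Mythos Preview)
Your proof is correct and follows essentially the same approach as the paper: reduce to the edge case via the flag property of $X_\beta$, then invoke the contrapositive of Lemma~\ref{lem:pmdist} to show that Rips edges at the given scale map to Delaunay edges. Your version is simply more explicit, spelling out the vertex case, the degenerate case $v_\beta(p)=v_\beta(q)$, and the boundary case $\|p-q\|=\sqrt{2}\beta/(d+1)$, which the paper's two-line proof glosses over.
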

\begin{proof}
Because $X_\beta$ is a flag complex,
it is enough to show that for any edge $(p,q)$ in $\ri_{\frac{\beta}{\sqrt{2}(d+1)}}$,
$(v_\beta(p),v_\beta(q))$ is an edge of $X_{\beta}$.
This follows at once from the contrapositive of Lemma~\ref{lem:pmdist}.
\end{proof}

\begin{lemma}
\label{lemma:appmaps_2}
The map $w_\beta$ induces a simplicial map $\psi_\beta:X_{\beta} \rightarrow
\ri_{\beta 2\sqrt{d}}$.
\end{lemma}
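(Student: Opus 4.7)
The plan is to exploit that both complexes involved are flag: $X_\beta$ is flag because it is identified as a subcomplex of $\Del$, which is flag by Lemma~\ref{lem:flag_prop}, and $\ri_{\beta 2\sqrt{d}}$ is flag by definition. Hence it suffices to check that the vertex map $w_\beta$ sends every edge of $X_\beta$ to an edge (or to a collapsed vertex) of $\ri_{\beta 2\sqrt{d}}$; the claim for higher-dimensional simplices then follows automatically from the flag property on the target side.

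First I would take an arbitrary edge $\{o_1,o_2\}$ of $X_\beta$. By construction of $X_\beta$ as the nerve of the full permutahedra of $L_\beta$, the two scaled permutahedra $\Pi_{o_1}$ and $\Pi_{o_2}$ are adjacent in the tessellation, so they share at least one common point $q$. The images $p_i := w_\beta(o_i)\in P$ satisfy $p_i\in\Pi_{o_i}$, and applying the triangle inequality together with the scaled version of Lemma~\ref{lemma:pmdiam} gives
\[
\|p_1-p_2\|\;\le\;\|p_1-q\|+\|q-p_2\|\;\le\;\diam(\Pi_{o_1})+\diam(\Pi_{o_2})\;\le\;2\beta\sqrt{d}.
\]
Recalling the convention that $\ri_r$ contains an edge between two points at distance at most $2r$, the scale $r=\beta\,2\sqrt{d}$ corresponds to a threshold of $4\beta\sqrt{d}$, which comfortably accommodates the bound above. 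Thus either $p_1=p_2$ (in which case the edge collapses to a vertex that is trivially present) or $\{p_1,p_2\}$ is an edge of $\ri_{\beta 2\sqrt{d}}$.

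To finish, I would invoke the flag property. For any higher-dimensional simplex $\{o_0,\ldots,o_k\}$ of $X_\beta$, every pair $\{o_i,o_j\}$ is an edge and, by the preceding paragraph, its image is an edge (or identified vertex) in $\ri_{\beta 2\sqrt{d}}$. The image vertices therefore form a clique in the $1$-skeleton of $\ri_{\beta 2\sqrt{d}}$, and the flag property supplies the corresponding simplex. This shows that $w_\beta$ extends to a simplicial map $\psi_\beta: X_\beta \to \ri_{\beta 2\sqrt{d}}$, as required. I do not expect a real obstacle; the main thing to keep straight is the factor-of-two convention relating the Rips scale parameter to pairwise distances, and the fact that one must work with the scaled lattice $L_\beta$ so that Lemma~\ref{lemma:pmdiam} is applied with the scaled diameter $\beta\sqrt{d}$.
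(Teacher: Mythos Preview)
Your proof is correct and follows essentially the same approach as the paper: reduce to edges via the flag property of the target Rips complex, observe that the images $w_\beta(o_i)$ lie in the respective (adjacent) permutahedra, and bound the distance by twice the permutahedron diameter using Lemma~\ref{lemma:pmdiam}. Your version is simply more explicit about the flag reduction, the shared boundary point, and the factor-of-two Rips convention, but the argument is the same.
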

\begin{proof}
It is enough to show that for any edge $(p,q)$ in $X_\beta$,
$(w_\beta(p),w_\beta(q))$ is an edge of $\ri_{\beta 2\sqrt{d}}$.
Note that $w_\beta(p)$ lies in the permutahedron of $p$ and similarly,
$w_\beta(q)$ lies in the permutahedron of $q$, so their distance
is bounded by twice the diameter of the permutahedron.
The statement follows from Lemma~\ref{lemma:pmdiam}.
\end{proof}
Since $\beta 2\sqrt{d}< \beta 2 (d+1)$, we can compose the map $\psi_\beta$ from the previous lemma
with an inclusion map to a simplicial map $X_{\beta} \rightarrow \ri_{\beta 2(d+1)}$ which we denote by $\psi_\beta$ as well.
Composing the simplicial maps $\psi$ and $\phi$, we obtain simplicial maps
\[\theta_\beta: X_\beta\rightarrow X_{\beta(2(d+1))^2}\]
for any $\beta$, giving rise to a discrete filtration
\[\left(X_{\beta(2(d+1))^{2k}}\right)_{k\in\Z}.\]

The maps define the following diagram of complexes and simplicial maps between them (we omit
the indices in the maps for readability):

 \begin{equation}
 \label{eqn:intl-orig}
  \xymatrix{
 & \cdots \ar[r] & \ri_{\beta2(d+1)} \ar[rd]^{\phi} \ar[rr]^{g} & &  \ri_{\beta8(d+1)^3} \ar[r] & \cdots \\
 \cdots \ar[r] & X_{\beta} \ar[ru]^{\psi} \ar[rr]^{\theta} & & X_{\beta4(d+1)^2} \ar[ru]^{\psi} \ar[r] & \cdots
 }
 \end{equation}

Here, $g$ is the inclusion map of the corresponding Rips complexes.
Applying the homology functor yields a sequence of vector spaces and linear maps between them.

\begin{lemma}
\label{lem:commutes}
Diagram \ref{eqn:intl-orig} commutes on the homology level, that is, $\theta_\ast=\phi_\ast\circ \psi_\ast$
and $g_\ast=\psi_\ast\circ\phi_\ast$, where the asterisk denotes the homology map induced by the simplicial map.
\end{lemma}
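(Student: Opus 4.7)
The equality $\theta_\ast = \phi_\ast \circ \psi_\ast$ is essentially tautological: the simplicial map $\theta_\beta$ was defined as the composition of $\psi_\beta$ followed by $\phi_{4(d+1)^2\beta}$ (possibly pre-composed with an inclusion of Rips complexes, which is the identity at the vertex level). Functoriality of simplicial homology then yields the claim immediately. The real content of the lemma is the second equality $g_\ast = \psi_\ast \circ \phi_\ast$, whose nontriviality stems from the fact that $w \circ v$ is generally not the identity on $P$, so $\psi \circ \phi$ genuinely differs from the inclusion $g$ at the simplicial level.

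My plan for the second equality is to invoke the classical contiguity principle: two simplicial maps $f_1, f_2 \colon K \to L$ such that $f_1(\sigma) \cup f_2(\sigma)$ is a simplex of $L$ for every simplex $\sigma$ of $K$ induce identical maps on homology. I will therefore show that $g$ and $\psi \circ \phi$, viewed as simplicial maps $\ri_{2(d+1)\beta} \to \ri_{8(d+1)^3\beta}$, are contiguous. Fix a simplex $\sigma = \{p_0, \ldots, p_k\}$ of $\ri_{2(d+1)\beta}$ and set $q_i := w_{4(d+1)^2\beta}(v_{4(d+1)^2\beta}(p_i))$; I must show that $\{p_0, \ldots, p_k\} \cup \{q_0, \ldots, q_k\}$ is a single simplex of $\ri_{8(d+1)^3\beta}$. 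Since Rips complexes are flag and their edges connect points at distance at most $2r$, it suffices to bound all pairwise distances in this union by $2 \cdot 8(d+1)^3\beta = 16(d+1)^3\beta$.

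I split the pairs into three kinds. The Rips hypothesis directly gives $\|p_i - p_j\| \leq 4(d+1)\beta$. By construction of $v$ and $w$, each $q_i$ is an input point lying in the same permutahedron of $L_{4(d+1)^2\beta}$ as $p_i$, so Lemma \ref{lemma:pmdiam} applied at scale $4(d+1)^2\beta$ yields $\|p_i - q_i\| \leq 4(d+1)^2\beta\sqrt{d}$. All remaining distances follow by the triangle inequality, the worst case being $\|q_i - q_j\| \leq 8(d+1)^2\beta\sqrt{d} + 4(d+1)\beta$, which is dominated by $16(d+1)^3\beta$ using the crude estimate $\sqrt{d} \leq d+1$. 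The main ``obstacle'' is nothing conceptual but the calibration of the scale factors $2(d+1)$, $4(d+1)^2$, and $8(d+1)^3$ along the diagram; they are tailored precisely so that the triangle-inequality estimates close up, after which the contiguity principle delivers $g_\ast = (\psi \circ \phi)_\ast = \psi_\ast \circ \phi_\ast$.
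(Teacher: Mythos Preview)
Your proof follows essentially the same route as the paper's: the first identity holds by definition and functoriality, and the second is established by showing that $g$ and $\psi\circ\phi$ are contiguous as simplicial maps into $\ri_{8(d+1)^3\beta}$, reducing the problem to pairwise distance estimates among the $p_i$ and $q_i$.

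There is one small slip in your justification. Your claim that $q_i = w_{4(d+1)^2\beta}(v_{4(d+1)^2\beta}(p_i))$ lies in the \emph{same} permutahedron as $p_i$ is not warranted: $w$ returns the input point closest to the lattice center $\ell := v(p_i)$, and that point may well sit in a neighboring cell. The paper instead argues via the triangle inequality through $\ell$: since $q_i$ is nearest in $P$ to $\ell$ and $p_i\in P$, one has $\|\ell - q_i\|\le \|\ell - p_i\|$, whence $\|p_i - q_i\|\le 2\|p_i - \ell\|\le 8(d+1)^2\beta\sqrt d$ by Lemma~\ref{lemma:pmdiam}. With this (looser) bound the worst case becomes $\|q_i - q_j\|\le 16(d+1)^2\beta\sqrt d + 4(d+1)\beta$, which is still below $16(d+1)^3\beta$, so your contiguity argument goes through once this step is corrected.
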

\begin{proof}
For the first statement, note that $\theta$ is defined as $\phi\circ\psi$, so the maps commute already
at the simplicial level.
The second identity is not true on a simplicial level; we show that the maps $g$ and $h:=\psi\circ\phi$ are \emph{contiguous},
that means, for every simplex $(x_0,\ldots,x_k)\in \ri_{\beta2(d+1)}$, the simplex $(g(x_0),\ldots,g(x_k),h(x_0),\ldots,h(x_k))$
forms a simplex in $\ri_{\beta8(d+1)^3}$. Contiguity implies that the induced homology maps $g_\ast$ and $h_\ast=\psi_\ast\circ\phi_\ast$
are equal~\cite[\S 12]{munkres}.

It suffices to prove that any pair of vertices among $\{g(x_0),\ldots,g(x_k),h(x_0),\ldots,h(x_k)\}$
is at most $\beta16(d+1)^3$ apart. This is immediately clear for any pair $(g(x_i), g(x_j))$
and $(h(x_i),h(x_j))$, so we can restrict to pairs of the form $(g(x_i), h(x_j))$.
Note that $g(x_i)=x_i$ since $g$ is the inclusion map. Moreover, $h(x_j)=\psi(\phi(x_j))$,
and $\ell:=\phi(x_j)$ is the closest lattice point to $x_j$ in $X_{\beta4(d+1)^2}$.
Since $\psi(\ell)$ is the closest point in $P$ to $\ell$, it follows that $\|x_j-h(x_j)\|\leq 2\|x_j-\ell\|$.
With Lemma~\ref{lemma:pmdiam}, we know that 
$\|x_j-\ell\|\leq \beta4(d+1)^2\sqrt{d}$, 
which is the diameter of 
the permutahedron cell. Using triangle inequality, we obtain
\[
\|g(x_i)-h(x_j)\|\leq \|x_i-x_j\|+\|x_j-h(x_j)\|\leq
\beta4(d+1)+\beta8(d+1)^2\sqrt{d}
<\beta16(d+1)^3
\]
\end{proof}
\begin{theorem}
 \label{theorem:appratio}
 The persistence module $\left(H_{\ast}(X_{\beta(2(d+1))^{2k}})\right)_{k\in\Z}$ 
 is a $6(d+1)$-approximation of $(H_{\ast}(\ri_{\beta}))_{\beta\geq 0}$.
\end{theorem}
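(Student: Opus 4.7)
The plan is to read the theorem as a multiplicative interleaving claim between the two persistence modules and extract the constant directly from diagram~(\ref{eqn:intl-orig}) and Lemma~\ref{lem:commutes}. I would first fix a base scale $\beta_0>0$ and let $s_k:=\beta_0(2(d+1))^{2k}$, so that the $X$-module lives at the scales $\{s_k\}_{k\in\Z}$ with transitions $\theta_\ast$, while the Rips module is indexed on all of $\R_{>0}$. To compare them, I would extend the $X$-module to $\R_{>0}$ by the step-function convention $\widetilde{X}_\alpha := X_{s_k}$ for the appropriate $k$, so that both modules have the same continuous index set.

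Second, I would assemble the two linear maps required by the definition of a multiplicative $c$-interleaving. In one direction, Lemma~\ref{lemma:appmaps_2} (followed by an inclusion of Rips complexes) yields, at the homology level, a map $\psi_\ast:H_\ast(X_\beta)\to H_\ast(\ri_{\beta\, 2(d+1)})$, which gives a shift by a factor $2(d+1)$ from $\widetilde{X}$ to $\ri$. In the opposite direction, Lemma~\ref{lemma:appmaps_1} yields $\phi_\ast:H_\ast(\ri_{\beta/(\sqrt{2}(d+1))})\to H_\ast(X_\beta)$; reparameterising this, it becomes a shift by $\sqrt{2}(d+1)$ from $\ri$ to the nearest available $X$-scale from above. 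The commutativity conditions for an interleaving are then exactly the two identities $\theta_\ast=\phi_\ast\circ\psi_\ast$ and $g_\ast=\psi_\ast\circ\phi_\ast$ furnished by Lemma~\ref{lem:commutes} on the two square portions of diagram~(\ref{eqn:intl-orig}); no further topological argument is needed at this point.

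Third, I would combine the two per-step shifts with the discretisation overhead to bound the interleaving constant. The shift along $\psi_\ast$ is $2(d+1)$ and along $\phi_\ast$ is $\sqrt{2}(d+1)$; when we interpolate to an arbitrary Rips scale $\alpha$ using $\widetilde{X}$, we incur an additional scale ratio governed by the gap $(2(d+1))^2$ between consecutive $s_k$. I would check that by choosing the rounding convention ``nearest $s_k$ above'' in one direction and ``nearest $s_k$ below'' in the other, the worst-case multiplicative shift needed to close each side of the interleaving is at most $6(d+1)$. Finally, invoking Lemma~\ref{lem:commutes} once more shows that the extended maps still make the required squares commute on homology, so the two modules are multiplicatively $6(d+1)$-interleaved as required.

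The main technical obstacle is exactly this last step: getting a clean constant $6(d+1)$ (rather than a naive $O(d^3)$ that one gets from mechanically composing $\sqrt{2}(d+1)$, $2(d+1)$, and a full $(2(d+1))^2$ discretisation gap). The key is to exploit that the interleaving squares only need to close up to the transition $\theta$ inside $\widetilde{X}$ (which already encodes one ``round trip'' through $\ri$), so the discretisation overhead does not stack multiplicatively with the shifts of $\phi_\ast$ and $\psi_\ast$ but can be absorbed into them via careful alignment of scales.
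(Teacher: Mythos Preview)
Your route is genuinely different from the paper's, and the part you flag as the ``main technical obstacle'' is exactly the part the paper avoids by citation. The paper's proof is two lines: diagram~(\ref{eqn:intl-orig}) together with Lemma~\ref{lem:commutes} shows that, on the logarithmic scale, the two modules are \emph{weakly} $\eps$-interleaved in the sense of~\cite{ch-proximity} with $\eps=2(d+1)$; then Theorem~4.3 of~\cite{ch-proximity} is invoked as a black box to convert a weak $\eps$-interleaving into a $3\eps$ bound, giving $3\cdot 2(d+1)=6(d+1)$. No step-function extension, no rounding convention, no scale alignment is carried out explicitly.

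What you propose---extending the discrete $X$-module to a continuous index set and building a genuine strong interleaving by hand---is in principle workable, but it amounts to re-deriving the weak-to-strong conversion of~\cite{ch-proximity} in this particular instance. Your sketch stops short of doing so: you assert that the discretisation overhead ``can be absorbed'' into the $\phi_\ast$ and $\psi_\ast$ shifts by careful alignment, but you do not exhibit the alignment, and the naive combination of a $\sqrt{2}(d+1)$ shift, a $2(d+1)$ shift, and a half-grid step of $(2(d+1))^2$ already overshoots $6(d+1)$. Unless you specifically want to avoid the external reference, the cleaner path is the paper's: recognise the diagram as a weak interleaving and quote Theorem~4.3 of~\cite{ch-proximity} for the factor of~$3$.
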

\begin{proof}
Lemma~\ref{lem:commutes} proves that on the logarithmic scale, the two filtrations 
are \emph{weakly $\eps$-interleaved} with $\eps=2(d+1)$, in the sense
of~\cite{ch-proximity}. Theorem~4.3 of~\cite{ch-proximity} asserts
that the bottleneck distance of the filtrations is at most $3\eps$.
\end{proof}

\subparagraph*{Complexity bounds}
We exploit the non-degenerate configuration of the permutahedral tessellation to prove that $X_\beta$ is not too large.
We let $X_\beta^{(k)}$ denote the $k$-skeleton of $X_\beta$.

\begin{theorem}
\label{theorem:appsize}
For any $\beta$, $X_\beta^{(k)}$ has at most $n2^{O( d\log k)}$ simplices.
\end{theorem}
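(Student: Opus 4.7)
The plan is to reduce the counting of simplices in $X_\beta^{(k)}$ to a lattice-point count in $\Del$ combined with the face-count bound for the permutahedron from Lemma~\ref{lem:perm_no_of_faces}. The key observation is that $X_\beta$ is, by construction, the subcomplex of $\Del$ induced by the set $V_\beta$ of centers of full permutahedra, so every $j$-simplex of $X_\beta$ is in particular a $j$-simplex of $\Del$ all of whose vertices are full. Since a point of $P$ lies in the interior of exactly one permutahedron, we have $|V_\beta|\le n$.

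First I would fix a dimension $j\le k$ and count the number of $j$-simplices of $\Del$ that are incident to a single, arbitrary lattice vertex $o$. By translation invariance of $A^\ast$, this number is the same for every lattice point, so it suffices to take $o$ to be the origin. Proposition~\ref{prop:equiv} then identifies the $j$-simplices of $\Del$ incident to the origin with the $(d-j)$-faces of $\Pi$, and Lemma~\ref{lem:perm_no_of_faces} bounds the number of these by $2^{2(d+1)\log_2(j+1)}$.

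Second I would turn this vertex-centered count into a global count. Summing the local bound over the at most $n$ vertices of $X_\beta$ and dividing by $j+1$ (since each $j$-simplex is counted once for each of its vertices) yields
\[
|X_\beta^{(j)}\setminus X_\beta^{(j-1)}|\le \frac{n}{j+1}\,2^{2(d+1)\log_2(j+1)}.
\]
Finally, summing over $j=0,1,\ldots,k$ and absorbing the extra factor of $k+1$ into the exponent gives
\[
|X_\beta^{(k)}|\le \sum_{j=0}^{k}\frac{n}{j+1}\,2^{2(d+1)\log_2(j+1)}\le n\,2^{2(d+1)\log_2(k+1)+\log_2(k+1)}=n\,2^{O(d\log k)},
\]
as claimed.

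I do not expect a serious obstacle here: the bulk of the work has already been done in Section~\ref{section:permuto}, and the argument is essentially bookkeeping. The only mildly delicate point is the index translation in Proposition~\ref{prop:equiv} (a $j$-simplex through the origin corresponds to a $(d-j)$-face of $\Pi$, not a $j$-face) and the verification that the global factor $k+1$ coming from summing over dimensions really is swallowed by the $2^{O(d\log k)}$ bound; both are routine.
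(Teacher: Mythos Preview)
Your argument is correct and follows essentially the same route as the paper: bound the number of simplices through a fixed vertex via Proposition~\ref{prop:equiv} and Lemma~\ref{lem:perm_no_of_faces}, then multiply by the at most $n$ vertices of $X_\beta$. You are slightly more explicit than the paper in dividing by $j+1$ and in summing over all dimensions $j\le k$, but these refinements do not change the argument.
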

\begin{proof}
We fix $k$ and a vertex $v$ of $V_\beta$. Recall that $v$ represents a permutahedron, 
which we also denote by $\Pi(v)$.
By definition, any $k$-simplex containing $v$ corresponds to an intersection of $(k+1)$ permutahedra,
involving $\Pi(v)$. By Proposition~\ref{prop:equiv}, such an intersection
corresponds to a $(d-k)$-face of $\Pi(v)$. Therefore, the number of $k$-simplices involving $v$ 
is bounded by the number of $(d-k)$-faces of the permutahedron, which is $2^{O( d\log k)}$
using Lemma~\ref{lem:perm_no_of_faces}. 
The bound follows because $X_\beta$ has at most $n$ vertices.
\end{proof}
\begin{theorem}
\label{thm:complexity_approx}
For any $\beta$, $X_\beta^{(k)}$ can be computed in $O(n2^{d}+k^2 2^{d}|X_\beta^{(k)}|)$ time. 
In particular, the construction takes $n2^{O( d\log k)}$
in the worst case.
\end{theorem}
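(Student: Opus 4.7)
The plan is to exhibit an explicit two-phase algorithm, bound the cost of each phase, and then substitute the size estimate from Theorem~\ref{theorem:appsize} to derive the worst-case bound.

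Phase~1 (computing $V_\beta$): For each input point $p\in P$, I would determine the unique lattice point $v_\beta(p)\in L_\beta$ whose permutahedral cell contains $p$. Because the reference permutahedron has $2^{d+1}-2=O(2^d)$ facets, each supported on an explicit hyperplane, one can perform this point location by a facet-walk starting from any lattice cell, or equivalently by a direct closest-point routine for $A_d^*$; either way, $O(2^d)$ work per input point is enough. I would store the resulting set $V_\beta$ in a hash table keyed by integer coordinates so that membership and neighbor lookups take $O(1)$ time. This phase therefore costs $O(n\cdot 2^d)$.

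Phase~2 (assembling $X_\beta^{(k)}$): Since $X_\beta$ is a flag complex by Lemma~\ref{lem:flag_prop}, it is determined by its $1$-skeleton, which is built first. For each $v\in V_\beta$ I enumerate its at most $2^{d+1}-2$ candidate neighbors in the tessellation (one per facet of $\Pi(v)$, by Lemma~\ref{lem:pi_adjacency}) and test each against the hash table; this costs $O(n\cdot 2^d)$ in total. Higher-dimensional simplices are then produced by incremental clique expansion: processing simplices in order of increasing dimension, for each $j$-simplex $\sigma$ with canonically chosen base vertex $v_0$ I iterate over the $O(2^d)$ neighbors $w$ of $v_0$ and check with $O(k)$ hash lookups whether $w$ is adjacent to every other vertex of $\sigma$; if so, $\sigma\cup\{w\}$ is a new $(j+1)$-simplex of $X_\beta$. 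A fixed global ordering of vertices ensures every simplex is generated at most $O(k)$ times, so the total work of Phase~2 (beyond building the $1$-skeleton) is $O(k^2\cdot 2^d\cdot |X_\beta^{(k)}|)$.

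Adding the two phases yields the claimed bound $O(n\cdot 2^d+k^2\cdot 2^d\cdot |X_\beta^{(k)}|)$. Substituting the estimate $|X_\beta^{(k)}|\le n\cdot 2^{O(d\log k)}$ from Theorem~\ref{theorem:appsize} gives the worst-case complexity $n\cdot 2^{O(d\log k)}$.

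The main obstacle will be justifying the $O(2^d)$-per-point bound for Phase~1 cleanly: one must argue either that a facet-walk in the permutahedral tessellation terminates in $O(2^d)$ steps, or simply invoke the standard $A_d^*$ decoding routine (whose running time is absorbed by the $2^d$ factor). A secondary technical point is the flag-expansion bookkeeping: hashing simplices by their sorted vertex tuple and always extending via the smallest-indexed vertex guarantees each simplex is produced $O(k)$ times, which is precisely what produces the $k^2\cdot 2^d$ factor per simplex after accounting for the $O(k)$ cost of each adjacency verification against the $O(2^d)$ candidate extensions.
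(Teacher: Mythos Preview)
Your proposal is correct and follows essentially the same two-phase strategy as the paper: locate each point's enclosing permutahedron, enumerate the $O(2^d)$ Delaunay neighbors of each full cell to get the $1$-skeleton, then expand cliques one dimension at a time with $O(k^2 2^d)$ work charged to each output simplex. The only notable difference is in Phase~1, where the paper invokes the explicit $A_d^\ast$ decoding algorithm of Conway--Sloane (inspecting $O(d^2)$ lattice points per input point) rather than a facet-walk; this is cleaner than your sketch and removes the obstacle you flag, but either cost is dominated by the $O(n2^d)$ term for edge enumeration, so the final bound is unaffected.
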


\begin{proof}
To find the vertices of $X_\beta$, we find,
for each $p\in P$, the closest point to $p$ in the scaled lattice $L_\beta$. For that, 
we use the algorithm from~\cite[Chap.20]{sloway-book} which
first finds the closest point in the coarser lattice $A_d$ and then
inspects a neighborhood of that lattice point to find the closest
point in $L_\beta$. 
This algorithm inspects at most $O(d^2)$ lattice points, thus finding the vertex set
runs in $O(nd^2)$ time.

To find the edges of $X_\beta$, we fix a vertex $v\in V_\beta$ and 
inspect all the $2^d$ neighbors, checking for each neighbor whether it
is in $V_\beta$ or not. This can be done in time $O(n2^d)$ time.

Finally, to find the higher-dimensional simplices, we simply compute
the flag complex over the obtained graph (Lemma~\ref{lem:flag_prop}).
For every $v\in V_\beta$ and any $k$-simplex $\sigma\in X_\beta$ involving $v$, 
we search for co-facets of $\sigma$: 
for every neighbor $w$ not involved in $X_\beta$, we test whether $w\ast\sigma$
is a $(k+1)$-simplex of $X_\beta$. This test is combinatorial and costs $O(k^2)$ time.
Consequently, for every simplex encountered, we spend an overhead of $O(k^2 2^d)$.
\end{proof}

\subparagraph{Dimension reduction}
For large $d$, our approximation complex plays nicely together with dimension reduction techniques. 
We start with noting that interleavings satisfy the triangle inequality.
This result is folklore; 
see~\cite[Thm 3.3]{bs-categorization} for a proof
in a generalized context.

\begin{lemma}
\label{lem:interleaving_transitivity}
Let $(A_\beta)$, $(B_\beta)$, and $(C_\beta)$ be persistence modules.
If $(A_\beta)$ is a $t_1$-approximation of $(B_\beta)$ and $(B_\beta)$ is a $t_2$-approximation of $(C_\beta)$,
then $(A_\beta)$ is a $(t_1t_2)$-approximation of $(C_\beta)$.
\end{lemma}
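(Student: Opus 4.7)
The plan is to unpack the definition of multiplicative interleaving and compose the given interleaving maps. By hypothesis, there exist linear maps
\[
\phi_1: A_\beta \to B_{t_1\beta},\qquad \psi_1: B_\beta \to A_{t_1\beta},
\]
\[
\phi_2: B_\beta \to C_{t_2\beta},\qquad \psi_2: C_\beta \to B_{t_2\beta},
\]
each commuting with the internal maps of the respective modules, and satisfying the usual round-trip identities (composing $\phi_i$ with $\psi_i$ in either order recovers the internal map that shifts the scale by $t_i^2$).

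I would then define the candidate interleaving between $(A_\beta)$ and $(C_\beta)$ as
\[
\Phi_\beta := (\phi_2)_{t_1\beta} \circ (\phi_1)_\beta \colon A_\beta \to C_{t_1 t_2 \beta},
\qquad
\Psi_\beta := (\psi_1)_{t_2\beta} \circ (\psi_2)_\beta \colon C_\beta \to A_{t_1 t_2 \beta}.
\]
Here the indexing of $\phi_2$ and $\psi_1$ is forced by the fact that the output of the first map lives at a shifted scale. The key step is to verify that $\Phi$ and $\Psi$ commute with the internal maps of $(A_\beta)$ and $(C_\beta)$ and that the round-trip compositions $\Psi_{t_1 t_2\beta} \circ \Phi_\beta$ and $\Phi_{t_1 t_2\beta}\circ\Psi_\beta$ coincide with the structure maps of $(A_\beta)$ and $(C_\beta)$ that shift the scale by $(t_1 t_2)^2$. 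Both facts follow by ``pasting'' the two commuting diagrams that witness the individual interleavings: every square produced by $\Phi$ (or $\Psi$) decomposes into one square from the $(A,B)$-interleaving and one from the $(B,C)$-interleaving, each of which commutes by hypothesis.

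There is no real obstacle here; the only thing to be careful about is bookkeeping of indices, since the ``shift'' in a multiplicative interleaving is a scalar multiplication rather than an addition. A clean alternative would be to pass to the logarithmic reparametrization $\beta \mapsto \log\beta$, under which multiplicative $c$-interleavings become additive $(\log c)$-interleavings; then the statement reduces to the additive triangle inequality for interleaving distance, which is a standard fact (see e.g.\ \cite[Thm 3.3]{bs-categorization}). Invoking this correspondence, together with $\log(t_1 t_2) = \log t_1 + \log t_2$, yields the lemma immediately.
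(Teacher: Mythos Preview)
Your proposal is correct. The paper does not actually supply its own proof of this lemma; it simply states that the result is folklore and defers to \cite[Thm 3.3]{bs-categorization}, which is precisely the reference you invoke in your alternative argument via the logarithmic reparametrization. Your direct diagram-pasting argument is the standard one and is more explicit than what the paper provides.
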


The following statement is a simple application of interleaving distances from~\cite{ch-proximity}.
We provide a proof in the appendix.

\begin{lemma}
\label{lem:generic_reduction}
Let $f: P \rightarrow \R^m$ be an injective map such that 
\[\xi_1\distance{p}{q} \le\distance{f(p)}{f(q)}\le \xi_2\distance{p}{q}\]
for some constants $\xi_1\leq 1\leq\xi_2$.
Let $\rid_{\alpha}$ denote the Rips complex of the point set $f(P)$.
Then, the persistence module $(H_{\ast}(\rid_{\alpha}))_{\alpha\geq 0}$ is an $\frac{\xi_2}{\xi_1}$-approximation of  
$(H_{\ast}(\ri_{\alpha}))_{\alpha\ge 0}$.
\end{lemma}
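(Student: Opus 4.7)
The goal is to establish a multiplicative $(\xi_2/\xi_1)$-interleaving between the persistence modules $(H_*(\ri_\alpha))$ and $(H_*(\rid_\alpha))$ in the sense of Section~\ref{section:backgnd}, which is exactly what it means for one to be a $(\xi_2/\xi_1)$-approximation of the other. Since $f$ is injective on $P$, its inverse $f^{-1}: f(P) \to P$ is well-defined as a set map on vertex sets. The overall plan is therefore to promote $f$ and $f^{-1}$ to simplicial maps between the Rips filtrations at appropriately rescaled parameters, and to observe that the resulting interleaving diagram commutes at the simplicial level.

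The first step is to construct, for each $\alpha \ge 0$, simplicial maps $\phi_\alpha: \ri_\alpha \to \rid_{(\xi_2/\xi_1)\alpha}$ and $\psi_\alpha: \rid_\alpha \to \ri_{(\xi_2/\xi_1)\alpha}$ induced on vertices by $f$ and $f^{-1}$, respectively. Because a Rips complex is a flag complex, it suffices to verify that edges are preserved. For $\phi_\alpha$: given an edge $(p,q) \in \ri_\alpha$, the upper bound gives $\|f(p)-f(q)\| \le \xi_2 \|p-q\|$, so $(f(p),f(q)) \in \rid_{\xi_2 \alpha}$, and the inclusion $\rid_{\xi_2 \alpha} \subseteq \rid_{(\xi_2/\xi_1)\alpha}$ uses $\xi_1 \le 1$. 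The construction of $\psi_\alpha$ is symmetric: for an edge $(x,y) \in \rid_\alpha$, the lower bound applied to $p=f^{-1}(x)$, $q=f^{-1}(y)$ yields $\|p-q\| \le \|x-y\|/\xi_1$, so $(p,q) \in \ri_{\alpha/\xi_1} \subseteq \ri_{(\xi_2/\xi_1)\alpha}$, where the last inclusion uses $\xi_2 \ge 1$.

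The final step is to verify the two interleaving squares commute, i.e.\ that $\psi_{(\xi_2/\xi_1)\alpha} \circ \phi_\alpha$ and $\phi_{(\xi_2/\xi_1)\alpha} \circ \psi_\alpha$ coincide with the respective inclusion maps into the complexes at scale $(\xi_2/\xi_1)^2 \alpha$. This is immediate on the simplicial level since $f^{-1} \circ f$ and $f \circ f^{-1}$ are identities on the vertex sets of $P$ and $f(P)$; unlike Lemma~\ref{lem:commutes}, no contiguity argument is required because the underlying vertex maps are literal inverses. Applying the homology functor, the two modules are multiplicatively $(\xi_2/\xi_1)$-interleaved, which gives the claim by definition. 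There is no substantial obstacle in the argument; the only bookkeeping concern is to pick a single common interleaving constant that simultaneously dominates both distortion factors $\xi_2$ and $1/\xi_1$, and $\xi_2/\xi_1$ is the convenient choice that works for both directions.
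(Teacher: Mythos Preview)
Your proof is correct and follows essentially the same approach as the paper: both construct the interleaving maps from the vertex bijections $f$ and $f^{-1}$, use the flag property of Rips complexes together with the distortion bounds to verify they are simplicial, and observe that the required diagrams commute already at the simplicial level since the compositions are identities on vertices. The only cosmetic difference is that the paper swaps the names $\phi$ and $\psi$ relative to your convention.
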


As a first application, we show that we can shrink the approximation size from Theorem~\ref{theorem:appsize}
for the case $d\gg \log n$, only worsening the approximation quality by a constant factor.

\begin{theorem}
\label{thm:jl_dr}
Let $P$ be a set of $n$ points in $\R^d$. There exists a constant $c$ and a discrete filtration of the form
$\left(\bar{X}_{(c\log n)^{2k}}\right)_{k\in\Z}$
that is $(3c\log n)$-interleaved with the Rips filtration of $P$ and at each scale $\beta$,
$\bar{X}^{(k)}_\beta$ has only $n^{O(\log k)}$ simplices. 
Moreover, we can compute, with high success probability, a complex $\bar{X}^{(k)}_\beta$ with this property
in deterministic running time $O(dn\log n)+k^2 n^{O(1)}|X^{(k)}_\beta| = n^{O(\log k)}$.
\end{theorem}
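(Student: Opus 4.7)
The plan is to compose the permutahedral approximation scheme (Theorem~\ref{theorem:appratio}) with a Johnson--Lindenstrauss style dimension reduction, using Lemma~\ref{lem:generic_reduction} and Lemma~\ref{lem:interleaving_transitivity} as the glue. Concretely, I would first invoke the JL lemma to obtain, with high probability, a linear map $f:P\rightarrow \R^{d'}$ with $d'=O(\log n)$ and constant distortion; for instance, one can arrange $\tfrac{1}{2}\|p-q\|\leq \|f(p)-f(q)\|\leq \tfrac{3}{2}\|p-q\|$ for every pair $p,q\in P$. Applying Lemma~\ref{lem:generic_reduction} with $\xi_2/\xi_1=3$, the Rips filtration $(\rid_\alpha)$ of $f(P)$ in $\R^{d'}$ is a $3$-approximation of the Rips filtration $(\ri_\alpha)$ of $P$.

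Next, I would run the approximation scheme of Section~\ref{section:appsch} on the point set $f(P)\subset\R^{d'}$. Writing $\bar{X}_\beta$ for the resulting permutahedral complex in dimension $d'$, Theorem~\ref{theorem:appratio} gives a discrete filtration $\bigl(\bar{X}_{\beta(2(d'+1))^{2k}}\bigr)_{k\in\Z}$ that is a $6(d'+1)=O(\log n)$-approximation of $(\rid_\alpha)$. Composing the two interleavings via Lemma~\ref{lem:interleaving_transitivity} yields an $O(\log n)$-approximation of the original Rips filtration $(\ri_\alpha)$, so the scale factor $(2(d'+1))^2=(c\log n)^2$ for a suitable constant $c$ produces the claimed filtration $(\bar{X}_{(c\log n)^{2k}})_{k\in\Z}$ with the stated $(3c\log n)$ interleaving bound.

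For the size bound, I would plug $d=d'=O(\log n)$ into Theorem~\ref{theorem:appsize}, giving
\[
|\bar{X}_\beta^{(k)}|\;\leq\; n\cdot 2^{O(d'\log k)}\;=\;n\cdot 2^{O(\log n\cdot \log k)}\;=\;n^{O(\log k)}.
\]
For the running time, the JL projection can be computed in $O(dn\log n)$ time (e.g.\ using a dense Gaussian matrix of size $d\times d'$ or a fast JL transform), after which Theorem~\ref{thm:complexity_approx} applied in dimension $d'$ costs $O(n\,2^{d'}+k^2\,2^{d'}\,|\bar{X}_\beta^{(k)}|)$. Since $2^{d'}=n^{O(1)}$, this collapses to $k^2 n^{O(1)}|\bar{X}_\beta^{(k)}|=n^{O(\log k)}$, giving the deterministic bound announced in the theorem (the only randomized ingredient is the success of the JL embedding).

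The main subtlety I anticipate is not any single hard argument but the careful bookkeeping of approximation factors through the chain JL~$\rightarrow$ Lemma~\ref{lem:generic_reduction}~$\rightarrow$ Theorem~\ref{theorem:appratio}~$\rightarrow$ Lemma~\ref{lem:interleaving_transitivity}, and in particular ensuring that the scale ratio $(c\log n)^{2}$ between consecutive complexes in the discrete filtration is compatible with the multiplicative interleaving constants so that the final approximation quality is indeed $O(\log n)$ and not $O(\log^{2} n)$. The other small point to verify is that $2^{d'}$, which appears as a base overhead in the combinatorial construction of $\bar{X}_\beta$, is absorbed by $n^{O(1)}$ after dimension reduction; this is precisely why the exponential-in-$d$ cost of the exact permutahedral method becomes polynomial in $n$ once one first projects to $d'=O(\log n)$.
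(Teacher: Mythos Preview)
Your proposal is correct and follows essentially the same route as the paper: invoke Johnson--Lindenstrauss with $\eps=1/2$ to land in $\R^{d'}$ with $d'=O(\log n)$ and distortion $\xi_2/\xi_1=3$, apply Lemma~\ref{lem:generic_reduction}, then Theorem~\ref{theorem:appratio} in dimension $d'$, and compose via Lemma~\ref{lem:interleaving_transitivity}; the size and runtime bounds are obtained exactly as you describe by substituting $d'=O(\log n)$ into Theorems~\ref{theorem:appsize} and~\ref{thm:complexity_approx}. The bookkeeping concerns you flag are real but routine, and the paper handles them in the same way you outline.
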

\begin{proof}
The famous lemma of Johnson and Lindenstrauss~\cite{jl-lemma} asserts the existence of a map $f$
as in Lemma~\ref{lem:generic_reduction} for $m=\lambda\log n/\eps^2$ with some absolute constant $\lambda$
and $\xi_1=(1-\eps)$, $\xi_2=(1+\eps)$. Choosing $\eps=1/2$, we obtain that $m=O(\log n)$ and $\xi_2/\xi_1=3$.
With $\rid_{\alpha}$ the Rips complex of the Johnson-Lindenstrauss transform, we have therefore that
$(H_{\ast}(\rid_\alpha))_{\alpha\ge 0}$ is a $3$-approximation of 
$(H_{\ast}(\ri_\alpha))_{\alpha\ge 0}$.
Moreover, using the approximation scheme from this section,
we can define a filtration $(\bar{X}_\beta)_{\beta\ge 0}$ whose induced persistence module
$(H_{\ast}(X_{\beta}))_{\beta\ge 0}$ is a $6(m+1)$-approximation of 
$(H_{\ast}(\rid_\alpha))_{\alpha\ge 0}$, and its size at each scale 
is $n2^{O(\log n\log k)}=n^{O(\log k)}$.
The first half of the result follows using Lemma~\ref{lem:interleaving_transitivity}.

The Johnson-Lindenstrauss lemma further implies that an orthogonal projection to a randomly
chosen subspace of dimension $m$ will yield an $f$ as above, with high probability.
Our algorithm picks such a subspace, projects all points into this subspace
(this requires $O(dn\log n)$ time) and applies the approximation scheme for the projected point set.
The runtime bound follows from Theorem~\ref{thm:complexity_approx}.
\end{proof}

Note that for $k=\log n$, the approximation complex from the previous theorem is of size $n^{O(\log\log n)}$
and thus super-polynomial in $n$. Using a slightly more elaborated dimension reduction result
by Matou\v{s}ek~\cite{mt-embedding}, we can get a size bound polynomial in $n$,
at the price of an additional $\log n$-factor in the approximation quality. Let us first state Matou\v{s}ek result (whose proof follows a similar strategy
as for the Johnson-Lindenstrauss lemma):

\begin{theorem}
\label{thm:matousek_dr_generalized}
Let $P$ be an $n$-point set in $\R^d$. Then, a random orthogonal projection into $\R^k$ for 
$3\le k\le C\log n$ distorts pairwise distances in $P$ by at most $O(n^{2/k}\sqrt{\log n/k})$. 
The constants in the bound depend only on $C$.
\end{theorem}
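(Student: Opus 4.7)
The plan is to mimic the Johnson--Lindenstrauss argument, but with sharper tail bounds that take advantage of the small-$k$ regime where the standard $(1\pm\eps)$ guarantee is vacuous. Fix a unit vector $v\in\R^d$ and let $\pi:\R^d\to\R^k$ be the orthogonal projection onto a uniformly random $k$-dimensional subspace. The squared length $\|\pi(v)\|^2$ is $\mathrm{Beta}(k/2,(d-k)/2)$-distributed with mean $k/d$, so one naturally works with the rescaled map $\sqrt{d/k}\,\pi$ and asks how far its induced length can deviate from $1$.

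First I would establish two quantitative tail bounds, treating the upper and lower ends separately. The upper tail is the usual exponential chi-squared concentration:
\[
\Pr\bigl[\|\pi(v)\|^2 \ge (1+\eps)\,k/d\bigr] \le e^{-c\eps^2 k}
\]
for an absolute constant $c$ and $\eps\in(0,1)$. The lower tail behaves very differently: near $0$ the Beta density is proportional to $x^{k/2-1}$, and integrating gives the \emph{polynomial} bound
\[
\Pr\bigl[\|\pi(v)\|^2 \le \delta\,k/d\bigr] \le (C\delta)^{k/2}
\]
for some absolute constant $C$ and all small $\delta>0$. This polynomial lower tail is exactly what allows $k$ to be taken as small as a constant; the exponential upper tail alone could never deliver that.

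Next I would set parameters and union-bound. Choose $\eps = C_1\sqrt{\log n /k}$ so that the upper-tail failure probability is at most $n^{-2}$, and choose $\delta = c_2\,n^{-4/k}$ so that the lower-tail failure probability is also at most $n^{-2}$. A union bound over the $\binom{n}{2}$ pairwise differences $p-q$ shows that, with positive probability, every pair $p,q\in P$ satisfies
\[
c_2^{1/2}\,n^{-2/k}\sqrt{k/d}\,\distance{p}{q}
\;\le\; \|\pi(p)-\pi(q)\|
\;\le\; \sqrt{1+\eps}\,\sqrt{k/d}\,\distance{p}{q}.
\]
The multiplicative distortion induced by $\sqrt{d/k}\,\pi$ is therefore at most $\sqrt{1+\eps}\,/\,(c_2^{1/2}\,n^{-2/k})$, and the hypothesis $k\le C\log n$ forces $\sqrt{\log n/k}\ge 1/\sqrt{C}$, so $\sqrt{1+\eps}=O(\sqrt{\log n/k})$. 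The distortion simplifies to $O\bigl(n^{2/k}\sqrt{\log n/k}\bigr)$, with all hidden constants depending only on $C$, as required.

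The step I expect to be the main obstacle is the polynomial lower-tail bound: the exponent $k/2$ must be tight, because it is precisely what balances the $n^2$ pairs in the union bound to yield the $n^{2/k}$ factor in the theorem. A loose estimate with exponent $ck$ for $c<1/2$ would degrade to $n^{1/(ck)}$ and miss the advertised rate. Establishing the sharp exponent requires either the explicit Beta-density calculation or an equivalent spherical-cap volume estimate, carried out uniformly in both $d$ and $k$; everything else in the argument is routine concentration-of-measure.
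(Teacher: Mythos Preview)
The paper does not prove this theorem; it is quoted as Matou\v{s}ek's result and the paper only remarks that the proof ``follows a similar strategy as for the Johnson--Lindenstrauss lemma.'' Your sketch is essentially Matou\v{s}ek's original argument, so it agrees with what the paper points to.

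One technical wrinkle deserves attention. Your upper-tail bound $e^{-c\eps^2 k}$ is stated only for $\eps\in(0,1)$, but the choice $\eps=C_1\sqrt{\log n/k}$ exceeds $1$ whenever $k$ is well below $\log n$, which is most of the range $3\le k\le C\log n$ under consideration. In that regime the relevant Beta/chi-squared upper tail is subexponential rather than subgaussian: one has $\Pr[\|\pi(v)\|^2\ge t\,k/d]\le e^{-ctk}$ for $t\ge 2$, and setting $t=\Theta(\log n/k)$ directly yields $\sqrt{t}=O(\sqrt{\log n/k})$ as the upper distortion factor. Your concluding step ``$\sqrt{1+\eps}=O(\sqrt{\log n/k})$ because $\sqrt{\log n/k}$ is bounded below by a constant'' is valid only when $\eps\le 1$, i.e., at the very top of the allowed range for $k$; as written it does not cover, say, $k=3$. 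The fix is routine and does not touch the lower-tail analysis, and your identification of the polynomial lower tail with sharp exponent $k/2$ as the crux of the argument is exactly right.
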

By setting $k:=\frac{4\log n}{\log\log n}$ in Matou\v{s}ek's result, we see that this results
in a distortion of at most $O(\sqrt{\log n \log\log n})$.

\begin{theorem}
\label{thm:matousek_dr}
Let $P$ be a set of $n$ points in $\R^d$. There exists a constant $c$ and a discrete filtration of the form
$\left(\bar{X}_{\big(c\log n \big(\frac{\log n}{\log \log n}\big)^{1/2}\big)^{2k}}\right)_{k\in\Z}$
that is $3c\log n \big(\frac{\log n}{\log \log n}\big)^{1/2}$-interleaved
with the Rips filtration on $P$ and at each scale $\beta$,
$\bar{X}^{(k)}_\beta$ has at most $n^{O(1)}$ simplices.
Moreover, we can compute, with high success probability, a complex $\bar{X}^{(k)}_\beta$ with this property
in deterministic running time $n^{O(1)}$.
\end{theorem}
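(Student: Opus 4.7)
The plan is to follow the proof of Theorem~\ref{thm:jl_dr} step for step, replacing Johnson--Lindenstrauss by Matou\v{s}ek's result (Theorem~\ref{thm:matousek_dr_generalized}) with the parameter choice $m := 4\log n/\log\log n$ that was singled out just above. I would first invoke Theorem~\ref{thm:matousek_dr_generalized} to produce, with high probability, a random orthogonal projection $f: P \to \R^m$ whose pairwise distortion satisfies $\xi_2/\xi_1 = O(\sqrt{\log n\,\log\log n})$. By Lemma~\ref{lem:generic_reduction}, the Rips filtration $\rid_\alpha$ of the projected point set $f(P)$ is then an $O(\sqrt{\log n\,\log\log n})$-approximation of the original Rips filtration $\ri_\alpha$.

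Next, I would apply the permutahedral construction of this section to $f(P)\subset \R^m$, obtaining the discrete filtration $\bar{X}_\beta$. Theorem~\ref{theorem:appratio} guarantees that this filtration, indexed at the scales $\beta_0 (2(m+1))^{2k}$, is a $6(m+1) = O(\log n/\log\log n)$-approximation of the projected Rips filtration. Chaining these two interleavings through Lemma~\ref{lem:interleaving_transitivity} yields an overall approximation ratio of
\[
    O\bigl(\sqrt{\log n\,\log\log n}\bigr)\cdot O\bigl(\tfrac{\log n}{\log\log n}\bigr) \;=\; O\bigl(\log n\,\sqrt{\tfrac{\log n}{\log\log n}}\bigr),
\]
which, after re-indexing to a common geometric step size and picking up the usual factor of $3$ from the weak interleaving bound of~\cite{ch-proximity}, matches the claimed $3c\log n\,(\log n/\log\log n)^{1/2}$ interleaving.

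The only real work is the size estimate, and this is precisely where Matou\v{s}ek's sharper target dimension pays off over Johnson--Lindenstrauss. By Theorem~\ref{theorem:appsize}, $\bar{X}^{(k)}_\beta$ has at most $n\,2^{O(m\log k)}$ simplices; since $\bar{X}_\beta$ lives in $\R^m$ its skeleton dimension is at most $m$, so $\log k \le \log m = O(\log\log n)$. Substituting $m = O(\log n/\log\log n)$ collapses the exponent to $O(\log n)$, producing the required polynomial size $n \cdot 2^{O(\log n)} = n^{O(1)}$. This is the sweet spot that plain Johnson--Lindenstrauss cannot reach: with $m = \Theta(\log n)$ one would get $m\log k = \Theta(\log n\log\log n)$ and only the quasi-polynomial bound of Theorem~\ref{thm:jl_dr}.

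For the running time, the random projection can be computed in $\mathrm{poly}(d,n)$ time, and because $m = O(\log n/\log\log n)$ we have $2^m = n^{o(1)}$, so Theorem~\ref{thm:complexity_approx} gives an $n^{O(1)}$ bound for the construction on $f(P)$. I expect no serious obstacle beyond bookkeeping with constants; the one place I would double-check carefully is the high-probability guarantee of Matou\v{s}ek's map, which must hold uniformly over all scales in order for the chained interleavings to be valid.
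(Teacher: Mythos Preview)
Your proposal is correct and follows essentially the same approach as the paper's own proof, which simply says ``follow the pattern of Theorem~\ref{thm:jl_dr}, use Matou\v{s}ek with $m=4\log n/\log\log n$, and substitute $m$ into the size and runtime bounds.'' You actually spell out the key step the paper leaves implicit, namely that $k\le m$ forces $\log k=O(\log\log n)$ so that $m\log k=O(\log n)$; this is exactly the calculation that yields the polynomial size. Your final remark about uniformity across scales is a slight over-worry: the random projection is applied once to $P$, and the resulting embedded point set is then used at every scale, so a single high-probability distortion bound suffices.
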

\begin{proof}
The proof follows the same pattern of Theorem~\ref{thm:jl_dr} with a few changes. We use Matou\v{s}ek's dimension
reduction result described in Theorem~\ref{thm:matousek_dr_generalized} with the projection dimension being
$m:=\frac{4\log n}{\log\log n}$. Hence, $\xi_2/\xi_1=O(\sqrt{\log n \log\log n})$ for the Rips construction.
The final approximation factor is $6(m+1)\xi_2/\xi_1$ which simplifies to
$O(\log n \big(\frac{\log n}{\log \log n}\big)^{1/2})$.
The size and runtime bounds follow by substituting the value of $m$ in the respective bounds.
\end{proof}

Finally, we consider the important generalization that $P$ is not given as an embedding in $\R^d$,
but as a point sample from a general metric space. 
We use the classical result by Bourgain~\cite{bg-metric} to
embed $P$ in Euclidean space with small distortion. In the language of Lemma~\ref{lem:generic_reduction},
Bourgain's result permits an embedding into $m=O(\log^2 n)$ dimensions with a distortion $\xi_2/\xi_1=O(\log n)$,
where the constants are independent of $n$.
Our strategy for approximating a general metric space consists of first embedding
it into $\R^{O(\log^2 n)}$, then reducing the dimension,
and finally applying our approximation scheme on the projected embedding.
The results are similar to Theorems~\ref{thm:jl_dr} and \ref{thm:matousek_dr}, except that the
approximation quality further worsens by a factor of $\log n$ due to Bourgain's embedding.
We only state the generalized version of Theorem~\ref{thm:matousek_dr},
omitting the corresponding generalization of Theorem~\ref{thm:jl_dr}. The proof is straight-forward with the same techniques
as before.

\begin{theorem}
\label{thm:bourgain_dr_generalized}
Let $P$ be a general metric space with $n$ points. There exists a constant $c$ and a discrete filtration of the form
$\left(\bar{X}_{\big(c\log^2 n(\frac{\log n}{\log \log n})^{1/2}\big)^{2k}}\right)_{k\in\Z}$
that is $3c\log^2 n(\frac{\log n}{\log \log n})^{1/2} $-interleaved with the Rips filtration
on $P$ and at each scale $\beta$, $\bar{X}^{(k)}_\beta$ has at most $n^{O(1)}$ simplices. 
Moreover, we can compute, with high success probability, a complex $\bar{X}^{(k)}_\beta$ with this property
in deterministic running time $n^{O(1)}$.
\end{theorem}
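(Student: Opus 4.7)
The plan is to compose three reductions in sequence, following the same template as Theorem~\ref{thm:matousek_dr} but prepending Bourgain's embedding as an initial step. First, I would invoke Bourgain's theorem to obtain an embedding $f_B:P\to\R^{m_1}$ with $m_1=O(\log^2 n)$ that satisfies the bi-Lipschitz hypothesis of Lemma~\ref{lem:generic_reduction} with $\xi_2/\xi_1=O(\log n)$. Applying that lemma, the Rips filtration of $f_B(P)$ is an $O(\log n)$-approximation of the Rips filtration of the original metric space $P$.

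Next, I would apply Theorem~\ref{thm:matousek_dr_generalized} to $f_B(P)\subset\R^{m_1}$, projecting randomly into $\R^m$ with target dimension $m:=\frac{4\log n}{\log\log n}$. Crucially, Matou\v{s}ek's bound depends only on the number of points and the target dimension, not the source dimension, so we obtain a further distortion factor of $O(n^{2/m}\sqrt{\log n/m})=O(\sqrt{\log n\log\log n})$. A second application of Lemma~\ref{lem:generic_reduction} then shows that the Rips filtration of the projected point set in $\R^m$ is an $O(\sqrt{\log n\log\log n})$-approximation of the Rips filtration of $f_B(P)$.

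Finally, I would run the permutahedral approximation scheme from this section on the projected $n$-point set in $\R^m$. By Theorem~\ref{theorem:appratio} the induced persistence module is a $6(m+1)=O(\log n/\log\log n)$-approximation of the Rips filtration of the projected points, and by Theorem~\ref{theorem:appsize} its $k$-skeleton has size $n\cdot 2^{O(m\log k)}=n\cdot 2^{O(\log n)}=n^{O(1)}$ for $k\le\log n$. Chaining the three approximation quality bounds through Lemma~\ref{lem:interleaving_transitivity} yields the multiplicative factor
\[
O(\log n)\cdot O(\sqrt{\log n\log\log n})\cdot O\!\left(\frac{\log n}{\log\log n}\right)=O\!\left(\log^2 n\left(\frac{\log n}{\log\log n}\right)^{1/2}\right),
\]
which matches the stated interleaving constant, and the discrete scale sequence is obtained by taking powers of this constant as in Theorem~\ref{thm:matousek_dr}.

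For the runtime, Bourgain's embedding can be computed in $n^{O(1)}$ time, the random projection in $m_1 n\cdot m=n^{O(1)}$ time, and the permutahedral construction in the time stated by Theorem~\ref{thm:complexity_approx}, which remains $n^{O(1)}$ since $m=O(\log n)$ and $|\bar X^{(k)}_\beta|=n^{O(1)}$. The only mild subtlety I anticipate is bookkeeping of failure probabilities of the two randomized stages, which is handled by a standard union bound over the constant number of random steps so that the overall success probability remains high.
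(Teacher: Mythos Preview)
Your proposal is correct and follows essentially the same route the paper takes: Bourgain's embedding into $\R^{O(\log^2 n)}$ (contributing an $O(\log n)$ factor via Lemma~\ref{lem:generic_reduction}), then Matou\v{s}ek's projection to dimension $m=\frac{4\log n}{\log\log n}$, then the permutahedral scheme, with the three factors multiplied by Lemma~\ref{lem:interleaving_transitivity}. The paper itself only says the proof is ``straight-forward with the same techniques as before,'' and your write-up spells this out accurately; the size bound $n\cdot 2^{O(m\log k)}=n^{O(1)}$ holds automatically for all $k$ since the complex lives in $\R^m$ and hence $k\le m$.
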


\section{A lower bound for approximation schemes}
\label{section:lowerbnd}
We describe a point configuration for which 
the \Cech filtration gives rise to a large number, 
say $N$, of features with ``large'' persistence,
relative to the scale on which the persistence appears.
Any $\eps$-approximation of the \Cech filtration, for $\eps$ small enough,
has to contain at least one interval per such feature in its
persistent barcode, yielding a barcode of size at least $N$.
This constitutes a lower bound on the size of the approximation itself,
at least if the approximation stems from a simplicial filtration:
in this case, the introduction of a new interval in the barcode requires
at least one simplex to be added to the filtration; also more generally,
it makes sense to assume that any representation of a persistence module
is at least as large as the size of the resulting persistence barcode.

To formalize what we mean by a ``large'' persistent feature, 
we call an interval $(\alpha,\alpha')$ 
of $(H_\ast(\cech_\alpha))_{\alpha\geq 0}$ 
\emph{$\delta$-significant} for $0<\delta<\frac{\alpha'-\alpha}{2\alpha'}$.
Our approach from above translates into the following statement:
\begin{lemma}
\label{lem:significance_lemma}
For $\delta>0$, and a point set $P$, let 
$N$ denote the number of $\delta$-significant intervals of $(H_\ast(\cech_\alpha))_{\alpha\geq 0}$.
Then, any persistence module $(X_\alpha)_{\alpha\geq 0}$ 
that is an $(1+\delta)$-approximation of
$(H_\ast(\cech_\alpha))_{\alpha\geq 0}$ has at least $N$ intervals in its barcode.
\end{lemma}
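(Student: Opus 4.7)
The plan is to reduce the statement to the algebraic stability theorem for persistence modules after a change of variable to the logarithmic scale. By definition, a $(1+\delta)$-approximation is multiplicatively $(1+\delta)$-interleaved with $(H_\ast(\cech_\alpha))_{\alpha\geq 0}$, which, as noted in Section~\ref{section:backgnd}, is equivalent to additive $\log(1+\delta)$-interleaving once the parameter is replaced by $\log\alpha$. Bottleneck stability (e.g.\ Theorem~4.3 of~\cite{ch-proximity}) then yields a partial matching between the barcodes of $(X_\alpha)$ and of $(H_\ast(\cech_\alpha))$ in which every unmatched interval has (log-scale) length at most $2\log(1+\delta)$.

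Next I would translate the significance condition into the same logarithmic coordinates. If $(\alpha,\alpha')$ is $\delta$-significant, i.e.\ $\delta<\frac{\alpha'-\alpha}{2\alpha'}$, then rearranging gives $\alpha'/\alpha > 1/(1-2\delta)$, so on the log-scale the interval has length strictly greater than $\log\bigl(1/(1-2\delta)\bigr)$. The key numerical step is to verify the inequality
\[
\log\!\Bigl(\tfrac{1}{1-2\delta}\Bigr) \;>\; 2\log(1+\delta),
\]
which is equivalent to $(1+\delta)^2(1-2\delta)<1$. Expanding gives $1-3\delta^2-2\delta^3$, which is strictly less than $1$ for every $\delta>0$; this is the only arithmetic needed.

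Combining the two observations, every $\delta$-significant interval of the \Cech module is too long on the log-scale to remain unmatched under the interleaving, so it is paired with a distinct interval in the barcode of $(X_\alpha)$. Since different \Cech intervals are paired with different $(X_\alpha)$-intervals, the barcode of $(X_\alpha)$ contains at least $N$ distinct intervals, as claimed.

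The main (and really only) delicate point is choosing the right form of bottleneck stability to invoke: one needs the formulation that gives a matching with an explicit length bound on unmatched intervals, not merely a bound on the bottleneck distance. Once that theorem is cited, everything else reduces to the elementary log-scale inequality above.
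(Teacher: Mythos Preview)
Your argument is correct and takes a genuinely different route from the paper. The paper argues directly from the interleaving diagram: for each $\delta$-significant interval $(\alpha,\alpha')$ it chooses $c=\alpha'/(1+\delta)$ and a small $\eps>0$ so that $\alpha/(1-\eps)\le c/(1+\delta)<c(1+\delta)\le\alpha'$, then chases the generator of that interval through the square
\[
H(\cech_{c(1-\eps)/(1+\delta)})\to H(\cech_{c(1+\delta)}),\qquad X_{c(1-\eps)/2}\to X_c,
\]
to exhibit a nonzero class in the approximation. Your approach instead passes to the log scale, invokes the algebraic stability theorem to obtain a bottleneck matching, and then uses the elementary inequality $(1+\delta)^2(1-2\delta)=1-3\delta^2-2\delta^3<1$ to show every $\delta$-significant bar is too long to be matched to the diagonal. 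The advantage of your route is that injectivity (distinct significant bars yield distinct bars of $(X_\alpha)$) comes for free from the matching, whereas the paper's diagram-chase produces one nonzero class per significant interval and leaves the ``these classes are distinct'' step implicit. The paper's approach, on the other hand, is more self-contained in that it only uses the interleaving maps themselves rather than the full stability/matching machinery. Your caveat about needing the matching formulation of stability (not just a numerical bottleneck bound) is exactly the right point to flag.
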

\begin{proof}
If $(\alpha,\alpha')$ is $\delta$-significant, that means that 
there exist some $\eps>0$ and $c\in(\alpha,\alpha')$ such that
$\alpha/(1-\eps)\leq c/(1+\delta)<c(1+\delta)\leq\alpha'$.
Any persistence module 
that is an $(1+\delta)$-approximation of $(H_\ast(\cech_\alpha))_{\alpha\geq 0}$
needs to represent an approximation 
of the interval in the range $(c(1-\eps)/2,c)$; in other words, there is
an interval corresponding to $(\alpha,\alpha')$ in the approximation.
See the appendix for more details.
\end{proof}
\subparagraph*{Setup}
We next define our point set for a fixed dimension $d$.
Consider the $A^\ast$ lattice with origin $o$.
Recall that $o$ has $2^{d+1}-2$ neighbors in the Delaunay triangulation $\Del$
of $A_d^\ast$, because its dual Voronoi polytope, the permutahedron $\Pi$,
has that many facets. We define $P$ as the union of $o$ with all its Delaunay
neighbors, yielding a point set of cardinality $2^{d+1}-1$.
As usual, we set $n:=|P|$, so that $d=\Theta(\log n)$.

We write $\Del_P$ for the Delaunay triangulation of $P$.
Since $P$ contains $o$ and all its neighbors, the Delaunay simplices of 
$\Del_P$ incident to $o$ are the same as the Delaunay simplices of $\Del$
incident to $o$. Thus, according to Proposition~\ref{prop:equiv},
a $(k-1)$-simplex of $\Del_P$ incident to $o$ corresponds to
a $(d-k+1)$-face of $\Pi$ and thus to an ordered $k$-partition of $[d+1]$.

Fix a integer parameter $\ell\geq 3$, to be defined later.
We call an ordered $k$-partition $(S_1,\ldots,S_k)$ \emph{good}, 
if $|S_i|\geq \ell$ for every $i=1,\ldots,k$. 
We define good Delaunay simplices and good permutahedron faces
accordingly using Proposition~\ref{prop:equiv}.

Our proof has two main ingredients:
First, we show that a good Delaunay simplex either gives birth to or kills
an interval in the \Cech module that has a lifetime
of at least $\frac{\ell}{8(d+1)^2}$. This justifies our notion of ``good'', 
since good $k$-simplices create features that have to be preserved
by a sufficiently precise approximation.
Second, we show that there are $2^{\Omega(d\log \ell)}$ good $k$-partitions,
so good faces are abundant in the permutahedron.

\subparagraph*{Persistence of good simplices.}
Let us consider our first statement. 
Recall that $\alpha_\sigma$ is the filtration value 
of $\sigma$ in the \Cech filtration.
It will be convenient to have an upper bound for $\alpha_\sigma$.
Clearly, such a value is given by the diameter of $P$.
It is not hard to see the following bound (compare Lemma~\ref{lemma:pmdiam}), 
which we state for reference:
\begin{lemma}
\label{lem:P_diam}
The diameter of $P$ is at most $2\sqrt{d}$. 
Consequently, $\alpha_\sigma\leq 2\sqrt{d}$ for each simplex $\sigma$
of the \Cech filtration.
\end{lemma}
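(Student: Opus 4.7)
The plan is to show that every point of $P$ lies within distance $\sqrt{d}$ of the origin $o$; the diameter bound then follows from the triangle inequality, and the alpha-value bound from the standard fact that the radius of a minimum enclosing ball never exceeds the diameter of the enclosed vertex set. By construction $P = \{o\} \cup V$, where $V$ is the set of $2^{d+1}-2$ Delaunay neighbors of $o$. Since $\|o\|=0$, the task reduces to bounding $\|p\|$ for each $p \in V$.

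Fix $p \in V$. Because $p$ is a Voronoi neighbor of $o$, the shared Voronoi facet $F = \Pi \cap (p+\Pi)$ is non-empty and lies on the perpendicular bisector hyperplane $H = \{x : x \cdot p = \|p\|^2/2\}$ of $o$ and $p$. Pick any $x \in F$. A direct Pythagorean calculation for points on $H$ gives $\|x\|^2 = \|x - p/2\|^2 + \|p/2\|^2$, so in particular $\|x\| \geq \|p\|/2$, with equality exactly at the midpoint $p/2$, the closest point of $H$ to $o$. On the other hand $x \in \Pi$, so $\|x\|$ is bounded by the common distance from $o$ to the vertices of $\Pi$, which the paper records as $\sqrt{d(d+2)/(12(d+1))}$. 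A routine inequality shows that this quantity is at most $\sqrt{d}/2$, and combining the two bounds yields $\|p\| \leq \sqrt{d}$.

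Thus $P$ is contained in the closed ball of radius $\sqrt{d}$ about $o$, so the triangle inequality gives $\diam(P) \leq 2\sqrt{d}$. For the second claim, $\alpha_\sigma$ is the radius of the minimum enclosing ball of the vertex set of $\sigma$; this is bounded above by the diameter of that vertex set, which in turn is at most $\diam(P) \leq 2\sqrt{d}$. No genuine obstacle arises; the only ingredient beyond the material already in Section~\ref{section:permuto} is the elementary Pythagorean identification of $p/2$ as the closest point of the bisector of $o$ and $p$ to $o$, which is why the author explicitly invites comparison with Lemma~\ref{lemma:pmdiam}.
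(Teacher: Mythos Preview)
Your proof is correct and fleshes out exactly what the paper leaves implicit: the paper offers no argument beyond the parenthetical ``compare Lemma~\ref{lemma:pmdiam}'', and your use of the circumradius of $\Pi$ together with the bisector geometry is the natural way to make that comparison precise. The only minor difference is that one could shortcut the Pythagorean step by observing (as the appendix computations confirm) that the facet barycenter equals $p/2$ and hence lies in $\Pi$, giving $\|p/2\|\le R$ directly; your version avoids relying on that fact and is therefore slightly more self-contained.
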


Recall that by fixing a simplex-wise filtration
of the \Cech filtration, it makes sense to talk about the persistence
of an interval associated to a simplex.
Fix a $(k-1)$-simplex $\sigma$ of $\Del_P$ incident to $o$ 
(which also belongs to the \Cech filtration).

\begin{lemma}
\label{lem:barycenter_lemma}
Let $f_\sigma$ be the $(d-k)$ face of $\Pi$ dual to $\sigma$,
and let $o_\sigma$ denote its barycenter.
Then, $\alpha_\sigma$ is the distance of $o_\sigma$ from $o$.
\end{lemma}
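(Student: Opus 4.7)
The plan is to identify $o_\sigma$ as the center of the minimum enclosing ball (MEB) of the vertex set of $\sigma$. Since $\alpha_\sigma$ equals the MEB radius by the nerve definition of the \Cech filtration, this will immediately give $\alpha_\sigma = \|o_\sigma - o\|$.

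First, I would observe that $f_\sigma$ is precisely the intersection $\bigcap_{v} V(v)$ of the Voronoi cells of the vertices $v$ of $\sigma$. By the defining inequalities of Voronoi cells, every point of $f_\sigma$ is equidistant from these vertices; in particular the barycenter $o_\sigma$ is equidistant from them at common distance $r := \|o_\sigma - o\|$, and the ball $B(o_\sigma, r)$ is enclosing, giving $\alpha_\sigma \leq r$. For the matching lower bound I would invoke the standard KKT characterization of the MEB: if a point $x$ is equidistant from a finite set $S$ and lies in $\mathrm{conv}(S)$, then $x$ is automatically the MEB center of $S$. Thus it suffices to prove $o_\sigma \in \mathrm{conv}(\sigma)$.

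For this last step, let $G_\sigma$ denote the subgroup of coordinate permutations of $[d+1]$ that preserves each block of the ordered partition $(S_1,\ldots,S_k)$ corresponding to $f_\sigma$. By construction $G_\sigma$ fixes $o_\sigma$ (since $f_\sigma$ is $G_\sigma$-invariant), and it fixes each non-origin vertex of $\sigma$ individually, because these vertices correspond bijectively to the $k-1$ facets of $\Pi$ adjacent to $f_\sigma$, which are indexed by the $k-1$ block-preserving coarsenings of $(S_1,\ldots,S_k)$ into two consecutive groups; the origin is of course also fixed. In the hyperplane representation, the $G_\sigma$-fixed affine subspace has dimension $k-1$ and, since it already contains the $k$ affinely independent vertices of $\sigma$, coincides with $\mathrm{aff}(\sigma)$, placing $o_\sigma \in \mathrm{aff}(\sigma)$. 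Equivalently, $o_\sigma$ is the classical circumcenter of $\sigma$. To upgrade this from $\mathrm{aff}(\sigma)$ to $\mathrm{conv}(\sigma)$, I would use the coordinate formulas from Section~\ref{section:permuto} to solve $o_\sigma = \sum_i \lambda_i v_i$ with $\sum_i \lambda_i = 1$ and verify $\lambda_i \geq 0$ by direct computation, exploiting the nested block structure of $(S_1,\ldots,S_k)$ and the explicit description of the lattice neighbors of $o$.

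The positivity check is the main obstacle: the symmetry argument quickly pins $o_\sigma$ to the correct $(k-1)$-dimensional affine subspace, but ruling out the ``obtuse'' case in which the classical circumcenter escapes the simplex requires an explicit coordinate computation rather than a conceptual one.
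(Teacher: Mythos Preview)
Your proposal is correct and, in fact, more careful than the paper's own argument. Both proofs rest on the same core observation: $o_\sigma$ is equidistant from the vertices of $\sigma$ (because $f_\sigma$ lies on the common Voronoi face) and lies in $\mathrm{aff}(\sigma)$, hence is the circumcenter. The paper establishes the latter via an orthogonality claim ($\vec{oo_\sigma}\perp\vec{po_\sigma}$ for every vertex $p$ of $f_\sigma$), which is equivalent to your symmetry argument under $G_\sigma$; your route is arguably cleaner because it makes the dimension count explicit.

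Where you go further is in distinguishing the circumradius from the MEB radius. The paper's two-line proof stops after identifying the circumcenter and does not verify that $o_\sigma\in\mathrm{conv}(\sigma)$; strictly speaking, this leaves the ``obtuse'' case unaddressed. Your plan to close this gap by solving $o_\sigma=\sum_j\lambda_j v_j$ in coordinates is sound and the computation is short: writing $s_i=|S_i|$ and letting $v_0=o$, $v_j$ the lattice neighbor corresponding to the $2$-partition $(S_1\cup\cdots\cup S_j,\,S_{j+1}\cup\cdots\cup S_k)$, one finds
\[
\lambda_j=\frac{s_j+s_{j+1}}{2(d+1)}\quad(j=1,\ldots,k-1),\qquad \lambda_0=\frac{s_k+s_1}{2(d+1)},
\]
all strictly positive. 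So the obstacle you flag is real but easily overcome, and your proof then strictly subsumes the paper's.
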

\begin{proof}
$o_\sigma$ is the closest point to $o$ on $f_\sigma$ because
$\vec oo_\sigma$
is orthogonal to $\vec po_\sigma  $ for any boundary vertex $p$ of $f_\alpha$.
Since $f_\sigma$ is dual to $\sigma$, all vertices of $\sigma$
are in same distance to $o_\sigma$.
\end{proof}
Recall $L_\sigma$ and $L^\ast_\sigma$
from Section~\ref{section:backgnd}
as the difference of the alpha value of $\sigma$ and its (co-)facets.

\begin{theorem}
\label{thm:pers_bound_for_good}
For a good simplex $\sigma$ of $\Del_P$, both $L_\sigma$ and $L^\ast_\sigma$
are at least $\frac{\ell}{24(d+1)^{3/2}}$.
\end{theorem}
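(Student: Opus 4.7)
My plan is to translate the claim into one about distances between orthogonal projections of the origin onto nested affine subspaces in the hyperplane containing $\Pi$, and to quantify the resulting displacements explicitly from the partition data supplied by Proposition~\ref{prop:equiv}. The starting point is Lemma~\ref{lem:barycenter_lemma}, which identifies $\alpha_\sigma$ with $\|o_\sigma\|$. Writing $\sigma$ as the ordered $k$-partition $(S_1,\ldots,S_k)$ of $[d+1]$ with every $|S_m|\geq\ell$ and averaging the coordinates over the vertices of $f_\sigma$, I find that $o_\sigma$ takes the single value $\mu_m=(a_m+b_m-d-2)/(2(d+1))$ at every position indexed by $S_m$, where $a_m,b_m$ are the smallest and largest sorted indices in the $m$-th block.

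For the facet bound, I fix a facet $\tau$ of $\sigma$ coming from merging two adjacent blocks $S_i,S_{i+1}$, so that $f_\sigma$ is a face of $f_\tau$ and $\operatorname{aff}(f_\sigma)\subseteq\operatorname{aff}(f_\tau)$. Because $o_\sigma$ and $o_\tau$ are the orthogonal projections of $o$ onto these nested affine hulls, the vector $o_\sigma-o_\tau$ is perpendicular to $o_\tau-o$, giving the Pythagorean relation $\alpha_\sigma^2=\alpha_\tau^2+\|o_\sigma-o_\tau\|^2$ and hence
\[
\alpha_\sigma-\alpha_\tau \;=\; \frac{\|o_\sigma-o_\tau\|^2}{\alpha_\sigma+\alpha_\tau}\;\geq\;\frac{\|o_\sigma-o_\tau\|^2}{2\alpha_\sigma}.
\]
A direct computation using the formula for $\mu_m$ together with the identity $\mu_{i+1}-\mu_i=(|S_i|+|S_{i+1}|)/(2(d+1))$ gives
\[
\|o_\sigma-o_\tau\|^2 \;=\; \frac{|S_i|\,|S_{i+1}|(|S_i|+|S_{i+1}|)}{4(d+1)^2}\;\geq\;\frac{\ell^3}{2(d+1)^2},
\]
the last inequality using the good hypothesis. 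Combined with the circumradius bound $\alpha_\sigma\leq\sqrt{d(d+2)/(12(d+1))}$ established in the text preceding Lemma~\ref{lemma:pmdiam}, this yields the asserted lower bound on $L_\sigma$.

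For $L^\ast_\sigma$, I would run the same idea dually. A Delaunay co-facet $\tau$ splits some block $S_i$ of $\sigma$ into two sub-blocks of sizes $q_1,q_2=|S_i|-q_1$, and now $\operatorname{aff}(f_\tau)\subseteq\operatorname{aff}(f_\sigma)$, yielding $\alpha_\tau^2=\alpha_\sigma^2+\|o_\sigma-o_\tau\|^2$ with $\|o_\sigma-o_\tau\|^2=q_1q_2|S_i|/(4(d+1)^2)$. Dividing by $\alpha_\sigma+\alpha_\tau\leq 2\alpha_\tau$ and using the circumradius bound on $\alpha_\tau$ again gives a matching lower bound for such~$\tau$.

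The main obstacle is the co-facet analysis. First, the split can be as uneven as $(1,|S_i|-1)$, so the displacement can be smaller than in the facet case by a factor of order $\ell$, forcing a careful tracking of constants in the estimate. Second, $L^\ast_\sigma$ ranges over all co-facets in the \Cech filtration and not only Delaunay ones; for a non-Delaunay co-facet, $\alpha_\tau$ is a min-enclosing-ball radius rather than the circumradius of $f_\tau$, so one has to show separately that any $v\in P\setminus\sigma$ that does not produce a Delaunay co-facet with $\sigma$ still lies outside the \Cech ball realising $\alpha_\sigma$ by at least the target margin. An analogous subtlety arises in the facet direction when one removes the vertex $o$ from $\sigma$, and I would handle it by the same min-enclosing-ball argument.
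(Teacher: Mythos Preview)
Your core strategy matches the paper's: identify $\alpha_\sigma$ with $\|o_\sigma\|$ via Lemma~\ref{lem:barycenter_lemma}, compute $\alpha_\tau^2-\alpha_\sigma^2$ from the explicit block averages, and divide by $\alpha_\sigma+\alpha_\tau$. Your Pythagorean phrasing $\alpha_\tau^2-\alpha_\sigma^2=\|o_\tau-o_\sigma\|^2$ is just a cleaner packaging of the same coordinate computation the paper does in its appendix.

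Where your proposal diverges is in the treatment of $L_\sigma$, and this is where the gap lies. You handle only the facets of $\sigma$ that arise by merging two consecutive blocks; the remaining facet is the one obtained by deleting the origin $o$ from $\sigma$, and for that facet your ``same min-enclosing-ball argument'' is not spelled out and is not obviously the right tool. The paper closes this case differently and more cleanly: it first observes that $L_\sigma\geq\min_{\tau\text{ facet of }\sigma}L^\ast_\tau$, so it suffices to lower-bound $L^\ast_\tau$ for every facet $\tau$. For the merge-type facets this is immediate since the merged partition is still good. For the facet opposite $o$, the paper relocates the basepoint to another vertex of $\sigma$ and notes that, with respect to the new origin, the partition of $\sigma$ is a cyclic shift $(S_j,\ldots,S_k,S_1,\ldots,S_{j-1})$---hence still good---and the missing facet now arises by merging the (now consecutive) sets $S_k$ and $S_1$, reducing it to the case already handled. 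This symmetry argument is the missing idea in your write-up.

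Two smaller remarks. First, you bound $\alpha_\tau$ by the circumradius $\sqrt{d(d+2)/(12(d+1))}$ of $\Pi$; that is only justified via Lemma~\ref{lem:barycenter_lemma}, i.e.\ for simplices incident to $o$, whereas the paper uses the cruder but universal bound $\alpha_\tau\leq 2\sqrt{d}$ from Lemma~\ref{lem:P_diam}. Second, you are right to flag that $L^\ast_\sigma$ is a minimum over \emph{all} \Cech co-facets, not only Delaunay ones; the paper's argument, like yours, only explicitly treats co-facets obtained by splitting a block, so this subtlety is not peculiar to your approach.
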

\begin{proof}
We start with $L^\ast_\sigma$.
Let $\sigma$ be a $(k-1)$-simplex and let $S_1,\ldots,S_k$ be the
corresponding partition.
We obtain a co-facet $\tau$ of $\sigma$
through splitting one $S_i$ into two non-empty parts.

The main step is to bound the quantity $\alpha_\tau^2-\alpha_\sigma^2$.
By Lemma~\ref{lem:barycenter_lemma}, the alpha values are the squared
norms of the barycenters $o_\tau$ of $\tau$ and $o_\sigma$ of $\sigma$, 
respectively.
It is possible to derive an explicit expression 
of the coordinates of $o_\sigma$ and $o_\tau$.
It turns out that almost all coordinates are equal, and thus cancel out
in the sum, except at those indices that lie in the split set $S_i$.
Carrying out the calculations (as we do in the appendix), we obtain the bound
\[\alpha^2_\tau-\alpha^2_\sigma\geq\frac{(\ell-1)}{4(d+1)}.\]
 Moreover,
$\alpha_\tau\leq 2\sqrt{d}$ by Lemma~\ref{lem:P_diam}. This yields
\[\alpha_\tau-\alpha_\sigma = \frac{\alpha^2_\tau-\alpha^2_\sigma}{\alpha_\tau+\alpha_\sigma}\geq \frac{\alpha^2_\tau-\alpha^2_\sigma}{2\alpha_\tau}\geq \frac{\ell-1}{16(d+1)\sqrt{d}} \ge \frac{\ell}{24(d+1)^{3/2}}\]
for $\ell\geq 3$. The bound on $L_\sigma^\ast$ follows. 
For $L_\sigma$, note that
$\min_{\tau \text{ facet of }\sigma} L_\tau^\ast \leq L_\sigma,$
so it is enough to bound $L_\tau^\ast$ for all facets of $\sigma$. With $\sigma$ being a $(k-1)$-simplex, all but one of its facets
are obtained by merging two consecutive $S_i$ and $S_{i+1}$. 
However, the obtained partition is again good (because $\sigma$ is good),
so the first part of the proof yields the lower bound for all these facets.
It remains to argue about the facet of $\sigma$ that is not attached to
the origin. For this, we change the origin to any vertex of $\sigma$.
It can be observed 
(through the combinatorial properties of $\Pi$) that with respect to the new
origin, $\sigma$ has the representation $(S_j,\ldots,S_k,S_1,\ldots,S_{j-1})$,
thus the partition is cyclically shifted. In particular, $\sigma$
is still good with respect to the new origin. We obtain the missing facet
by merging the (now consecutive) sets $S_k$ and $S_1$, which is also a good
face, and the first part of the statement implies the result.
\end{proof}

As a consequence of Theorem~\ref{thm:pers_bound_for_good}, the interval
associated with a good simplex has length at least $\frac{\ell}{24(d+1)^{3/2}}$
using Lemma~\ref{lem:L_bound} and~\ref{lem:L_star_bound}.
Moreover, the interval cannot persist beyond the scale $2\sqrt{d}$
by Lemma~\ref{lem:P_diam}. It follows
\begin{corollary}
\label{cor:signi_good}
The interval associated to a good simplex is $\delta$-significant
for $\delta<\frac{\ell}{96(d+1)^2}$.
\end{corollary}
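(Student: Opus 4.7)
The plan is to unpack the definition of $\delta$-significance and combine it with the two quantitative bounds already in hand: a lower bound on the interval length supplied by Theorem~\ref{thm:pers_bound_for_good}, and an upper bound on where the interval ends supplied by Lemma~\ref{lem:P_diam}. Recall that an interval $(\alpha,\alpha')$ is $\delta$-significant as long as $\delta < \frac{\alpha'-\alpha}{2\alpha'}$, so our target is to lower-bound $\frac{\alpha'-\alpha}{2\alpha'}$ by $\frac{\ell}{96(d+1)^2}$.

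First I would fix a good $(k-1)$-simplex $\sigma$ and let $(\alpha,\alpha')$ be the interval associated to it in the simplex-wise \v{C}ech filtration. Depending on whether $\sigma$ is positive or negative, Lemma~\ref{lem:L_star_bound} or Lemma~\ref{lem:L_bound} applies, and in either case the length of the interval is at least $\min\{L_\sigma, L^\ast_\sigma\}$. Theorem~\ref{thm:pers_bound_for_good} then gives
\[
\alpha'-\alpha \;\geq\; \frac{\ell}{24(d+1)^{3/2}}.
\]
Next I would bound $\alpha'$ from above. Since the associated interval cannot outlive the filtration itself, we have $\alpha' \leq \max_\tau \alpha_\tau \leq 2\sqrt{d}$ by Lemma~\ref{lem:P_diam}, and $2\sqrt{d} \leq 2\sqrt{d+1}$.

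Combining these two estimates gives
\[
\frac{\alpha'-\alpha}{2\alpha'} \;\geq\; \frac{\ell/(24(d+1)^{3/2})}{2\cdot 2\sqrt{d+1}} \;=\; \frac{\ell}{96(d+1)^{3/2}(d+1)^{1/2}} \;=\; \frac{\ell}{96(d+1)^2}.
\]
Therefore any $\delta < \frac{\ell}{96(d+1)^2}$ satisfies $\delta < \frac{\alpha'-\alpha}{2\alpha'}$, which is precisely the condition for $(\alpha,\alpha')$ to be $\delta$-significant.

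There is no real obstacle here: the work has been done in Theorem~\ref{thm:pers_bound_for_good} and Lemma~\ref{lem:P_diam}, and the only care needed is the bookkeeping that turns the absolute persistence $\alpha'-\alpha$ and the scale $\alpha'$ into the relative ratio controlling significance, which is where the extra factor of $\sqrt{d+1}$ in the denominator (and thus the $(d+1)^2$ rather than $(d+1)^{3/2}$) comes from.
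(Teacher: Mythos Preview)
Your proposal is correct and follows exactly the same route as the paper: combine the persistence lower bound $\alpha'-\alpha\geq \ell/(24(d+1)^{3/2})$ from Theorem~\ref{thm:pers_bound_for_good} (via Lemmas~\ref{lem:L_bound} and~\ref{lem:L_star_bound}) with the scale upper bound $\alpha'\leq 2\sqrt{d}$ from Lemma~\ref{lem:P_diam}, then divide to obtain $\frac{\alpha'-\alpha}{2\alpha'}\geq \ell/(96(d+1)^2)$. The paper states this reasoning in the two sentences immediately preceding the corollary rather than as a formal proof, but the content is identical.
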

\subparagraph*{The number of good simplices.}
We assume for simplicity that $d+1$ is divisible by $\ell$. 
We call a good partition $(S_1,\ldots,S_k)$ \emph{uniform}, 
if each set consists of \emph{exactly} $\ell$ elements.
This implies that $k=(d+1)/\ell$.

\begin{lemma}
\label{lem:uniform_bound}
The number of uniform good partitions 
is exactly $\frac{(d+1)!}{\ell!^{(d+1)/\ell}}$.
\end{lemma}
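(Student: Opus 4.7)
My plan is a direct counting argument using the multinomial structure of uniform ordered partitions.

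First, I would unpack the definition. A uniform good partition is an ordered sequence $(S_1,\ldots,S_k)$ of pairwise disjoint subsets of $[d+1]$ whose union is $[d+1]$, where $k=(d+1)/\ell$ and each $|S_i|=\ell$. Crucially, the sequence is \emph{ordered} (as ordered partitions of $[d+1]$ indexing faces of $\Pi$), but within each block $S_i$ the elements form an unordered set.

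Next, I would count by a standard multinomial argument. To build such a partition, choose $S_1$ from $[d+1]$, then $S_2$ from the remaining $d+1-\ell$ elements, and so on:
\[
\binom{d+1}{\ell}\binom{d+1-\ell}{\ell}\binom{d+1-2\ell}{\ell}\cdots\binom{\ell}{\ell}.
\]
Each factor telescopes, giving
\[
\frac{(d+1)!}{\ell!\,(d+1-\ell)!}\cdot\frac{(d+1-\ell)!}{\ell!\,(d+1-2\ell)!}\cdots\frac{\ell!}{\ell!\,0!}=\frac{(d+1)!}{(\ell!)^{(d+1)/\ell}}.
\]
Equivalently, one can observe that listing the $d+1$ elements in a linear order (there are $(d+1)!$ such orderings) and declaring the first $\ell$ to be $S_1$, the next $\ell$ to be $S_2$, and so on, produces every uniform good partition, with each partition arising exactly $(\ell!)^{(d+1)/\ell}$ times because the elements inside each block $S_i$ may be permuted without changing the partition.

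There is no genuine obstacle here: the lemma is essentially a bookkeeping identity, and the only subtlety worth flagging is to double-check that we are counting \emph{ordered} partitions (matching how faces of $\Pi$ are parametrized in Section~\ref{section:permuto}), so that no division by $k!$ is introduced. Provided the uniformity assumption that $\ell \mid (d+1)$ (as stated just before the lemma) holds, the count is exact.
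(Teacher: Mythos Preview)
Your proof is correct and essentially identical to the paper's: the paper also counts all $(d+1)!$ permutations, places consecutive blocks of $\ell$ entries into $S_1,\ldots,S_k$, and divides out the $\ell!$ internal reorderings of each of the $(d+1)/\ell$ blocks. Your additional multinomial/telescoping presentation is just an equivalent phrasing of the same count.
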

\begin{proof}
Choose an arbitrary permutation and place the first $\ell$ entries in the $S_1$, the second $\ell$ entries
in $S_2$, and so forth. In each $S_i$, we can interchange the elements and obtain the same $k$-simplex.
Thus, we have to divide out $\ell!$ choices for each of the $(d+1)/\ell$ bins.
\end{proof}

We use this result to bound the number of good $k$-simplices in the following
theorem. To obtain the bound, we use estimates for the factorials 
using Stirling's approximation.
Moreover, we fix some constant $\rho\in (0,1)$ and set $\ell=(d+1)^\rho$.
After some calculations (see appendix), we obtain:

\begin{theorem}
\label{theorem:good_bound}
For any constant $\rho\in(0,1)$, $\ell=(d+1)^\rho$, $k=(d+1)/\ell$ and $d$ large enough, there exists a constant $\lambda\in (0,1)$ that only 
depends only on $\rho$, 
such that the number of good $k$-simplices is at least $(d+1)^{\lambda (d+1)}=2^{\Omega(d\log d)}$.
\end{theorem}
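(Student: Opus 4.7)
The plan is to restrict attention to the \emph{uniform} good partitions identified in Lemma~\ref{lem:uniform_bound}: each such partition has every block of size exactly $\ell$, is therefore good, and uses exactly $(d+1)/\ell$ blocks. Since uniform good partitions form a subset of all good partitions of the relevant dimension, the lemma supplies the lower bound
\[
\#\{\text{good simplices}\}\ \ge\ \frac{(d+1)!}{\ell!^{(d+1)/\ell}},
\]
and the remaining task is purely analytic.

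To evaluate this ratio asymptotically, I would use the standard Stirling bounds $n^n e^{-n}\sqrt{2\pi n}\le n!\le e\,n^{n+1/2}e^{-n}$. Applying the lower bound to $(d+1)!$ and the upper bound to each copy of $\ell!$, the factors of $e^{-n}$ in numerator and denominator cancel and the expression reduces to a product of the form
\[
\left(\frac{d+1}{\ell}\right)^{d+1}\cdot\ell^{-(d+1)/(2\ell)}\cdot e^{-(d+1)/\ell}\cdot\sqrt{2\pi(d+1)}.
\]
Substituting $\ell=(d+1)^\rho$, the leading factor becomes $(d+1)^{(1-\rho)(d+1)}$, while the logarithms of the three remaining correction factors are each of order $O((d+1)^{1-\rho}\log(d+1))$, which is $o((d+1)\log(d+1))$ precisely because $\rho>0$.

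Consequently, for any $\lambda\in(0,1-\rho)$---with the concrete choice $\lambda=(1-\rho)/2$, which depends only on $\rho$ and lies in $(0,1)$---the correction factors are absorbed once $d$ is large enough, yielding the desired inequality $(d+1)^{\lambda(d+1)}$. The alternative form $2^{\Omega(d\log d)}$ is immediate since $(d+1)^{\lambda(d+1)}=2^{\lambda(d+1)\log_2(d+1)}$. I expect the only care point to be the bookkeeping of the lower-order terms in the Stirling estimate, i.e.\ verifying that the $\ell^{-(d+1)/(2\ell)}$ and $e^{-(d+1)/\ell}$ contributions are genuinely subdominant to $(d+1)^{(1-\rho)(d+1)}$; no conceptual obstacle arises, and the argument is ultimately a one-line Stirling computation combined with the choice $\ell=(d+1)^\rho$.
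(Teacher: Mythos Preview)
Your proposal is correct and follows essentially the same route as the paper: start from the uniform-partition count $\frac{(d+1)!}{\ell!^{(d+1)/\ell}}$ of Lemma~\ref{lem:uniform_bound}, apply Stirling (the paper uses the cruder lower bound $n!\ge n^n e^{-n}$, you keep the $\sqrt{2\pi n}$ factor, which only helps), substitute $\ell=(d+1)^\rho$, and absorb the sub-leading terms into the exponent. Your explicit choice $\lambda=(1-\rho)/2$ is exactly in line with the paper's requirement $\lambda<(1-\rho)/2$ up to the usual slack, and the bookkeeping you flag as the only care point is precisely what the appendix carries out.
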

Putting everything together, we prove our lower bound theorem:
\begin{theorem}
\label{thm:lower_bound}
There exists a point set of $n$ points in $d=\Theta(\log n)$ dimensions,
such that any $(1+\delta)$-approximation of its \Cech filtration
contains $2^{\Omega(d\log d)}$ intervals in its persistent barcode,
provided that $\delta<\frac{1}{96(d+1)^{1+\eps}}$ with an arbitrary constant 
$\eps\in(0,1)$.
\end{theorem}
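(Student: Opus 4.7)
The plan is to assemble three ingredients already established in this section: the significance bound for intervals of good simplices (Corollary~\ref{cor:signi_good}), the enumeration of good simplices (Theorem~\ref{theorem:good_bound}), and the translation of $\delta$-significant intervals into a size lower bound for any $(1+\delta)$-approximation (Lemma~\ref{lem:significance_lemma}). The only free parameter is $\ell$, which has to be chosen so that the significance threshold $\ell/(96(d+1)^2)$ is exactly matched to the hypothesis on $\delta$, while still being large enough for $k=(d+1)/\ell$ to yield abundantly many good simplices.

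First I would set $\rho := 1-\eps$, which lies in $(0,1)$ by the hypothesis on $\eps$, and take $\ell := (d+1)^\rho$ together with $k:=(d+1)/\ell$ (assuming divisibility as in Theorem~\ref{theorem:good_bound}; otherwise one rounds, which does not affect the asymptotics). With this choice, Corollary~\ref{cor:signi_good} gives
\[\delta \;<\; \frac{\ell}{96(d+1)^2} \;=\; \frac{1}{96(d+1)^{2-\rho}} \;=\; \frac{1}{96(d+1)^{1+\eps}},\]
so that every barcode interval associated to a good $k$-simplex of $\Del_P$ is $\delta$-significant in the sense of Lemma~\ref{lem:significance_lemma}.

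Next, I would apply Theorem~\ref{theorem:good_bound} with this same $\rho$ to conclude that the number of good $k$-simplices in $\Del_P$ is at least $2^{\Omega(d\log d)}$. Each such simplex contributes either a birth or a death endpoint to exactly one persistence interval of the simplex-wise \Cech filtration; two good simplices can share an interval only when one is its positive and the other its negative endpoint, so the number of \emph{distinct} $\delta$-significant intervals in $(H_\ast(\cech_\alpha))_{\alpha\ge 0}$ is at least half this count, hence still $2^{\Omega(d\log d)}$. Lemma~\ref{lem:significance_lemma} then promotes this into the desired statement: any persistence module that is a $(1+\delta)$-approximation of $(H_\ast(\cech_\alpha))_{\alpha\ge 0}$ must have at least $2^{\Omega(d\log d)}$ intervals in its barcode. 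Since $n=2^{d+1}-1$ yields $d=\Theta(\log n)$, the point set $P$ has the right cardinality and dimension.

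I expect no genuine obstacle beyond bookkeeping at this final stage, because all the hard work is absorbed into Theorem~\ref{thm:pers_bound_for_good} (the barycenter norm estimate) and Theorem~\ref{theorem:good_bound} (the Stirling-type count). The only delicate choice is how to pick $\rho$: the constraint $\rho<1$ (equivalently $\eps>0$) is what keeps the significance threshold decaying only polynomially in $d+1$, while $\rho>0$ (equivalently $\eps<1$) is what makes $\ell=(d+1)^\rho$ grow with $d$, as needed for Theorem~\ref{theorem:good_bound} to produce a super-polynomial count of good simplices.
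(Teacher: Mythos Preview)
Your proposal is correct and follows essentially the same route as the paper: set $\rho=1-\eps$, invoke Theorem~\ref{theorem:good_bound} for the count, Corollary~\ref{cor:signi_good} for significance, and Lemma~\ref{lem:significance_lemma} to conclude. One small simplification you missed: since all the good simplices produced by Theorem~\ref{theorem:good_bound} lie in a single fixed dimension $k$, no two of them can be the birth and death of the \emph{same} interval (a positive $k$-simplex pairs with a negative $(k{+}1)$-simplex, a negative $k$-simplex with a positive $(k{-}1)$-simplex), so the intervals are already pairwise distinct and your halving step is unnecessary---though of course it does no harm asymptotically.
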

\begin{proof}
Setting $\rho:=1-\eps$, Theorem~\ref{theorem:good_bound} guarantees
the existence of $2^{\Omega(d\log d)}$ good simplices, all in a fixed dimension
$k$. In particular, the intervals of the \Cech persistence module associated
to these intervals are all distinct. Since $\ell=(d+1)^{1-\eps}$, 
Corollary~\ref{cor:signi_good} states that all these intervals are significant
because $\delta<\frac{1}{96d^{1+\eps}}=\frac{\ell}{96(d+1)^2}$.
Therefore, by Lemma~\ref{lem:significance_lemma}, any 
$(1+\delta)$-approximation of the \Cech filtration has $2^{\Omega(d\log d)}$
intervals in its barcode.
\end{proof}

Replacing $d$ by $\log n$ in the bounds of theorem, 
we see the number of intervals
appearing in any approximation super-polynomial is $n$
if $\delta$ is small enough.

\section{Conclusion}
\label{section:concl}
We presented upper and lower bound results on approximating Rips and \Cech
filtrations of point sets in arbitrarily high dimensions.
For \Cech complexes, the major result can be summarized as: for a dimension-independent
bound on the complex size,
there is no way to avoid a super-polynomial complexity for 
fine approximations of about $O(\log^{-1} n)$, while polynomial
size can be achieved for rough approximation of about $O(\log^2 n)$.

Filling in the large gap between the two approximation factors 
is an attractive avenue for future work.
A possible approach is to look at other lattices. It seems that 
lattices with good covering properties are 
correlated with a good approximation quality, and it may be worthwhile
to study lattices in higher dimension which improve largely on the covering
density of $A^\ast$ (e.g., the Leech lattice~\cite{sloway-book}).

Our approach, like all other known approaches, approximate also the 
geometry of the point set as a by-product, and we have to allow for
large error rates to overcome the curse of dimensionality.
An alternative approach to bridge the gap between upper and lower bounds
with an approximation scheme that only approximates topological features.

An unpleasant property of our approach is the dependence on the spread
of the point set. We pose the question whether it is possible to
eliminate this dependence by a more elaborate construction that avoids
the mere gluing of approximation complexes of consecutive scales.

\bibliographystyle{plain}

\newpage
\begin{appendix}

\section{Missing proofs}

\subparagraph*{Proof of Lemma~\ref{lemma:pmgenp}}
We rephrase the proof idea of~\cite{stanford-tech} in slightly simplified terms.
The representative vectors of $A_d^*$ are of the form
\[g_t=\frac{1}{(d+1)}(\underbrace{t,\cdots,t}_{d+1-t},\underbrace{t-(d+1),\cdots,t-(d+1)}_{t})\]
for $1\le t\le d$~\cite{sloway-book}.
It can be seen that each component of the numerator of $g_t$ is congruent to $t$ modulo $(d+1)$. 
Hence, we call the numerator of $g_t$ a remainder-$t$ point.
Since any lattice point $x$ in $A_d^*$ can be written as $x=\sum m_t\cdot g_t$, it follows that the numerator of $x$ is a 
remainder-\{$(\sum m_t\cdot t)$ modulo $(d+1)$\} point. 
  
Now, we show that the Delaunay cells of the $A^*_d$ lattice are all $d$-simplices, which will prove our claim.
Let $\vec{v}$ be a vertex of the permutahedron which is the Voronoi cell of the origin.  W.l.o.g, we can assume that
$\vec{v}=\frac{1}{2(d+1)}(d,d-2,\cdots,-d)$. The $A^*_d$ lattice points closest to $\vec{v}$ define the
Delaunay cell of $\vec{v}$.
We have seen that the lattice points have the form $\vec y=\frac{1}{d+1}\big(\vec{m}(d+1)+k\vec 1\big)$,
where $m\in \mathbb{Z}^{d+1}$. Also, $\dotp{m}{1}=-k$ as $\dotp{y}{1}=0$.

We wish to minimize the distance between $v$ and $y$ by choosing a suitable value for $m$. In other words, 
we wish to find $\text{argmin}_{\vec{m}}||\vec{y}-\vec{v}||^2$. Note that
\begin{eqnarray*}
\text{argmin}_{\vec{m}}||\vec{y}-\vec{v}||^2 &=& \text{argmin}_{\vec{m}} \sum (m_i+\frac{k}{d+1}-v_i)^2\\
&=& \text{argmin}_{\vec{m}} \sum (m_i-v_i)^2 + 2(m_i-v_i)\frac{k}{d+1}\\
&=& \text{argmin}_{\vec{m}} \sum (m_i - v_i)^2 + \frac{2k}{d+1}\sum m_i\\
&=& \text{argmin}_{\vec{m}} \sum (m_i - v_i)^2 + \frac{2k}{d+1}\cdot(-k)\\
&=& \text{argmin}_{\vec{m}} \sum (m_i - v_i)^2 \\
&=& \text{argmin}_{\vec{m}} ||\vec{m}-\vec{v}||^2 \\
&=& \text{argmin}_{\vec{m}} ||\vec{m}-\frac{1}{2(d+1)}(d,\cdots,-d)||^2 
\end{eqnarray*}

It can be verified that only lattice points with $\vec m \in \{0,-1\}^{d+1}$ are Delaunay neighbors
of the origin. Then, the lattice points closest to $\vec{v}$ are a subset of the Delaunay neighbors of the origin.
An elementary calculation shows that $||\vec{y}-\vec{v}||^2$ is minimized when
\[   
  \vec{m}=(\underbrace{0,\cdots,0}_{d+1-k},\underbrace{-1,\cdots,-1}_{k})
\]
for $\dotp{m}{1}=-k$.

This shows that there is a unique remainder-$k$ nearest lattice point to $\vec{v}$, for $k\in(0,\ldots,d)$. 
Also, it can be verified that each such lattice point is equidistant from $v$.
Hence, the Delaunay cell contains precisely $(d+1)$ points, one for each choice of $k$. 
The corresponding lattice points for any permutation $\pi$ of $v$ are also permutations, by the above derivation.
Hence, all such $d$-simplices are congruent.
This proves the claim.

\subparagraph*{Proof of Lemma~\ref{lem:hplane_facet}} 
Consider the $d$-simplex $\sigma$ incident to the
origin that is dual Voronoi vertex of $\Pi$ with coordinates
\[v=\frac{1}{d+1}\Big(d/2,d/2-1,\ldots,d/2-(d-1),d/2-d\Big).\]
The $(d-1)$-facet $\tau$ of $\sigma$ opposite to the origin is spanned by lattice points of the form
\[\ell_k=\frac{1}{(d+1)}(\underbrace{k,\cdots,k}_{d+1-k},\underbrace{k-(d+1),\cdots,k-(d+1)}_{k}), 1\le k \le d,\]
(see the proof of Lemma~\ref{lemma:pmgenp} above). 
All points in $V$ can be obtained by permuting the coordinates
of $\ell_k$.

We can verify at once that all these points lie on the hyperplane
$-x_1+x_{d+1}+1=0$, so this plane supports $\tau$.
The origin lies on the positive side of the plane.
All points in $V$ either lie on the plane
or are on the positive side as well, as one can easily check.
For the vertices of $\Pi$, observe that the value $x_1-x_{d+1}$ is minimized
for the point $v$ above, for which $x_1-x_{d+1}+1=1/(d+1)$ is obtained.
It follows that $v$ as well as any vertex of $V$ is 
at least in distance $\frac{1}{\sqrt{2}(d+1)}$ from $H$ (the $\sqrt{2}$
comes from the length of the normal vector). This proves the claim for
the simplex dual to $v$.

Any other choice of $\sigma$ is dual to a permuted version of $v$.
Let $\pi$ denote the permutation on $v$ that yields the dual vertex.
The vertices of $\tau$ are obtained by applying the same permutation
on the points $\ell_k$ from above. Consequently, the plane equation
changes to $-x_{\pi(1)}+x_{\pi(d+1)}+1=0$. The same reasoning as above
applies, proving the statement in general.

\subparagraph*{Details of the proof of Lemma~\ref{lem:flag_prop}}
We start with the proof of the second claim. Assume that $k$
facets $f_1,\ldots,f_k$ of $\Pi$ are pairwise intersecting.
For any facet $f_i$, there is a partition $(S_i,[d+1]\setminus S_i)$
associated to it. By Lemma~\ref{lem:pi_adjacency}, we have that either
$S_i\subset S_j$ or $S_j\subset S_i$ for each $i\neq j$.
This means that the $S_i$ are totally ordered, that means, there
exists an ordering $\pi$ of $\{1,\ldots,k\}$ such that
$S_{\pi(1)}\subset S_{\pi(2)}\subset\ldots\subset S_{\pi(k)}$.
Now, the partition
\[\left(S_{\pi(1)},S_{\pi(2)}\setminus S_{\pi(1)},S_{\pi(3)}\setminus S_{\pi(2)},\ldots,S_{\pi(k)}\setminus S_{\pi(k-1)},[d+1]\setminus S_{\pi(k)}\right)\] 
is a common refinement of all partitions, which implies
that the corresponding face is incident to all $k$ facets.
This proves the claim.

\smallskip

Now, we prove the first claim.
Let $(S_1,[d+1]\setminus S_1)$, $(S_2, [d+1]\setminus S_2)$ 
be the partitions defining facets $f_1$ and $f_2$ respectively. 
Since $f_1$ and $f_2$ are disjoint, we have that $S_1\not\subset S_2$ and 
$S_2\not\subset S_1$ by Lemma~\ref{lem:pi_adjacency}. 
Let us define the sets $T_1 = S_1\setminus S_2 $, $T_2 = S_2\setminus S_1$, 
$T_3 = S_1 \cap S_2$ and $T_4 = [d+1]\setminus S_1 \cup S_2$.
Also, let $|T_1|=a$, $|T_2|=b$ and $|T_3|=c$ with $a,b,c\ge 1$.
Then, $|T_4|=d+1-(a+b+c)$, $|S_1|=k:=a+c$ and $|S_2|=p:=b+c$.

Let $\ell_1$, $\ell_2$ denote the lattice points at
the centers of the permutahedra $\Pi_1$, $\Pi_2$ that
are attached to $\Pi$ on the faces $f_1$ and $f_2$, respectively.
We can derive the coordinates of $\ell_1$ and $\ell_2$ easily:
an elementary calculation shows that barycenter of the face $f_1$ has coordinates 
$\frac{k-(d+1)}{2(d+1)}=\frac{k}{2(d+1)}-\frac{1}{2}$ at indices in $S_1$ and
$\frac{k}{2(d+1)}$ at the rest of the positions.
Similarly, the barycenter of $f_2$ has coordinates $\frac{p-(d+1)}{2(d+1)}=\frac{p}{2(d+1)}-\frac{1}{2}$ 
at indices in $S_2$ and $\frac{p}{2(d+1)}$ otherwise.
Since $\Pi$ is centered at the origin, the 
coordinates of $\ell_1$ and $\ell_2$ are obtained by multiplying these coordinates with $2$.
See Table~\ref{tbl:l10-lnm} for details.

Let $B$ denote the bisector hyperplane between $\ell_1$ and $\ell_2$.
We show that $B$ is a separating hyperplane
for $\Pi_1$ and $\Pi_2$ with no point of either on the hyperplane,
which proves the claim.
The vector $n=(n_1,\ldots,n_{d+1}):=\ell_2-\ell_1$ is a normal vector to $B$.
Then, we define $B$ by $n\cdot(x-m)=0$ with $m=(\ell_1+\ell_2)/2$ being
the midpoint of $\ell_1$ and $\ell_2$. See Table~\ref{tbl:l10-lnm} for a 
description of $n$ and $m$.
\begin{table}[ht]
\begin{center}
\begin{tabular}{|c|c|c|c|c|c|}
  \hline
  indices & $\ell_2$ & $\ell_1$ & $n=\ell_2-\ell_1$ & $m = (\ell_2 + \ell_1)/2$ & count\\ \hline
  $T_1$ & $\frac{p}{d+1}$ & $\frac{k}{d+1}-1$ & $\frac{(p-k)}{d+1}+1=\alpha + 1$ & $\frac{(p+k)}{2(d+1)}-\frac{1}{2}=\beta-1/2$ & $a$ \\ \hline
  $T_2$ & $\frac{p}{d+1}-1$ & $\frac{k}{d+1}$ & $\frac{(p-k)}{d+1}-1=\alpha - 1$ & $\frac{(p+k)}{2(d+1)}-\frac{1}{2}=\beta-1/2$ & $b$ \\ \hline
  $T_3$ & $\frac{p}{d+1}-1$ & $\frac{k}{d+1}-1$ & $\frac{p-k}{d+1}=\alpha$ & $\frac{(p+k)}{2(d+1)}-1=\beta-1$ & $c$\\ \hline
  $T_4$ & $\frac{p}{d+1}$ & $\frac{k}{d+1}$ & $\frac{p-k}{d+1}=\alpha$ & $\frac{p+k}{2(d+1)}=\beta$ & $d+1-a-b-c$\\
  \hline
\end{tabular}
\end{center}
\caption{$\ell_1,\ell_2,n,m$}
\label{tbl:l10-lnm}
\end{table}

Since permutahedra tile space by translation, the vertices of $\Pi_1$ are of the form 
$x_1=\ell_1+\pi$ where $\pi$ is any permutation of 
$y=\frac{1}{d+1}\big(\frac{d}{2},\frac{d}{2}-1,\ldots,\frac{-d}{2}\big)$.
Writing $B(x_1):=n\cdot(x_1-m)$ for the function whose sign determines
the halfspace of $x_1$ with respect to $B$, we can write
$B(x_1)=B(\ell_1+\pi)=n\cdot(\ell_1+\pi-m)=n\cdot\ell_1-n\cdot m + n\cdot\pi$.
Similarly, for any vertex $x_2=\ell_2+\pi$ of $\Pi_2$,
$B(x_2)=n\cdot\ell_2-n\cdot m + n\cdot\pi$. 
We show that $B(x_1)<0$ and $B(x_2)>0$ for all permutations $\pi$, which proves the claim.
First, we calculate $n\cdot\ell_1$, $n\cdot\ell_2$ and $n\cdot m$
using Table~\ref{tbl:l10-lnm}: 
\[
 n\cdot \ell_1 =(\alpha+1)\big(\frac{k}{d+1}-1\big)a + (\alpha-1)\frac{k}{d+1}b + \alpha\big(\frac{k}{d+1}-1\big)c +
 \alpha\big(\frac{k}{d+1}\big)\{d+1-(a+b+c)\}\\
\]
Upon simplification, this reduces to 
$n\cdot \ell_1 = -a + \frac{k}{d+1}(a-b)$. 
Similarly, one can calculate that 
$ n\cdot \ell_2 = b + \frac{p}{d+1}(a-b)$. Next,
\[
 n\cdot m = (\alpha+1)(\beta-1/2)a+(\alpha-1)(\beta-1/2)b+\alpha(\beta-1)c+\alpha\beta[d+1-(a+b+c)]
\]
This simplifies to
$n\cdot m = -(a-b)\frac{(d+1)-(p+k)}{2(d+1)}$.
Subtracting, we get 
\[
 n\cdot\ell_1-n\cdot m = -a + \frac{k}{d+1}(a-b) + (a-b)\frac{(d+1)-(p+k)}{2(d+1)} 
\]
which reduces to 
$n\cdot\ell_1-n\cdot m = -\frac{a+b}{2} + \frac{(b-a)^2}{2(d+1)}$.

Since $\ell_1 - m=-(\ell_2 - m)$, hence $n\cdot(\ell_2-m)=-n\cdot(\ell_1-m)$.
Also, 
\[
 n\cdot\ell_1-n\cdot m = -\frac{a+b}{2} + \frac{(b-a)^2}{2(d+1)}
 < -\frac{a+b}{2} + \frac{(b+a)^2}{2(d+1)} <  -\frac{a+b}{2}\big(1-\frac{a+b}{d+1}\big)<0.
\]
Hence, $n\cdot\ell_1-n\cdot m$ is negative and $n\cdot\ell_2-n\cdot m$ is positive.
Substituting these values in $B(x_1)$ and $B(x_2)$, we get 
\[
B(x_1) = -\frac{a+b}{2} + \frac{(b-a)^2}{2(d+1)} + n\cdot\pi,
B(x_2) = \frac{a+b}{2} - \frac{(b-a)^2}{2(d+1)} + n \cdot\pi
\]
We now calculate the maximum absolute value of $n\cdot\pi$ and
use it to show that $B(x_1)$ is always negative and $B(x_2)$ is always positive.

The dot product $n\cdot\pi$ is obtained by first multiplying each component $y_i$ of the vector
$y=\frac{1}{d+1}\big(\frac{d}{2},\frac{d}{2}-1,\ldots,\frac{-d}{2}\big)$
with a component of $n$, which has one of 3 values: $\alpha+1$ for indices in $T_1$, $\alpha$ for $T_3\cup T_4$,
$\alpha-1$ for $T_2$ (refer Table~\ref{tbl:l10-lnm}); the intermediate products are then added up.
The permutation of $y$ maximizing  $n\cdot\pi$ follows from a simple arithmetic fact, 
which can be proved by a simple induction on the dimension of the vector.

\begin{lemma}
For any natural number $N\geq 2$,
let $V=(v_1,\ldots,v_N)$ and $W=(w_1,\ldots,w_N)$ be two vectors in $\R^N$ with $v_1\leq\ldots \leq v_N$
and $w_1\leq \ldots\leq w_N$. Let $\pi$ be a permutation over $[N]$, and let $\pi(W)$ be the 
vector with the corresponding permuted coordinates of $W$. Then, $\max_{\pi} \{V\cdot \pi(W)\}= V\cdot W$.
\end{lemma}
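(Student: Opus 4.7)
The plan is to prove the statement by induction on $N$, using the classical pair-swapping trick that underlies the rearrangement inequality.

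For the base case $N=2$, the only two permutations are the identity and the transposition. The identity gives $v_1 w_1 + v_2 w_2$, while the transposition gives $v_1 w_2 + v_2 w_1$, and the difference is $(v_2-v_1)(w_2-w_1) \ge 0$ by the monotonicity hypothesis on both sequences. So the identity is optimal, which matches the claim $\max_\pi V\cdot\pi(W) = V\cdot W$.

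For the inductive step, assume the statement holds for $N-1$ and consider vectors of length $N$. Let $\pi$ be an arbitrary permutation, and let $j := \pi^{-1}(N)$ be the index for which $v_j$ is multiplied by $w_N$; equivalently, $v_N$ gets multiplied by $w_{\pi(N)}$. If $j = N$, we are done at this coordinate, and the induction hypothesis applied to the truncated vectors $(v_1,\ldots,v_{N-1})$ and the remaining coordinates of $W$ (which are a permutation of $w_1,\ldots,w_{N-1}$ since only $w_N$ is removed) shows the restricted dot product is maximized by pairing them in sorted order. If $j \ne N$, write $i := \pi(N)$ and consider swapping the images of $j$ and $N$ under $\pi$, producing a permutation $\pi'$ that pairs $v_N$ with $w_N$ and $v_j$ with $w_i$. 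The change in the dot product is
\[
V\cdot\pi'(W) - V\cdot\pi(W) = v_N w_N + v_j w_i - v_N w_i - v_j w_N = (v_N - v_j)(w_N - w_i) \ge 0,
\]
since $v_N \ge v_j$ and $w_N \ge w_i$. Hence the swap does not decrease the dot product, and after performing it we are in the previously handled case.

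Combining the two cases, for every permutation $\pi$ there is a permutation $\pi'$ with $v_N$ matched to $w_N$ and $V\cdot\pi'(W) \ge V\cdot\pi(W)$. Applying the induction hypothesis to the first $N-1$ coordinates shows the maximum is attained by the identity, i.e.\ $\max_\pi V\cdot\pi(W) = V\cdot W$. There is essentially no obstacle here; the only point to handle carefully is the bookkeeping to make sure the remaining $N-1$ coordinates of $W$ after fixing the last pairing still form a sorted sequence (which is automatic since monotonicity is preserved by removing the last element), so that the inductive hypothesis applies cleanly.
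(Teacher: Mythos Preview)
Your proof is correct and follows essentially the same approach as the paper: the paper states that the lemma ``can be proved by a simple induction on the dimension of the vector,'' and the (commented-out) detailed argument in the source uses exactly the same base case $N=2$ and the same swap-to-fix-$v_N w_N$-then-induct step that you give. The only cosmetic difference is that the paper starts from a maximizing permutation and swaps once, while you start from an arbitrary permutation and swap once; both reduce to the inductive hypothesis in the same way.
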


Let us denote the sum of the $q$ smallest components of $y$ by $N_q$ and the sum of the 
$q$ largest components of $y$ by $M_q$. It is easy to verify that $M_q+N_q=0$,
$N_q=N_{d+1-q}$ and $M_q=M_{d+1-q}$. Then,

\begin{align*}
 max(|n\cdot\pi|)&=(\alpha+1)M_a + (\alpha-1)N_b + \alpha(N_{d+1-a}-N_b) 
 = \alpha(M_a + N_{d+1-a} )+ M_a -N_b \\
 &= 0-N_a-N_b 
 = -\Big[ \frac{ a\{a-(d+1)\} }{2(d+1)} + \frac{ b\{b-(d+1)\} }{2(d+1)} \Big] \\
 &= \frac{a+b}{2}-\frac{a^2+b^2}{2(d+1)} 
 <  \frac{a+b}{2}-\frac{(b-a)^2}{2(d+1)}
\end{align*}
The last inequality implies that 
\[B(x_1)=-\frac{a+b}{2} + \frac{(b-a)^2}{2(d+1)} + n\cdot\pi <0,\]
and similarly, $B(x_2)>0$. The claim follows.

\subparagraph*{Proof of Lemma~\ref{lem:generic_reduction}}
The map $f$ is a bijection between $P$ and $f(P)$. The properties of $f$ ensure that the vertex maps
$f^{-1}$ and $f$, composed with appropriate inclusion maps, induce simplicial maps
\[  \rid_{\frac{\alpha}{\xi_2/\xi_1}}
\overset{\phi}\hookrightarrow \ri_{\alpha}
\overset{\psi}\hookrightarrow \rid_{\alpha\xi_2/\xi_1}.
\]
It is straightforward to show that the following diagrams commute on a simplicial level,
\begin{eqnarray}
\label{eqn:diag-dimred-rips}
\xymatrix{
\ri_{\frac{\alpha}{\beta}} \ar[rrr]^{g}\ar[rd]^{\psi} & & & \ri_{\beta\alpha'}    &  & \ri_{\beta\alpha} \ar[r]^{g} & \ri_{\beta\alpha'} \\
& \rid_\alpha \ar[r]^{g} & \rid_{\alpha'} \ar[ru]^{\phi} &                 & \rid_\alpha \ar[r]^{g} \ar[ru]^{\phi} & \rid_{\alpha'} \ar[ru]^{\phi}\\ 
& \ri_\alpha \ar[r]^{g} & \ri_{\alpha'} \ar[rd]^{\psi} &                 & \ri_\alpha \ar[r]^{g} \ar[rd]^{\psi} & \ri_{\alpha'} \ar[rd]^{\psi} \\ 
\rid_{\frac{\alpha}{\beta}} \ar[rrr]^{g}\ar[ru]^{\phi} & & & \rid_{\beta\alpha'}    &  & \rid_{\beta\alpha} \ar[r]^{g} & \rid_{\beta\alpha'}\\
}
\end{eqnarray}

where $g$ is the inclusion map.
Hence, the strong interleaving result from~\cite{ch-proximity} implies that both persistence modules are 
$\frac{\xi_2}{\xi_1}$-approximations of each other.

\subparagraph*{Details of the proof of Lemma~\ref{lem:significance_lemma}}
We first argue that $\delta$-significance implies the existence
of $\eps>0$ and $c\in(\alpha,\alpha')$ such that
$\alpha/(1-\eps)\leq c/(1+\delta)<c(1+\delta)\leq\alpha'$:
We choose $c:=\alpha'/(1+\delta)$, so that the last inequality is satisfied.
For the first inequality, we note first that
$(1-2\delta)<\frac{1}{(1+\delta)^2}$ for all $\delta<1/2$.
By assumption,
$\alpha'-\alpha>2\alpha'\delta$, so $\alpha<\alpha'(1-2\delta)<\frac{\alpha'}{(1+\delta)^2}=\frac{c}{1+\delta}$. Since the inequality is strict, we can
choose some small $\eps>0$, such that $\alpha/(1-\eps)\leq \frac{c}{1+\delta}$.

By the definition of $(1+\delta)$-approximation, we have a commutative diagram
\begin{eqnarray}
\xymatrix{
H(\cech_{c(1-\eps)/(1+\delta)}) \ar[rrr]^{g}\ar[rd]^{\phi} & & &  H(\cech_{c(1+\delta)}) \\
& X_{c(1-\eps)/2} \ar[r]^{h} & X_{c} \ar[ru]^{\psi} 
}
\end{eqnarray}
Let $\gamma$ be the element in the upper-left vector space, corresponding
to the $\delta$-significant interval. By definition, $g(\gamma)\neq 0$.
It follows that $h(\phi(\gamma))\neq 0$ either, 
so there is a corresponding interval in the approximation.

\subparagraph*{Details of the proof of Theorem~\ref{thm:pers_bound_for_good}}
Recall that $\alpha^2_\sigma$ is the squared length 
of the barycenter $o_{\sigma}$, and an analogue statement holds for $o_\tau$.
Also, recall that $\tau$ is obtained from $\sigma$ by splitting one $S_i$
in the corresponding partition $(S_1,\ldots,S_k)$ of $\sigma$.
Assume wlog that $S_k$ is split into $S'_k$ and $S'_{k+1}$  
(splitting any other $S_i$ yields the same bound)
and that $S_k$ is of size exactly $\ell$ (a larger cardinality only leads
to a larger difference).

Let $s_i:=|S_i|$ and $p_i=\sum_{j=1}^{i-1} |s_j|$.
Recall that $\Pi$ is spanned by a permutations 
of a particular point in $\R^{d+1}$, defined in Section~\ref{section:permuto};
we order these coordinates values by size in increasing order.
Then, the indices in $S_i$ will contain the coordinate values of order
$p_i+1,\ldots,p_i+s_i$. Writing $a_i$ for their average, the symmetric
structure of $\Pi$ implies that 
$o_\sigma$ has value $a_i$ in each coordinate $j\in S_i$.
Doing the same construction for $\tau$, we observe that the coordinates
of $o_\sigma$ and $o_\tau$ coincide for every coordinate $j\in S_1,\ldots,S_{k-1}$;
the only differences appear for coordinate indices of $S_k$, that is,
the partition set that was split to obtain $\tau$ from $\sigma$.
Writing $a_k$, $a'_k$, $a'_{k+1}$ 
for the average values of $S_k$, $S'_k$, $S'_{k+1}$, respectively, 
and $t:=|S'_k|$, we get 
\[\alpha^2_\tau-\alpha^2_\sigma=\sum_{i=1}^t \left( (a'_{k})^2-a_k^2\right) + \sum_{i=t+1}^\ell \left( (a'_{k+1})^2-a_k^2\right) 
= t\left( (a'_{k})^2- a_k^2\right) + (\ell-t)\left( (a'_{k+1})^2- a_k^2\right)\]
To obtain $a_k$, $a'_k$, and $a'_{k+1}$, we only need to compute 
the average of the appropriate coordinate values. A simple calculation shows
that $a_k=\frac{(d+1)-\ell}{2(d+1)}$, $a'_k=\frac{(d+1)-i}{2(d+1)}$ and $a'_{k+1}=\frac{(d+1)-\ell-i}{2(d+1)}$. Plugging in these values yields
\[\alpha^2_\tau-\alpha^2_\sigma=\frac{(d+1+\ell)t(\ell-t)}{4(d+1)^2},\]
whose minimum is achieved for $t=1$ (and $t=\ell-1$). Therefore, 
\[\alpha^2_\tau-\alpha^2_\sigma\geq \frac{(d+1+\ell)(\ell-1)}{4(d+1)^2}\geq 
\frac{\ell-1}{4(d+1)},\]
as claimed.

\subparagraph*{Proof of Theorem~\ref{theorem:good_bound}}

Recall from Lemma~\ref{lem:uniform_bound} that the number of good simplices
is at least 
\[\frac{(d+1)!}{\ell!^{(d+1)/\ell}}.\]
Stirling's approximation~\cite{stirling-wf} states that
\[\sqrt{2\pi} n^{n+1/2}e^{-n+1/(12n+1)} < n! < \sqrt{2\pi} n^{n+1/2}e^{-n+1/(12n)}.\]
We rephrase the upper bound as
\[ \sqrt{2\pi} n^{n+1/2}e^{-n+1/(12n)} \leq \sqrt{2\pi} e^{1/(12n)} n^{n+1/2}e^{-n} \leq e n^{n+1/2}e^{-n}\]
for $n\geq 2$ and the lower bound simply as
\[\sqrt{2\pi} n^{n+1/2}e^{-n+1/(12n+1)} \geq n^n e^{-n}.\]
In this way, we can lower bound the number of good simplices as
\begin{align}
\frac{(d+1)!}{\ell!^{(d+1)/\ell}} &\geq  \frac{(d+1)^{(d+1)}e^{-(d+1)}}{(e \ell^{\ell+1/2}e^{-\ell})^{(d+1)/\ell}}\\
&\geq  \frac{(d+1)^{(d+1)}e^{-(d+1)}}{e^{(d+1)/\ell} \ell^{(d+1)+(d+1)/(2\ell)}e^{-(d+1)}}\\
&\geq  \exp\left( (d+1)\log (d+1) - \frac{(d+1)}{\ell} - (d+1)\log\ell (1+\frac{1}{2\ell}) \right).
\end{align}
Choose $\ell=(d+1)^\rho$ with some constant $\rho<1$. The above simplifies to
\[ \exp\left( (d+1)\log (d+1) - (d+1)^{1-\rho} -
\rho (d+1)\log (d+1) (1+\frac{1}{2(d+1)^\rho}) \right) \]
\[
= \exp\left( (d+1)\log (d+1) (1-\rho(1+\frac{1}{2(d+1)^\rho})) - (d+1)^{1-\rho} \right).\]
Now, pick some $\lambda\in [0,1]$ such that $\rho<1-2\lambda<1$. We have 
that $\rho(1+\frac{1}{2(d+1)^\rho})<1-2\lambda$ for $d$ large enough. Thus, for $d$ large enough, 
\[\exp\left( (d+1)\log (d+1) (1-\rho(1+\frac{1}{2(d+1)^\rho})) - (d+1)^{1-\rho}\right) \]
\[
\geq \exp\left( 2\lambda (d+1)\log (d+1) -(d+1)^{1-\rho} \right)\geq \exp\left(\lambda (d+1)\log (d+1)\right).\]

\end{appendix}

\end{document}